\documentclass[12pt]{article}
\newcommand{\blind}{0}

\usepackage{graphicx,epsf,subfigure,pstricks,psfrag,pst-node,amsthm,amssymb,amsmath,natbib}
\usepackage{multirow,natbib}
\usepackage{tabularx}
\usepackage{enumerate}
\usepackage{hyperref}
\usepackage{bm}
\usepackage{extarrows, framed}
\usepackage{accents}
\usepackage{enumitem}
\usepackage{pbox}



\newcommand{\argmin}{\arg\!\min}
\newtheorem{theorem}{Theorem}
\newtheorem{lemma}{Lemma}
\newtheorem{corollary}{Corollary}

\newtheorem{assump}{Assumption}

\addtolength{\oddsidemargin}{-.5in}%
\addtolength{\evensidemargin}{-.5in}%
\addtolength{\textwidth}{1in}%
\addtolength{\textheight}{1.3in}%
\addtolength{\topmargin}{-.8in}%

\date{\vspace{-5ex}}
\begin{document}

\def\spacingset#1{\renewcommand{\baselinestretch}%
{#1}\small\normalsize} \spacingset{1}


\if0\blind
{
  \title{\bf Variable selection   via  penalized credible regions  with Dirichlet-Laplace  global-local shrinkage priors}
  \author{Yan Zhang 
\hspace{.2cm}\\
    Department of Statistics, North Carolina State University\\
    and \\
    Howard D. Bondell  \\
    Department of Statistics, North Carolina State University}
  \maketitle
} \fi

\if1\blind
{
  \bigskip
  \bigskip
  \bigskip
  \begin{center}
    {\LARGE\bf Variable selection   via  penalized credible regions  with Dirichlet-Laplace  global-local shrinkage priors}
\end{center}
  \medskip
} \fi

\bigskip
\begin{abstract}
	The method of Bayesian variable selection via penalized credible regions  separates
	model fitting and variable selection. The idea is to  search for  the sparsest solution within the joint posterior credible regions. 
	Although the approach was successful, it depended on the use of conjugate normal priors. More recently, improvements in the use of global-local   shrinkage priors have been made for high-dimensional Bayesian variable selection. 
	In this paper,  we incorporate  global-local  priors into the credible region selection framework. 
	The  Dirichlet-Laplace (DL) prior    is adapted to linear regression. Posterior consistency for the normal and DL priors are shown, along with variable selection consistency. 
	We further introduce a new method to tune hyperparameters in prior distributions for linear regression. 
	We propose to choose the hyperparameters to minimize a discrepancy between the induced distribution on R-square and a prespecified target 
	distribution. Prior elicitation on R-square is more natural, particularly when there are a large number of predictor variables in which elicitation on that scale is not feasible. 
	For a normal prior, these hyperparameters are available in closed form to minimize the  Kullback-Leibler divergence between the distributions. 
\end{abstract}

\noindent%
{\it Keywords:}  	Variable selection,  Posterior credible region, Global-local shrinkage prior,   Dirichlet-Laplace,  Posterior consistency,   Hyperparameter  tuning. 
\vfill

\newpage
\spacingset{1.45} 

\section{Introduction}\label{sec_introduction_credible}
High dimensional data has become increasingly common in all fields.  Linear regression is a standard and intuitive way to model dependency in high dimensional data. Consider the linear regression model: 
\begin{equation} 
	\label{linear_regression}
	\bm{Y} = \bm{X}\bm{\beta} +\bm{\varepsilon}
\end{equation}
where $\bm{X}$ is the $n\times p$  high-dimensional set of covariates, $\bm{Y}$ is the $n$ scalar responses,  $\bm\beta=(\beta_{1},\cdots,\beta_p)$ is the $p$-dimensional coefficient vector, and $\bm\varepsilon$ is the error term assumed to have $\text{E}(\bm\varepsilon) = 0$ and $\text{Var}(\bm\varepsilon) = \sigma^2 I_n$. 
Ordinary least squares is not  feasible   when the number of predictors $p$ is  larger than the sample size  $n$.  Variable selection is necessary to reduce the large number of candidate predictors.   
The classical variable selection methods include subset selection,  criteria such as AIC  \citep{akaike1998information}  and BIC \citep{schwarz1978estimating},  and penalized methods such as 
the least absolute shrinkage and selection operator  \citep[Lasso;][]{tibshirani1996regression}, smoothly clipped absolute deviation   \citep[SCAD;][]{fan2001variable}, the elastic net  \citep{zou2005regularization}, adaptive Lasso \citep{zou2006adaptive}, the Dantzig selector \citep{candes2007dantzig}, and octagonal shrinkage and clustering algorithm for regression   \citep[OSCAR;][]{bondell2008simultaneous}.

In the Bayesian framework, approaches for  variable selection include: stochastic search variable selection (SSVS)  \citep{george1993variable},    Bayesian regularization \citep{park2008bayesian,li2010bayesian,polson2013bayesian,leng2014bayesian},  empirical Bayes variable selection \citep{george2000calibration}, spike and slab variable selection \citep{ishwaran2005spike},    and global-local (GL) shrinkage priors.  Those traditional Bayesian methods conduct variable selection either  relying  on the calculation of posterior inclusion probabilities for each predictor or each possible model, or a choice of posterior threshold. 

Typical global-local shrinkage priors  are represented as the class of global-local scale mixtures of normals  \citep{polson2010shrink}, 
\begin{equation}\label{equation-globalLocal}
	\beta_j \sim N(0, w\xi_j),  \ \xi_j\sim \pi(\xi_j), \ (w, \sigma^2) \sim\pi(w, \sigma^2),
\end{equation}
where $w$ controls the global shrinkage towards the origin, while $\xi_j$ allows local deviations of shrinkage. 
Various  options  of shrinkage priors for $\bm\beta$, include 
normal-gamma \citep{griffin2010inference},    Horseshoe prior \citep{carvalho2009handling,carvalho2010horseshoe}, generalized double Pareto prior \citep{armagan2013generalized}, 
Dirichlet-Laplace (DL) prior   \citep{bhattacharya2015dirichlet}, 
Horseshoe+ prior \citep{bhadra2015horseshoe+}, 
and others  that  can  be represented as (\ref{equation-globalLocal}). 
The  GL shrinkage priors  usually shrink small coefficients greatly due to a   tight peak at zero, and rarely shrink large coefficients due to the heavy tails. 
It has been shown that GL shrinkage priors have improved posterior concentrations   \citep{bhattacharya2015dirichlet}.  
However, the shrinkage prior itself would not lead to variable selection, and to go further, some rules need to be set on the posteriors.

\cite{bondell2012consistent} proposed a Bayesian  variable selection method only based on   posterior credible regions.   
However, the implementation and results of that paper depended on  the use of conjugate normal priors. 
Due to the improved concentration, incorporating the global-local shrinkage priors into this framework can perform better,  both in theory and practice.
We show that  the DL prior yields consistent posteriors  in this regression setting, along with selection consistency. 

Another difficulty in high dimensional data  is the choice of hyperparameters, 
which can highly affect the   results.   In this paper, we also propose an intuitive default method  to tune the  hyperparameters in the prior distributions. 
By minimizing  a discrepancy between the induced  distribution of $R^2$ from the prior and  the desired distribution (Beta distribution  by default), one gets a default choice of hyperparameter value.   For the choice of normal priors, the hyperparameter  that minimizes the Kullback-Leibler (KL) divergence between the distributions is shown to have a closed form solution. 

Overall, compared to other Bayesian methods,  on the one hand, our method makes use of the advantage of global-local shrinkage priors, which can  effectively shrink small coefficients and reliably estimate the coefficients of important variables simultaneously. On the other hand, by using the credible region variable selection approach, we can  easily transform the non-sparse posterior estimators to sparse solutions. Compared to the common frequentist method, our approach   provides flexibility to estimate the tuning parameter jointly with the regression coefficients, allows easy incorporation
of external information or hierarchical modeling into Bayesian regularization framework,  and leads to  straightforward computing through Gibbs sampling.

The remainder of the paper is organized as follows. 
Section  \ref{sec_background} reviews   the penalized credible region variable selection method. 
Section  \ref{sec_methods} details the proposed method which combines  shrinkage priors and penalized credible region variable selection. 
Section  \ref{sec_asymptotics} presents the posterior consistency under  the choice of shrinkage priors, as well as the 
asymptotic behavior of the selection consistency for diverging $p$.  
Section  \ref{section-tuning hyperparameters} discusses a default method to tune the hyperparameters in the prior distributions based on the induced prior distribution on $R^2$. 
Section  \ref{sec_simulations} reports the simulation results, and Section  \ref{sec_real data} gives the analysis of a real-time PCR dataset. 
All proofs are given in the Appendix. 

\section{Background} \label{sec_background}

\cite{bondell2012consistent} proposed a penalized regression method based on Bayesian credible regions.  First, the full model is fit using all predictors with a   continuous prior.  
Then based on  the posterior distribution, a sequence of joint credible regions  are constructed, within which, one searches for the sparsest solution. 
The choice of  a conjugate normal prior of 
\begin{eqnarray} \label{normal-prior}
	\bm\beta|\sigma^2,\gamma\sim N(0,\sigma^2/\gamma \bm I_p) 
\end{eqnarray}
is used,  where $\sigma^2$ is the  error variance term  as in (\ref{linear_regression}), and $\gamma$ is the ratio of prior  precision to error precision. The variance,   $\sigma^2$, is often given a diffuse inverse Gamma prior, while $\gamma$ is the hyperparameter which is either chosen to be fixed or given a Gamma hyperprior.

The credible region is  to find $\tilde{\bm\beta}$,  such that 
\begin{equation}
	\label{2}
	\tilde{\bm\beta} =\argmin_{\bm\beta} ||\bm\beta||_0 \text{\ subject \ to \ } \bm\beta\in \mathcal{C}_\alpha,
\end{equation}
where $||\bm\beta||_0$ is  the $L_0$ norm of  $\bm\beta$, i.e., the number of nonzero elements, and $\mathcal{C}_\alpha$ is the $(1-\alpha)\times100 \%$  posterior credible regions based on the particular prior distributions.  
The use of elliptical posterior credible regions yields 
the form  $\mathcal{C}_\alpha = \{{\bm\beta}:({\bm\beta}-\hat{{\bm\beta}})^T\bm\Sigma^{-1}({\bm\beta}-\hat{{\bm\beta}})\leq c_\alpha\}$, for some nonnegative $c_\alpha$, where  $\hat{{\bm\beta}}$ and ${\bm\Sigma}$ are the posterior mean and covariance respectively. 
Then by replacing the $L_0$ penalization in (\ref{2}) with a smooth homotopy between $L_0$ and $L_1$ proposed by  \cite{lv2009unified} and  linear approximation, the optimization problem in (\ref{2}) becomes 
\begin{equation} 
	\label{eq_penalized_optimization}
	\tilde{{\bm\beta}} =\argmin_{\bm\beta}({\bm\beta}-\hat{{\bm\beta}})^T\bm\Sigma^{-1}({\bm\beta}-\hat{{\bm\beta}}) + \lambda_\alpha\sum_{j=1}^p |\hat\beta_j|^{-2}|\beta_j|,
\end{equation}
where there exists a one-to-one correspondence between $c_\alpha$ and $\lambda_\alpha$. The sequence of solutions
to (\ref{eq_penalized_optimization}) can be   directly accomplished  by  plugging in   the  posterior mean and covariance and using the  LARS algorithm \citep{efron2004least}.

\section{Penalized Credible Regions with Global-Local Shrinkage Priors} \label{sec_methods}

\subsection{Motivation}
Global-local shrinkage priors produce a posterior distribution with good empirical and theoretical properties.  
Compared to the usual normal prior,  GL priors concentrate more  along the regions with  zero parameters. 
This leads to a  better estimate of the uncertainty about parameters in the full model based on the posterior distribution. 
The penalized credible region variable selection approach  separates  model fitting and variable selection.  
So it seems natural  to fit the model under a   GL shrinkage prior, and then conduct variable selection through the penalized credible region method.  The motivation is that within the same credible region level, GL shrinkage priors would lead to more concentrated posteriors, thus having better performance for variable selection, by finding sparse solutions more easily. 
\cite{bondell2012consistent} demonstrated via simulations and real data examples that the credible region approach using the normal prior distribution improved on the performance of both Bayesian Stochastic Search and Frequentist approaches, such as Lasso, Dantzig Selector, and SCAD. The use of the GL shrinkage priors instead of the normal is a natural approach. In addition, although we do not have uncertainty about the model, the full posterior is obtained first, so that uncertainty about the parameters can be used based on the full model posterior distribution. Using the global-local shrinkage prior gives a more concentrated posterior even if we did not add the penalized credible region model selection step to choose an estimate of the model.

Although  GL shrinkage priors  would not lead to elliptical posterior distributions, valid credible regions can still be constructed using elliptical contours.  These would no longer be the high  density regions, but would  remain valid regions. Elliptical contours would also be  reasonable approximations to the high density regions,  at least around the largest mode.
Thus, the penalized credible region selection method can  be feasibly performed by plugging the posterior mean and covariance matrix into the optimization algorithm (\ref{eq_penalized_optimization}).  
So given any GL  prior,  once MCMC steps produce the posterior samples, the sample mean, $\hat{\bm\beta}$, and sample covariance, $\bm\Sigma$, would hence be obtained, then variable selection can be performed through the penalized credible region method.  In this paper, we modify  the  Dirichlet-Laplace (DL)   prior 
to implement in the regression setting. We also consider the 
Laplace  prior,  also referred as Bayesian Lasso,   described in \cite{park2008bayesian} and \cite{hans2010model},  as
\begin{eqnarray} \label{equation_Laplace prior}
	\beta_j\sim \text{DE}(\sigma/\lambda)   \ (j=1\cdots,p), 
\end{eqnarray} 
where 
$\lambda$ is the Lasso parameter, controlling the global shrinkage.  

\subsection{Dirichlet-Laplace Priors}
For the normal mean model, \cite{bhattacharya2015dirichlet} proposed a new class of Dirichlet-Laplace (DL) shrinkage priors, possessing the optimal posterior concentration property. 
We  construct the generalization of the  DL priors for the   linear regression model. 
The proposed hierarchical DL prior  is as follows:  for $j=1,\cdots,p$, 
\begin{eqnarray} \label{eq_DL-DE-form}
	\beta_j|\sigma, \phi_j,\tau &\sim & \text{DE}(\sigma\phi_j \tau),   \nonumber \\  
	(\phi_1,\cdots,\phi_p)&\sim & \text{Dir}(a,\cdots,a),  \\
	\tau &\sim & \text{Ga}(pa,1/2). \nonumber 
\end{eqnarray}
where 
$\text{DE}(b)$ denotes a zero mean   Laplace kernel with density $ f(y) =  (2b)^{-1}\exp\{-|y|/b\}$ for $y\in\mathbb{R}$,  
$ \text{Dir}(a,\cdots,a)$ is the Dirichlet distribution with   concentration vector $(a,\cdots,a)$, and Ga$(pa, 1/2)$ denotes a  Gamma distribution with shape $pa$ and rate $1/2$.  
Here,   small values of $a$ would lead  most of $(\phi_1, \cdots, \phi_p)$ to be close to  zero and only few of them nonzero; while large values  allow less singularity at zero, thus controlling the sparsity of regression coefficients.  The $\phi_j$'s are the local scales, allowing deviations in the degree of shrinkage. 
As pointed out in \cite{bhattacharya2015dirichlet}, $\tau$ controls global shrinkage towards the origin and  to some extent determines  the tail behaviors of the marginal distribution of $\beta_j$'s. 
We also assume a common prior on  the variance term 
$ \sigma^2$, $\text{IG} (a_1,b_1)$,    the inverse Gamma distribution with shape $a_1$ and scale $b_1$.

\subsection{Computation of Posteriors}
For posterior computation,  the Gibbs sampling steps proposed in \cite{bhattacharya2015dirichlet} can be 
modified to accommodate  the linear regression model.  
The DL prior (\ref{eq_DL-DE-form}) can be equivalently denoted as 
\begin{eqnarray} \label{equation-dl-normal-form}
	\beta_j|\sigma^2, \phi_j,\psi_j, \tau   &\sim & N(0,\sigma^2\psi_j\phi_j^2\tau^2), \nonumber \\ 
	\psi_j &\sim & \text{Exp}  (1/2),    \\ 
	(\phi_1,\cdots,\phi_p) &\sim &\text{Dir}(a,\cdots,a), \nonumber \\  
	\tau  &\sim &\text{Ga}(pa,1/2),  \nonumber
\end{eqnarray}    
where Exp$(\cdot)$ is the usual exponential distribution. 
Note that DL prior is also a  global-local shrinkage prior  as it is  a particular  form of  (\ref{equation-globalLocal}).  Gibbs sampling steps would be obtained based on (\ref{equation-dl-normal-form}). 
Since $\pi(\psi,\phi,\tau |\beta, \sigma^2) = \pi(\psi| \phi,\tau,\beta, \sigma^2)\pi(\tau|\phi,\beta, \sigma^2) \pi(\phi|\beta, \sigma^2)$, and the joint posterior of $(\psi,\phi,\tau)$ is  independent of $y$ conditionally on  $\beta$ and $\sigma^2$, so the steps to draw posteriors steps  are as follows:
(i) $\sigma^2 | \beta, \psi, \phi, \tau, y$, 
(ii)	$\beta| \psi,\phi,\tau, \sigma^2,y$,
(iii) $\psi|\phi,\tau,\beta, \sigma^2$,
(iv) $\tau|\phi, \beta, \sigma^2$,
(v) $\phi|\beta, \sigma^2$. 
The derivation is similar as  in \cite{bhattacharya2015dirichlet}, hence omitted here.

The parameterization of the three-parameter generalized inverse Gaussian (giG) distribution, $Y\sim\text{giG}(\chi,\rho,\lambda_0)$,   means the density of $Y$ is  $f(y) \propto y^{\lambda_0-1}\exp  \{-0.5(\rho y + \chi/y)\}$ for $y>0$. Then the summary of the Gibbs sampling steps are as below:
\begin{enumerate}[label=(\roman*)]
	\item  Sample $\sigma^2 | \beta,\psi,\phi,\tau,y$. Draw $\sigma^2 $ from an inverse Gamma distribution, IG$(a_1+ (n+p)/2, b_1 +( \bm\beta^T \bm {S}^{-1}\bm\beta  +(\bm{Y}-\bm{X}\bm\beta)^T  (\bm{Y}-\bm{X}\bm\beta) )/2 )$, where ${\bm S}= \text{diag} ( \psi_1\phi_1^2\tau^2, \cdots,\psi_p\phi_p^2\tau^2)$.

	\item Sample $\beta| \psi,\phi,\tau, \sigma^2, y$.  Draw $\bm\beta$  from a $N(\bm\mu,\sigma^2 {\bm V})$, where 
	$ \bm{V} = ( \bm{X}^T\bm{X} +{\bm S}^{-1})^{-1} $ with the same ${\bm S}$ as above, and $ \bm\mu =\bm{V} \bm{X}^T\bm{Y} = (\bm{X}^T\bm{X}+{\bm S}^{-1})^{-1}(\bm{X}^T\bm{Y})$.

	\item Sample $\psi_j| \phi_j,\tau,\beta, \sigma^2$. First draw $\psi_j^{-1}|\phi_j,\tau, \beta, \sigma^2$, $j=1,\cdots,p$, independently from the distribution  InvGaussian$(\mu_j = \sigma\phi_j\tau/|\beta_j|, \lambda_0 = 1) $, where InvGaussian$(\mu, \lambda_0)$ denotes the inverse Gaussian with density $f(y) = \sqrt{\lambda_0/(2\pi y^3)} \exp\{ -\lambda_0(y-\mu)^2 /(2\mu^2 y) \}$ for $y>0$. Then take the  reciprocal to get the draws of $\psi_j$ ($j=1,\cdots,p$).

	\item Sample  $\tau| \phi,\beta, \sigma^2$. Draw $\tau$ from a giG$(\chi=2\sum_{j=1}^p |\beta_j|/(\phi_j\sigma),\rho=1,\lambda_0=pa-p)$.

	\item  Sample $\phi_j|\beta, \sigma^2$. Draw $T_1,\cdots,T_p$ independently with $T_j\sim \text{giG}(\chi=2|\beta_j|/\sigma,\rho=1,\lambda_0=a-1)$, then set $\phi_j = T_j/T$ where $T=\sum_{j=1}^pT_j$.
\end{enumerate}

\section{Asymptotic Theory} \label{sec_asymptotics}
In this section,  we   first study the posterior properties of the   normal    and  DL prior,  when both $n$ and $p_n$ go  to infinity,   and further  investigate the selection consistency of the penalized variable selection method. 
Assume the true regression parameter is $\bm\beta_n^0$, and the estimated regression parameter is $\bm\beta_n$.
Denote  the  true set of non-zero coefficients is $\mathcal{A}_n^0=\{j:\beta_{nj}^0\neq0, j=1,\cdots, p_n\}$,
and the 
estimated set of non-zero coefficients is $\mathcal{A}_n=\{j:\beta_{nj}\neq0, j=1,\cdots, p_n\}$. 
Also let $q_n = |\mathcal{A}_n^0|$ denote  the number of predictors with nonzero true coefficients. 
As $n\rightarrow\infty$, consider the sequence of credible sets of the form  $\{ \bm\beta_n: (\bm\beta_n-\hat{\bm\beta}_n)^T\bm\Sigma^{-1}_n(\bm\beta_n-\hat{\bm\beta}_n)\leq c_n \}$, where $\hat{\bm\beta}_n$ and $\bm\Sigma_n$ are the posterior mean and covariance matrix respectively, and $c_n$ is  a sequence of non-negative constants.
Let $\bm\Gamma_n$ denote the $p_n\times p_n$ matrix whose columns are eigenvectors of $\bm{X}_n^T\bm{X}_n/n$ ordered by decreasing eigenvalues,  i.e., $d_1\geq d_2\geq\cdots\geq d_{p_n}\geq 0$. Then $\bm{X}_n^T\bm{X}_n/n = \bm\Gamma_n\bm D_n\bm\Gamma_n^{T}$ where $\bm D_n = \text{diag}\{d_1,\cdots ,d_{p_n}\}$.  

Assume the following  regularity conditions throughout. 
\begin{enumerate} [label=(A\arabic*)]  
	\item  \label{a_error} The error terms $\varepsilon_i$, $i=1,\cdots,n$,   are  independent and identically distributed (i.i.d.)  with mean zero and finite variance $\sigma^2$; 
	\item  \label{a_eigenvalues} 
	$0<d_{\min}<\liminf_{n\rightarrow\infty} d_{p_n} \leq \limsup_{n\rightarrow\infty}d_1 < d_{\max} <\infty$, where $d_{\min}$ and $d_{\max}$ are fixed; 
	\item \label{a_beta0} $\limsup\limits_{n\rightarrow\infty}\max\limits_{j=1,\cdots,p_n} |\beta_{nj}^0| <\infty$; 
	\item  \label{a_p=o(n)} $p_n=o(n/\log n)$; 
	\item \label{a_min}    $\sum_{j\in \mathcal{A}_n^0} | \beta_{nj}^{0} |^{-1} \leq C_0 \sqrt{n/p_n}$ and 
	$\sum_{j\in \mathcal{A}_n^0} | \beta_{nj}^{0} |^{-2} \leq C_1  {n/(p_n\sqrt{\log n})}$,  
	for some $C_0, C_1 >0$. 
\end{enumerate}
Assumption \ref{a_eigenvalues}  regarding the eigenvalues  bounded away from $0$ and $\infty$ is a necessary condition for estimation consistency in the Bayesian methods, and also for the consistency of Ordinary Least Squares in the case of growing dimension but with $p_n=o(n)$. 
This is akin to the condition in the fixed dimension of $\bm X^T\bm X/n$ converging to a positive definite matrix  (Assumption (A2) in \cite{bondell2012consistent}).  The basic intuition is that without a lower bound on the eigenvalue, there is an asymptotic singularity, which then leaves a linear combination of the regression parameters that is not identifiable, i.e, it would have a variance that was infinite, hence could not be consistent. 
The upper bound,  on the other hand, ensures that there is a proper covariance matrix for every $p_n$. 
If we assume that each row of $\bm X_n$ was a random draw from a $p_n$-dimensional probability distribution, the bounded eigenvalue condition is an assumption on the true sequence of covariance matrices, as for large $n$  and $p_n = o(n)$,  the sample covariance, $\bm X_n^T \bm X_n / n$, (assuming centered variables) will converge to the true covariance. Typical covariance structures will have the bounded eigenvalue property. 

Also note that Assumption \ref{a_min} restricts the minimum signal size for the non-zero coefficients while also ensuring that there are not too many small signals. 

\subsection{Posterior Consistency: Normal and DL Priors}
\cite{armagan2013posterior}  investigates the asymptotic behavior of posterior distributions of regression coefficients in the linear regression model (\ref{linear_regression}) as $p$ grows with $n$.
They prove the posterior consistency under the assumption of  a variety of  priors,
including the Laplace prior, Student's $t$ prior, generalized double Pareto prior, and the Horseshoe-like priors.  By definition,   posterior consistency implies that  the posterior distribution  of $\bm\beta_n$ converges in probability to  $\bm\beta_n^0$, i.e., 
for any $\epsilon>0$, $P(\bm\beta_n:||\bm\beta_n-\bm\beta_n^0||>\epsilon |\bm{Y}_n)\rightarrow 0$   as $ p_n, n\rightarrow \infty$.  
In this section,  we show that  the normal   and   Dirichlet-Laplace prior     also yield consistent posteriors.  However, the DL prior  can yield consistent posteriors under weaker conditions on the signal. 

\begin{theorem} \label{theorem_normal_consistency}
	Under assumptions  \ref{a_error}-\ref{a_beta0},  and  $p_n=o(n)$, if  $q_n=o\{n^{1-\rho}/(p_n(\log n)^2) \}$ for $\rho \in (0,1)$,    and  $\sqrt{\sigma^2/\gamma_n} = C/(\sqrt{p_n}n^{\rho/2}\log n)$  for finite $C>0$, 
	the normal prior  (\ref{normal-prior})
	yields a   consistent posterior. 
\end{theorem}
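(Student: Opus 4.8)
The plan is to exploit conjugacy. Under the normal prior (\ref{normal-prior}) and a Gaussian working likelihood, the posterior of $\bm\beta_n$ given $\bm Y_n$ and $\sigma^2$ is exactly Gaussian, $N(\hat{\bm\beta}_n,\sigma^2\bm A_n^{-1})$, where $\bm A_n=\bm X_n^T\bm X_n+\gamma_n\bm I_{p_n}$ and $\hat{\bm\beta}_n=\bm A_n^{-1}\bm X_n^T\bm Y_n$ is the ridge estimator. Because everything is explicit, I would split the target event by the triangle inequality,
\[
\{\|\bm\beta_n-\bm\beta_n^0\|>\epsilon\}\subseteq\{\|\bm\beta_n-\hat{\bm\beta}_n\|>\epsilon/2\}\cup\{\|\hat{\bm\beta}_n-\bm\beta_n^0\|>\epsilon/2\},
\]
and handle the posterior spread around $\hat{\bm\beta}_n$ and the frequentist accuracy of $\hat{\bm\beta}_n$ separately. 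The diagonalization $\bm X_n^T\bm X_n/n=\bm\Gamma_n\bm D_n\bm\Gamma_n^T$ turns every quantity into a sum over the eigenvalues $d_j$, which assumption \ref{a_eigenvalues} pins between $d_{\min}$ and $d_{\max}$; in particular the smallest eigenvalue of $\bm A_n$ is at least $nd_{\min}$ eventually.

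For the spread, I would bound the conditional second moment
\[
\mathrm{E}\big[\|\bm\beta_n-\hat{\bm\beta}_n\|^2\mid\bm Y_n\big]=\sigma^2\,\mathrm{tr}(\bm A_n^{-1})=\sigma^2\sum_{j=1}^{p_n}\frac{1}{nd_j+\gamma_n}\le\frac{\sigma^2 p_n}{nd_{\min}},
\]
which tends to $0$ since $p_n=o(n)$; a conditional Markov inequality then gives $P(\|\bm\beta_n-\hat{\bm\beta}_n\|>\epsilon/2\mid\bm Y_n)\to0$. This step uses only $p_n=o(n)$ and \ref{a_eigenvalues}, and in fact improves as $\gamma_n$ grows.

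For the estimator, I would write $\bm Y_n=\bm X_n\bm\beta_n^0+\bm\varepsilon_n$ and decompose $\hat{\bm\beta}_n-\bm\beta_n^0=-\gamma_n\bm A_n^{-1}\bm\beta_n^0+\bm A_n^{-1}\bm X_n^T\bm\varepsilon_n$ into a deterministic shrinkage bias and a mean-zero stochastic term. The bias obeys $\gamma_n\|\bm A_n^{-1}\|\,\|\bm\beta_n^0\|\le \gamma_n(nd_{\min})^{-1}M\sqrt{q_n}$, using \ref{a_beta0} to get $\|\bm\beta_n^0\|\le M\sqrt{q_n}$; substituting the stated order $\gamma_n\asymp\sigma^2 p_n n^{\rho}(\log n)^2$ reduces this to a constant times $p_n(\log n)^2\sqrt{q_n}/n^{1-\rho}$, which vanishes exactly under the hypothesis $q_n=o\{n^{1-\rho}/(p_n(\log n)^2)\}$ because $\sqrt{q_n}\le q_n$. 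The stochastic term has second moment $\mathrm{E}\|\bm A_n^{-1}\bm X_n^T\bm\varepsilon_n\|^2=\sigma^2\,\mathrm{tr}(\bm A_n^{-2}\bm X_n^T\bm X_n)=\sigma^2\sum_{j}nd_j/(nd_j+\gamma_n)^2\le\sigma^2 p_n/(nd_{\min})\to0$, so a Markov bound under the true data law (needing only the finite-variance assumption \ref{a_error}) yields $\|\hat{\bm\beta}_n-\bm\beta_n^0\|\to0$ in probability. Restricting to the event where $\|\hat{\bm\beta}_n-\bm\beta_n^0\|\le\epsilon/2$, whose probability tends to one, and combining with the spread bound gives $P(\|\bm\beta_n-\bm\beta_n^0\|>\epsilon\mid\bm Y_n)\to0$ in probability, which is posterior consistency.

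The only genuinely delicate point is the bias--variance balance dictating the order of $\gamma_n$: the shrinkage bias is the single term that grows with $\gamma_n$, while neither the posterior spread nor the stochastic error needs $\gamma_n$ large, so the stated rate is essentially the largest $\gamma_n$ for which the bias still dies, and the $q_n$ condition is precisely what absorbs $\|\bm\beta_n^0\|\le M\sqrt{q_n}$ at that rate. I expect the main obstacle to be bookkeeping rather than a single hard estimate, namely keeping the lower bound $nd_j+\gamma_n\ge nd_{\min}$ valid uniformly in $j$ and $n$ through the $\liminf$ in \ref{a_eigenvalues}, and being careful that the posterior statements are asserted in probability under the true distribution of $\bm Y_n$.
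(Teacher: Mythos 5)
Your proof is correct, but it takes a genuinely different route from the paper's. The paper never uses the explicit Gaussian form of the posterior: it invokes Theorem 1 of Armagan et al.\ (2013), which reduces posterior consistency to a prior-mass condition $P\left(\|\bm\beta_n-\bm\beta_n^0\|<\Delta n^{-\rho/2}\right)>\exp(-dn)$, and verifies it by lower-bounding the prior probability by a product of $q_n$ one-dimensional normal probabilities over the active set times a Chebyshev bound on $\sum_{j\notin\mathcal{A}_n^0}\beta_{nj}^2$; after taking the negative logarithm, the dominating term is $q_n p_n n^{\rho}(\log n)^2\,(\sup_{j\in\mathcal{A}_n^0}|\beta_{nj}^0|)^2/C^2$, and requiring this to be $o(n)$ is exactly where both the $q_n$ condition and the stated rate for $\gamma_n$ enter. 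You instead exploit conjugacy: the posterior spread bound $\sigma^2\,\mathrm{tr}(\bm A_n^{-1})\le\sigma^2 p_n/(nd_{\min})$, plus a ridge bias--variance decomposition in which the $q_n$ hypothesis enters only through $\|\bm\beta_n^0\|\le M\sqrt{q_n}$ in the shrinkage bias; your two trace computations and the eventual lower bound $nd_j+\gamma_n\ge nd_{\min}$ from \ref{a_eigenvalues} are all correct, and the combination via the indicator of $\{\|\hat{\bm\beta}_n-\bm\beta_n^0\|>\epsilon/2\}$ is the right way to assemble the conditional statement. What each approach buys: yours is elementary and self-contained, needs only the finite-variance assumption \ref{a_error} for the frequentist Chebyshev bounds (the Armagan machinery is built around the Gaussian likelihood), and in fact establishes the theorem under the strictly weaker sufficient condition $\gamma_n\|\bm\beta_n^0\|=o(n)$, i.e.\ $\sqrt{q_n}=o\{n^{1-\rho}/(p_n(\log n)^2)\}$ --- consistent with the paper's own remark that its conditions are only sufficient. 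The paper's prior-mass route is less sharp here but is a template: the identical computation is reused for the Dirichlet--Laplace prior in Theorem 2, where no conjugate posterior exists and your argument cannot reach.
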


\begin{theorem} \label{theorem_DL_consistency}
	Under assumptions    \ref{a_error}-\ref{a_beta0}, and $p_n=o(n)$, if  $q_n = o( n/\log n)$, 
	and $a_n = C/( p_n n^{\rho}\log n)$ for any finite $\rho>0$ and finite $C>0$, 
	the Dirichlet-Laplace  prior  (\ref{eq_DL-DE-form}) yields a  consistent posterior.
\end{theorem}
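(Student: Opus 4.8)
The plan is to establish consistency through the standard likelihood-ratio decomposition of the posterior, thereby reducing the problem to a single prior-concentration estimate; equivalently, one verifies the sufficient conditions of \cite{armagan2013posterior} for the hierarchical DL prior. Writing the posterior mass of the bad set $B_\epsilon=\{\bm\beta_n:\|\bm\beta_n-\bm\beta_n^0\|>\epsilon\}$ as a ratio of integrals of $\exp\{\ell_n(\bm\beta_n)-\ell_n(\bm\beta_n^0)\}$ against the prior, I would bound numerator and denominator separately. For the numerator, completing the square in the Gaussian log-likelihood gives $\ell_n(\bm\beta_n)-\ell_n(\bm\beta_n^0)\le -\tfrac{1}{4\sigma^2}\|\bm X_n(\bm\beta_n-\bm\beta_n^0)\|^2+\sigma^{-2}\|\bm P_{\bm X_n}\bm\varepsilon\|^2$, where $\bm P_{\bm X_n}$ projects onto the column space of $\bm X_n$; Assumption \ref{a_eigenvalues} forces $\|\bm X_n(\bm\beta_n-\bm\beta_n^0)\|^2\ge n d_{\min}\epsilon^2$ on $B_\epsilon$, while $\|\bm P_{\bm X_n}\bm\varepsilon\|^2=O_p(p_n)=o(n)$, so the numerator is controlled by $e^{-cn\epsilon^2}$ with high probability. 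This step is essentially identical to the argument for the normal prior in Theorem \ref{theorem_normal_consistency} and is prior-independent. The entire difficulty is thus transferred to the denominator, namely a lower bound on the prior mass $\Pi_n(N_n)$ of a shrinking neighborhood $N_n$ of $\bm\beta_n^0$; on $N_n$ the same Gaussian computation keeps $\ell_n(\bm\beta_n)-\ell_n(\bm\beta_n^0)$ above $-o(n)$ once $N_n$ shrinks fast enough that $\|\bm X_n(\bm\beta_n-\bm\beta_n^0)\|^2$ and the cross term are $o(n)$.

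For the prior-concentration bound I would exploit the conditional structure of (\ref{eq_DL-DE-form}): given the scales $(\bm\phi,\tau)$, the coordinates $\beta_{nj}$ are independent Laplace $\text{DE}(\sigma\phi_j\tau)$, a case already understood from \cite{armagan2013posterior}. Taking $N_n$ to be a product of intervals, I would split the coordinates into the active set $\mathcal A_n^0$ and its complement. For inactive $j$, $\Pr(|\beta_{nj}|<\delta_n)=1-\exp\{-\delta_n/(\sigma\phi_j\tau)\}$ is bounded below provided the local scale $\phi_j\tau$ is small, which the Dirichlet concentration $a_n\to0$ ensures; for active $j$, $\Pr(|\beta_{nj}-\beta_{nj}^0|<\delta_n)\gtrsim (\delta_n/\sigma\phi_j\tau)\exp\{-|\beta_{nj}^0|/(\sigma\phi_j\tau)\}$ requires $\phi_j\tau$ not too small, so that the Laplace tail still reaches the nonzero truth $\beta_{nj}^0$ (bounded by Assumption \ref{a_beta0}). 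I would then restrict the integration over the hyperparameters to a favorable event on which the $q_n$ active scales $\phi_j$ are bounded below and $\tau$ is moderate, and lower bound the corresponding Dirichlet--Gamma mass using the marginals $\phi_j\sim\text{Beta}(a_n,(p_n-1)a_n)$ and $\tau\sim\text{Ga}(p_na_n,1/2)$.

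Collecting the pieces yields $-\log\Pi_n(N_n)\le q_n\,C_{\text{act}}+(p_n-q_n)\,C_{\text{inact}}+C_{\text{hyper}}$, and the remaining task is to check that each term is $o(n)$ under $q_n=o(n/\log n)$ and $a_n=C/(p_n n^{\rho}\log n)$, so that the denominator decays strictly more slowly than the numerator and $\Pi_n(B_\epsilon\mid\bm Y_n)\to0$ in probability; the inverse-Gamma prior on $\sigma^2$ is handled by restricting to a compact high-probability set of $\sigma^2$ values, as in the normal case. I expect the main obstacle to be the active-coordinate term: because small $a_n$ makes the DL prior concentrate aggressively at the origin, retaining enough mass at the $q_n$ nonzero signals forces a delicate balance between the Dirichlet concentration $a_n$, the global scale $\tau$, and the Laplace conditional. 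It is precisely the heavy tails of the DL marginal (noted in \cite{bhattacharya2015dirichlet}) that let this balance succeed under weaker signal assumptions than the normal prior requires, explaining why Assumption \ref{a_min} is not needed in this theorem.
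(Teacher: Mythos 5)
Your high-level reduction is the same as the paper's: both routes invoke the sufficient condition of Theorem 1 in \cite{armagan2013posterior}, so everything hinges on showing the prior mass of the shrinking ball $\{\bm\beta_n:\|\bm\beta_n-\bm\beta_n^0\|<\Delta/n^{\rho/2}\}$ exceeds $\exp(-dn)$; your numerator/denominator likelihood-ratio discussion merely re-derives what that cited theorem already supplies. Where you diverge is in how you bound the prior mass, and that is exactly where your proposal has a genuine gap. You condition on $(\bm\phi,\tau)$, restrict to a ``favorable event'' on which the $q_n$ active scales $\phi_j$ are bounded below and $\tau$ is moderate, and then assert a lower bound on the Dirichlet--Gamma mass of that event using the marginals $\phi_j\sim\mathrm{Beta}(a_n,(p_n-1)a_n)$ --- but you never carry this bound out, and you yourself flag the active-coordinate balance as the main obstacle. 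It is the crux: with $a_n=C/(p_nn^{\rho}\log n)$ the $\mathrm{Dir}(a_n,\dots,a_n)$ law concentrates near the vertices of the simplex, and the constraint $\sum_j\phi_j=1$ couples the active scales, so ``each active $\phi_j$ bounded below'' can only mean $\phi_j\gtrsim 1/q_n$, and one-dimensional Beta marginals do not control the joint event. The route is salvageable (the joint mass is roughly $a_n^{q_n}$ up to combinatorial factors, costing $q_n\log(1/a_n)=O(q_n\log n)=o(n)$, the same order as the paper's dominating term $q_n\log(p_n^{3/2}n^{3\rho/2}\log n)$), but as written the proposal defers proof at precisely the step that needs one.

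The paper avoids the coupling entirely through an exact distributional identity your plan misses: under (\ref{eq_DL-DE-form}) the products $\phi_j\tau$ are i.i.d.\ $\mathrm{Ga}(a,1/2)$ (the Gamma--Dirichlet factorization noted in Section 3 of \cite{bhattacharya2015dirichlet}), so the $\beta_j$ are independent a priori with explicit marginal density $f_d(\beta_j)=\{2^{(1+a)/2}\Gamma(a)\}^{-1}|\beta_j|^{(a-1)/2}K_{1-a}(\sqrt{2|\beta_j|})$. The paper then lower bounds $f_d$ pointwise via $K_{1-a}(z)\geq K_{1/2}(z)=\sqrt{\pi/(2z)}\,e^{-z}$ and $\Gamma(a)\leq a^{-1}$ for the $q_n$ active coordinates, and handles the inactive ones by Markov's inequality using the exact second moment $E(\beta_j^2)=8a(a+1)$, after which the argument is a one-dimensional product computation parallel to Theorem \ref{theorem_normal_consistency}. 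Incidentally, your closing interpretation is incorrect: Assumption \ref{a_min} is not used in the normal-prior consistency theorem either --- both Theorem \ref{theorem_normal_consistency} and Theorem \ref{theorem_DL_consistency} assume only \ref{a_error}--\ref{a_beta0} and $p_n=o(n)$. The DL prior's advantage established in the paper is the weaker sparsity requirement $q_n=o(n/\log n)$ versus $q_n=o(n^{1-\rho}/(p_n(\log n)^2))$, together with the hyperparameter scaling, not a weaker signal-size condition.
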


Note that the   difference in the above two theorems  is the    number of nonzero components, i.e.,  $q_n$. As $ n /\log n  >n^{1-\rho}/(p_n(\log n)^2) $, 
the Dirichlet-Laplace prior leads to    posterior consistency  in a much broader domain, compared to the  normal prior as well as compared to  the Laplace prior who also yields consistent posteriors as shown in Theorem 2 in \cite{armagan2013posterior}.   
This   strengthens  the justification for  replacing the normal prior with the DL prior theoretically. 
However, note that the Theorems only give a sufficient condition for posterior consistency under each of the priors. The sufficient condition does have a broader domain for $q_n$ in Theorem 2, for the Dirichlet-Laplace prior, than in Theorem 1 for the normal prior. However, it is not clear that these conditions are also necessary, so although we are able to prove the consistency for the Dirichlet-Laplace prior under a more general condition than the normal prior, there may be room to improve this condition in either or both of these cases.

\subsection{Selection Consistency of Penalized Credible Regions}

\cite{bondell2012consistent} has shown that when  $p$ is fixed and $\bm\beta$ is given the  normal prior in (\ref{normal-prior}), the penalized credible region method is consistent in variable selection.   In this paper, we   show that the   consistency of the posterior distribution under  a global-local shrinkage prior also yields consistency in variable selection under the case of $p_n\rightarrow\infty$.

\begin{theorem}\label{theorem_selection consistency}
	Under assumptions \ref{a_error} - \ref{a_min},  given the normal prior in (\ref{normal-prior}),  if  $ c_n /(p_n\log n)\rightarrow  c$, where $c\in(0,\infty)$ is some constant value, 
	and  the prior precision,   $\gamma_n=o(n)$, then the  penalized credible region method is consistent in variable selection, i.e. $P(\mathcal{A}_n = \mathcal{A}_n^0)\rightarrow 1$. 
\end{theorem}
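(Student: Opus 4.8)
The plan is to exploit the conjugacy of the normal prior and then treat the penalized credible-region program (\ref{eq_penalized_optimization}) as an adaptive-Lasso problem whose oracle behaviour can be certified through its Karush--Kuhn--Tucker (KKT) conditions, following the fixed-$p$ argument of \cite{bondell2012consistent} but tracking the dependence on $p_n$, $q_n$ and $n$. Under (\ref{normal-prior}) the posterior mean is the ridge estimator $\hat{\bm\beta}_n=(\bm X_n^T\bm X_n+\gamma_n\bm I_{p_n})^{-1}\bm X_n^T\bm Y_n$ and the posterior precision is $\bm\Sigma_n^{-1}=\sigma^{-2}(\bm X_n^T\bm X_n+\gamma_n\bm I_{p_n})$, where $\sigma^{2}$ enters only as an overall scale and can be replaced by a consistent estimate absorbed into the penalty constant. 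Thus (\ref{eq_penalized_optimization}) is a weighted-$\ell_1$ penalized quadratic with data-dependent weights $|\hat\beta_{nj}|^{-2}$, and I would reduce $\{\mathcal A_n=\mathcal A_n^0\}$ to two events, no false negatives ($\mathcal A_n^0\subseteq\mathcal A_n$) and no false positives ($\mathcal A_n\subseteq\mathcal A_n^0$), each to be shown to have probability tending to one.

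First I would establish coordinatewise control of the posterior mean. Since $\gamma_n=o(n)$ the ridge perturbation is asymptotically negligible against $\bm X_n^T\bm X_n$, whose eigenvalues are of order $n$ by \ref{a_eigenvalues}; together with \ref{a_error} and \ref{a_beta0} this yields a bound on $\|\hat{\bm\beta}_n-\bm\beta_n^0\|$ and, more importantly, two facts about the adaptive weights: (i) $\min_{j\in\mathcal A_n^0}|\hat\beta_{nj}|$ is bounded below of order $\sqrt{p_n/n}$, because \ref{a_min} forces $\min_{j\in\mathcal A_n^0}|\beta_{nj}^0|\gtrsim\sqrt{p_n/n}$, so the weights on the true support cannot blow up; and (ii) $\max_{j\notin\mathcal A_n^0}|\hat\beta_{nj}|\to0$ fast enough that $|\hat\beta_{nj}|^{-2}$ diverges on the complement, producing a very strong penalty exactly where the coefficients should vanish. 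A maximal inequality over the $p_n-q_n$ null coordinates is what introduces the $\log n$ factor and, via \ref{a_p=o(n)}, keeps this maximum controlled.

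Next I would convert the credible-level condition $c_n/(p_n\log n)\to c$ into an order statement on the multiplier $\lambda_n$ of (\ref{eq_penalized_optimization}), using the one-to-one correspondence between $c_\alpha$ and $\lambda_\alpha$: at the optimum the quadratic form equals $c_n$, so matching its order --- built from the $O(p_n)$ zeroed null coordinates (each seen at precision of order $n$) together with the penalty-induced shift on the support --- against $c_n\asymp p_n\log n$ pins down $\lambda_n$. With $\lambda_n$ so calibrated I would verify the KKT system at the oracle-restricted minimizer supported on $\mathcal A_n^0$. No false negatives follows because the shrinkage of each on-support coordinate, of order $\lambda_n|\hat\beta_{nj}|^{-2}/n$, is dominated by the signal floor in (i), so that no true coefficient is driven to zero; the two summability bounds in \ref{a_min} are what let me control the aggregate shrinkage. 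No false positives reduces to the subgradient inequality $|2[\bm\Sigma_n^{-1}(\tilde{\bm\beta}_n-\hat{\bm\beta}_n)]_j|\le\lambda_n|\hat\beta_{nj}|^{-2}$ for every $j\notin\mathcal A_n^0$, whose right-hand side is large by (ii).

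The main obstacle I anticipate is this last step, the uniform control of the off-support subgradient. Because $\bm\Sigma_n^{-1}$ is not block diagonal, the gradient at a null coordinate $j$ couples to the on-support residual $\tilde{\bm\beta}_{n,\mathcal A_n^0}-\hat{\bm\beta}_{n,\mathcal A_n^0}$ through the cross block of $\bm\Sigma_n^{-1}$, and bounding these interactions uniformly over the growing number of null coordinates demands an irrepresentable-type estimate that I would derive directly from the bounded-eigenvalue assumption \ref{a_eigenvalues} and the summability bounds \ref{a_min}, rather than assume. Making the union bound over $p_n-q_n$ coordinates tight --- the point at which \ref{a_p=o(n)} and the $\log n$ in the calibration of $c_n$ are both consumed --- while simultaneously keeping the on-support shrinkage below the signal floor of (i) is the delicate balance on which the whole argument rests.
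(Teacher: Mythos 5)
Your skeleton agrees with the paper's: the split of $\{\mathcal{A}_n=\mathcal{A}_n^0\}$ into no false negatives and no false positives, the weight dichotomy driven by ridge consistency, and the conversion of $c_n\asymp p_n\log n$ into a scale for the solution are exactly what the paper's Lemmas \ref{lemma1}--\ref{lemma3} deliver (in aggregate $\ell_2$ form). But the step you yourself flag as the crux is a genuine gap, and it does not close as sketched. First, no irrepresentable-type estimate follows from \ref{a_eigenvalues} plus \ref{a_min}: bounded eigenvalues are fully compatible with failure of irrepresentability, so ``derive rather than assume'' is not available. Second, the natural certification bounds leave a logarithmic mismatch in your $\lambda_n$ calibration: on the boundary, $\|\tilde{\bm\beta}_n-\hat{\bm\beta}_n\|^2\asymp c_n/n\asymp p_n\log n/n$ and rows of $\bm\Sigma_n^{-1}$ have norm of order $n$, so the off-support gradient is only bounded by $O(\sqrt{np_n\log n})$; meanwhile keeping the on-support shrinkage, of order $\lambda_n|\hat\beta_{nj}|^{-2}/n$, below the signal floor $\sqrt{p_n/n}$ at the weakest signals (where $|\hat\beta_{nj}|^{-2}\asymp n/p_n$) forces $\lambda_n=o(p_n^{3/2}n^{-1/2})$, while dual feasibility at the nulls demands $\lambda_n\cdot n/p_n\gtrsim\sqrt{np_n\log n}$, i.e.\ $\lambda_n\gtrsim p_n^{3/2}n^{-1/2}\sqrt{\log n}$ --- incompatible by a $\sqrt{\log n}$ factor that your plan has no mechanism to absorb. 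The paper never calibrates $\lambda_n$ at all: it keeps the constrained form, expresses a strong-signal coordinate as an implicit function of the candidate null coordinate along the ellipsoid, and in the stationarity identity (\ref{eq11}) the derivative (\ref{eq12}) is a ratio in which the common normalization $\sqrt{n/(p_n\log n)}$ cancels between numerator and denominator, hence is $O(1)$ by Lemmas \ref{lemma3} and \ref{lemma4}; the contradiction then comes from the single diverging null weight $|\hat\beta_{n1}|^{-2}\to\infty$ against this $O(1)$ term, with no uniform dual-feasibility bound and no irrepresentability needed.

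Two further ingredients of your plan are unavailable or unnecessary under the stated assumptions. The maximal inequality over the $p_n-q_n$ null coordinates, which you invoke to make $\max_{j\notin\mathcal{A}_n^0}|\hat\beta_{nj}|$ small at a rate carrying a $\log$ factor, requires tail conditions beyond \ref{a_error} (i.i.d.\ with finite variance only); the paper uses only the aggregate bound $\frac{n}{p_n}\|\hat{\bm\beta}_n-\bm\beta_n^0\|^2=O(1)$ of Lemma \ref{lemma2-yan}, which already yields the uniform per-coordinate rate $O(\sqrt{p_n/n})$ since each squared coordinate is dominated by the sum. Your signal-floor claim (i) is also not secure as stated: \ref{a_min} only guarantees $\min_{j\in\mathcal{A}_n^0}|\beta^0_{nj}|\gtrsim\sqrt{p_n/n}$, which is the \emph{same} order as the per-coordinate estimation error, so a lower bound on $\min_{j\in\mathcal{A}_n^0}|\hat\beta_{nj}|$ at that order depends on constants you have not controlled. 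The paper sidesteps per-coordinate floors entirely: in the proof of Lemma \ref{lemma4} it bounds the weighted sums $\sum_j w_j$ and $\sum_j w_j^\ast$ using the two summability conditions in \ref{a_min}, and it handles false negatives geometrically --- the region contains $\bm\beta_n^0$ when $c_n/p_n\to\infty$ (Lemma \ref{lemma1}) and shrinks around the consistent $\hat{\bm\beta}_n$ when $c_n/n\to 0$, so it is bounded away from zero in every signal direction --- again without any $\lambda_n$. To repair your route you would need either stronger tail assumptions plus a genuinely finer bound on the cross-block gradient than the operator-norm bound, or to adopt the paper's ratio-cancellation device; as written, the proposal's central step fails.
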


The proof is  given in  the Appendix.   
The selection consistency  allows us to expect that 
the true model is contained in the credible regions with high probability, when the number of predictors  increases together with  the sample size.    Such selection consistency is obtained under the normal prior.  However,  as reviewed in Section  \ref{sec_introduction_credible},   since the GL shrinkage priors can be expressed as a scale  mixture of normals,    as long as the 
posterior   distribution of the precision is $o(n)$  with probability 1 (analogous to $\gamma_n = o(n)$ in the normal prior), then the result can be directly applied to the GL shrinkage prior.   
\begin{theorem}\label{theorem_gl_selection }
	Under assumptions \ref{a_error} - \ref{a_min},   given any global-local shrinkage prior represented as (\ref{equation-globalLocal}), when  the conditions of posterior consistency are satisfied,   then the  posterior   distribution of the precision is $o(n)$  with probability 1 as $n\rightarrow\infty$.
	Furthermore, if  $ c_n /(p_n\log n)\rightarrow  c$, where $c\in(0,\infty)$ is some constant value,  then the  penalized credible region method with the particular shrinkage prior is consistent in variable selection, i.e. $P(\mathcal{A}_n = \mathcal{A}_n^0)\rightarrow 1$. 	
\end{theorem}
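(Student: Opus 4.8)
The plan is to exploit the scale-mixture-of-normals structure in (\ref{equation-globalLocal}): conditionally on the scales $(w,\xi_1,\dots,\xi_{p_n},\sigma^2)$, the prior on $\bm\beta_n$ is Gaussian, $\bm\beta_n\mid w,\bm\xi,\sigma^2\sim N(\bm 0,\sigma^2\bm\Lambda_n^{-1})$, with diagonal \emph{effective precision} $\bm\Lambda_n=\mathrm{diag}(\sigma^2/(w\xi_1),\dots,\sigma^2/(w\xi_{p_n}))$. This is exactly the conditional analogue of the conjugate normal prior (\ref{normal-prior}), with each $\sigma^2/(w\xi_j)$ playing the role of the scalar $\gamma_n$. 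The proof then separates into the two assertions of the statement: first that the posterior of the (random) precision is $o(n)$ with probability one, and second that, given this, the selection-consistency conclusion of Theorem \ref{theorem_selection consistency} transfers from the conditional Gaussian model to the marginal GL posterior essentially verbatim.

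For the first assertion I would work from the posterior-consistency hypothesis, $P(\|\bm\beta_n-\bm\beta_n^0\|>\epsilon\mid\bm Y_n)\to 0$, together with Assumption \ref{a_eigenvalues}, which forces every eigenvalue of $\bm X_n^T\bm X_n$ to be of exact order $n$. Conditionally on the scales, the $j$-th coordinate of the posterior mean $(\bm X_n^T\bm X_n+\bm\Lambda_n)^{-1}\bm X_n^T\bm Y_n$ is, up to off-diagonal terms, the least-squares coordinate multiplied by $(\bm X_n^T\bm X_n)_{jj}/\{(\bm X_n^T\bm X_n)_{jj}+\sigma^2/(w\xi_j)\}$. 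If the posterior placed non-negligible mass on scale configurations with $\sigma^2/(w\xi_j)\ge\delta n$ at some active index $j\in\mathcal A_n^0$, that coordinate would be shrunk toward zero by a non-vanishing factor, contradicting concentration at the fixed nonzero value $\beta_{nj}^0$ guaranteed by Assumption \ref{a_min}. Hence the effective precision on each active coordinate — the direct analogue of the scalar $\gamma_n$ — is $o(n)$ with posterior probability tending to one, almost surely under the true model; indeed the local self-adaptivity $w\xi_j\gtrsim(\beta_{nj}^0)^2$ yields the stronger $O(1)$ there. The noise coordinates may retain a far larger precision, but this plays no adverse role below.

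With the precision controlled I would transfer Theorem \ref{theorem_selection consistency}. The selection rule (\ref{eq_penalized_optimization}) depends on the posterior only through $\hat{\bm\beta}_n$ and $\bm\Sigma_n$, so it suffices to show these inherit the asymptotics used in that proof. On the probability-one event that the active precisions are $o(n)$, the signal coordinates see the same $\gamma_n=o(n)$ regime as in Theorem \ref{theorem_selection consistency}: the conditional posterior mean is consistent and the conditional covariance has active-direction eigenvalues of order $\sigma^2/n$. Integrating over the posterior of the scales, which by the first part concentrates on this event, transfers these bounds to the marginal $\hat{\bm\beta}_n$ and $\bm\Sigma_n$. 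The noise coordinates carry at least as much prior shrinkage as in the normal case, which can only sharpen their separation from the signals and hence only help the $L_0$-search inside the credible region. Since the calibration $c_n/(p_n\log n)\to c$ is identical to that of Theorem \ref{theorem_selection consistency}, its argument applies with these $\hat{\bm\beta}_n,\bm\Sigma_n$ and delivers $P(\mathcal A_n=\mathcal A_n^0)\to 1$.

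I expect the main obstacle to be the first assertion: converting the \emph{marginal} posterior-consistency statement for $\bm\beta_n$ into control of the \emph{random} active precisions $\sigma^2/(w\xi_j)$, and upgrading ``posterior probability $\to 0$'' to the almost-sure $o(n)$ claim, all without an explicit posterior for the scales, since the prior on $(w,\bm\xi)$ is left generic in (\ref{equation-globalLocal}). The delicate point is ruling out scale configurations that reproduce a consistent $\bm\beta_n$ while still carrying an active precision of order $n$; this must be forced through the data via the conditional posterior mean, although for a named prior such as the DL prior one can instead read it directly off the conditional updates listed in Section \ref{sec_methods}. A secondary, more routine step is verifying that marginalising over the heterogeneous scales does not inflate $\bm\Sigma_n$ beyond the order $\sigma^2/n$ needed to reuse the credible-region geometry of Theorem \ref{theorem_selection consistency}, and that permitting divergent precision on the noise coordinates remains compatible with that construction.
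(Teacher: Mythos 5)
Your conditional scale-mixture framing is the right starting point, but your proof of the first assertion establishes strictly less than the theorem claims, and the shortfall propagates into the transfer step. The theorem asserts that the posterior distribution of the precision $\delta_j = 1/(w\xi_j)$ is $o(n)$ for \emph{every} coordinate $j=1,\dots,p_n$; you argue this only for $j\in\mathcal{A}_n^0$, and explicitly allow the noise coordinates to carry precision of order $n$ or larger, asserting without proof that this ``plays no adverse role'' and ``can only help.'' That assertion is exactly where the content lies: the transfer of Theorem \ref{theorem_selection consistency} runs through Lemmas \ref{lemma1}--\ref{lemma4}, whose computations --- the bias term ${\bm m}_n$, the covariance $\bm V_n$, the $\chi^2_{p_n}/p_n\to 1$ limit, and the boundary geometry of the credible region --- all use the scalar $\gamma_n=o(n)$ to control $\bm\Sigma_n$ in all $p_n$ directions simultaneously. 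With divergent precision on the $p_n-q_n$ noise directions the region degenerates in those directions, and whether the truth remains inside (Lemma \ref{lemma1}) then depends on the rate at which the conditional posterior mean collapses to zero there, summed over $p_n-q_n$ correlated coordinates; this is a computation, not an obvious monotonicity, and you have not done it. The paper instead proves the full claim directly: it writes $\hat{\bm\beta}_n^{\mathrm{GL}} = \int (\bm{X}_n^T\bm{X}_n+\Delta_n)^{-1}\bm{X}_n^T\bm{Y}_n\,\pi(\bm\delta_n|\bm{Y}_n)\,d\bm\delta_n$, notes that both $\hat{\bm\beta}_n^{\mathrm{GL}}$ and the OLS estimator are consistent so their difference vanishes, and converts the resulting integral identity, via positive definiteness arguments (Weyl's inequalities together with \ref{a_eigenvalues}), into $P(\delta_j/n\geq\epsilon\mid\bm{Y}_n)\to 0$ for every $j$, signal and noise alike --- which is what lets Theorem \ref{theorem_selection consistency} apply verbatim.

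Your mechanism for the active coordinates also has a technical hole you partially acknowledge: the shrinkage-factor formula $(\bm{X}_n^T\bm{X}_n)_{jj}/\{(\bm{X}_n^T\bm{X}_n)_{jj}+\sigma^2/(w\xi_j)\}$ holds only ``up to off-diagonal terms,'' and under Assumption \ref{a_eigenvalues} (bounded eigenvalues, otherwise arbitrary correlation) those terms are not negligible: a large precision on coordinate $j$ perturbs all coordinates of the conditional mean $(\bm{X}_n^T\bm{X}_n+\Delta_n)^{-1}\bm{X}_n^T\bm{Y}_n$, so the contradiction with posterior concentration at $\beta_{nj}^0$ does not follow coordinatewise. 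The paper's OLS-comparison argument sidesteps coordinatewise approximations entirely by working with full matrix inverses. Your side claim of the ``stronger $O(1)$'' active precision via $w\xi_j\gtrsim(\beta_{nj}^0)^2$ is likewise unsupported (and unneeded). None of this makes the overall strategy hopeless, but as written the proposal proves a weaker first assertion than the one stated and then leans on that weaker version in the transfer step without the lemma-level verification that would justify it.
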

So given  the conditions of posterior consistency under  the global-local shrinkage prior, we automatically get   the selection consistency of the credible region method. For example, for the DL prior  in (\ref{eq_DL-DE-form}),   we have the following result. 
\begin{corollary}
	Under assumptions \ref{a_error} - \ref{a_min},      given the DL prior in (\ref{eq_DL-DE-form}), if  $q_n = o( n/\log n)$, 
	$a_n = C/( p_n n^{\rho}\log n)$ for any finite $\rho>0$ and finite $C>0$,  
	if  $ c_n /(p_n\log n)\rightarrow  c$, where $c\in(0,\infty)$ is some constant value,   then the  penalized credible region method is consistent in variable selection, i.e. $P(\mathcal{A}_n = \mathcal{A}_n^0)\rightarrow 1$. 
\end{corollary}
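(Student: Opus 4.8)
The plan is to obtain the corollary as an immediate consequence of Theorem~\ref{theorem_DL_consistency} and Theorem~\ref{theorem_gl_selection }, so that essentially no new analysis is required beyond verifying that the hypotheses of those two results are satisfied by the DL prior under the stated conditions. In other words, the corollary is the DL specialization of the general global--local selection-consistency result, and the proof amounts to a bookkeeping check that the two earlier theorems chain together.

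First I would recall from the representation in (\ref{equation-dl-normal-form}) that the Dirichlet--Laplace prior is itself a global--local scale mixture of normals, i.e.\ it falls into the class (\ref{equation-globalLocal}) with global scale $\tau$ and local scales governed by $\psi_j\phi_j^2$. This is the only structural fact needed to bring the DL prior within the scope of Theorem~\ref{theorem_gl_selection }, which is stated for an arbitrary prior of the form (\ref{equation-globalLocal}).

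Next I would verify the posterior-consistency hypothesis. Assumptions~\ref{a_error}--\ref{a_beta0} are in force by hypothesis, and since \ref{a_p=o(n)} gives $p_n=o(n/\log n)$, in particular $p_n=o(n)$. The corollary assumes precisely $q_n=o(n/\log n)$ and $a_n=C/(p_n n^{\rho}\log n)$ for finite $\rho>0$ and $C>0$, which are exactly the hypotheses of Theorem~\ref{theorem_DL_consistency}. Hence the DL prior yields a consistent posterior for $\bm\beta_n$, so the ``conditions of posterior consistency'' invoked in Theorem~\ref{theorem_gl_selection } hold.

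Finally, with posterior consistency established and the DL prior confirmed to be of global--local form, Theorem~\ref{theorem_gl_selection } applies directly: its first conclusion yields that the posterior distribution of the precision is $o(n)$ with probability one, and its second conclusion, under the additional hypothesis $c_n/(p_n\log n)\to c$ with $c\in(0,\infty)$ that is also assumed in the corollary, gives $P(\mathcal{A}_n=\mathcal{A}_n^0)\to 1$. The only point requiring any care --- and the closest thing to an obstacle --- is the matching of hypotheses: the selection step in Theorem~\ref{theorem_gl_selection } additionally requires the dimension condition \ref{a_p=o(n)} and the signal-strength condition \ref{a_min}, which are \emph{not} needed for DL posterior consistency in Theorem~\ref{theorem_DL_consistency} but are assumed in the corollary. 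Confirming that \ref{a_error}--\ref{a_min} are all simultaneously in force, together with the matching rate conditions, completes the argument.
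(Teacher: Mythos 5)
Your proposal is correct and matches the paper's own treatment exactly: the paper presents this corollary as an immediate consequence of chaining Theorem \ref{theorem_DL_consistency} (DL posterior consistency under $q_n=o(n/\log n)$ and $a_n=C/(p_n n^{\rho}\log n)$) with Theorem \ref{theorem_gl_selection } (selection consistency for any global--local prior, using the normal scale-mixture representation in (\ref{equation-dl-normal-form})), offering no separate proof. Your bookkeeping check that assumptions \ref{a_error}--\ref{a_min} and the rate condition $c_n/(p_n\log n)\to c$ are simultaneously in force is precisely the verification the paper leaves implicit.
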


Note  that the variable selection consistency is derived based on the posterior consistency. However, Assumption \ref{a_beta0} is not necessary to ensure the variable selection consistency.   If \ref{a_beta0} is not satisfied, i.e., 
$\bm\beta_n^0$ is truly unbounded, although it would not be possible to obtain a consistent estimator, or posterior, the credible region would become bounded away from zero in that direction, and hence  will pick out that direction consistently as well.

\section{Tuning  Hyperparameters} \label{section-tuning hyperparameters}
The value   of  hyperparameters in the prior distribution  plays an important role in the posteriors. 
For example, in the normal prior (\ref{normal-prior}),  $\gamma$ is the  hyperparameter, whose value controls the degree of shrinkage.
This is often chosen to be fixed at a ``large'' value  or given a hyperprior. However, the choice of the  ``large'' value affects the results, as does the choice of hyperprior such as a gamma prior, particularly in the high dimensional case. 
Also,  in the   DL prior (\ref{eq_DL-DE-form}), the choice of   $a$ is critical.  If $a$ is too small, then the DL prior would shrink  each dimension of  $\bm\beta$ towards zero; while, if $a$ is too large, there would be no strong concentration around the origin. 
Instead of fixing $a$, a discrete uniform prior can be given on $a$ supported on some interval (for example, $ [ 1 /\text{max}(n,p) , 1/2]$), with several support points on the interval.  
However, introducing the hyperprior for the hyperparameters  will not only arise new   values to tune, but also increase the complexity of the MCMC sampling.    
In practice, although the specification of a $p$-dimensional prior on $\bm\beta$ may be difficult, some prior information on a univariate function may be easier. The motivation is to incorporate such prior information of the one-dimensional function into the priors on the $p$-dimensional $\bm\beta$.

In this paper, we propose an intuitive way to tune the values of hyperparameters, by incorporating a prior on $R^2$ (the coefficient of determination). Practically, a scientist may have information on $R^2$ from previous experiments, and this can be coerced into say a Beta$(a, b)$ distribution. In this way, tuning hyperparameters is equivalent to searching for the hyperparameter which leads to the induced distribution of $R^2$ closest to the desired distribution. Intuitively, if we fix any value for $b$, as we increase $a$,   then $R^2$ will approach 1, hence this controls the total size of the signal that is anticipated in the data. As we will see shortly, it is a prior on the value of the quadratic form $\bm\beta^T \bm X^T \bm X \bm \beta$. Combining this with the choice of prior, gives also the degree of sparsity. For example, with a Dirichlet-Laplace Prior, the parameter in the DL distribution then controls how this total signal is distributed to the coefficients, either to a few coefficients, giving a sparse model, or to many coefficients, giving a dense model.  In many cases, a scientist may have done many similar experiments before and can look back and see the values of the sample coefficient of determination from all of these studies. Then treating this as a sample from a Beta distribution, the hyperparameters, $ a$  and $b$, can be obtained from this fit. Without any prior information for $R^2$, a uniform prior, Beta$(1, 1)$, may be used as default.


For  the linear regression model (\ref{linear_regression}), the population form can be represented as 
$y=\bm{x}^T\bm\beta + \varepsilon$, with $\bm{x}$ independent  of  $\varepsilon$. 
Let  $\sigma_y^2$ be the 
marginal  variance of $y$  and $\sigma^2$  be the   variance of  the random error term. The definition of the  POPULATION $R^2$ is given by:  
\begin{eqnarray}
	\label{equation-pop-R-square}
	pop \ R^2 = 1 - \frac{\sigma^2}{\sigma^2_y},  \nonumber
\end{eqnarray}
which is the  proportion of the variation of $y$ in the population explained by the independent variables. 
Furthermore, for fixed $\bm\beta$, it follows that $\sigma_y^2 = \bm\beta^T\text{Cov}(\bm{x})\bm\beta + \sigma^2$. Assume $E(\bm{x}) = 0$, then we can estimate $\text{Cov}(\bm{x})$  by  $\bm{X}^T\bm{X}/n$. 
So   $R^2$ as a function of $\bm\beta$ and $\sigma^2$ is given by 
$
R^2 = 1 -  \sigma^2 /(\bm\beta^T\bm{X}^T\bm{X}\bm\beta/n + \sigma^2).   
$
Given that the form of  prior distributions  considered includes $\sigma$ in the scale, 
it follows that $\bm\beta = \sigma\bm\eta$  for  $\bm\eta$ having the distribution of the prior  fixed with $\sigma^2=1$. Hence  
\begin{eqnarray} \label{equation-Rsquare}
	R^2 = 1 - \frac{1}{1+\bm\eta^T\bm{X}^T\bm{X}\bm\eta/n }. 
\end{eqnarray} 
For a specified prior  on  $\bm\eta$, the induced distribution of $R^2$ can be derived based on (\ref{equation-Rsquare}). 
Then the hyperparameters which yield the  induced distribution of $R^2$ closest to the desired distribution  is the tuned value. 

\begin{figure}[h!]
	\centering
	\includegraphics[scale=0.7]{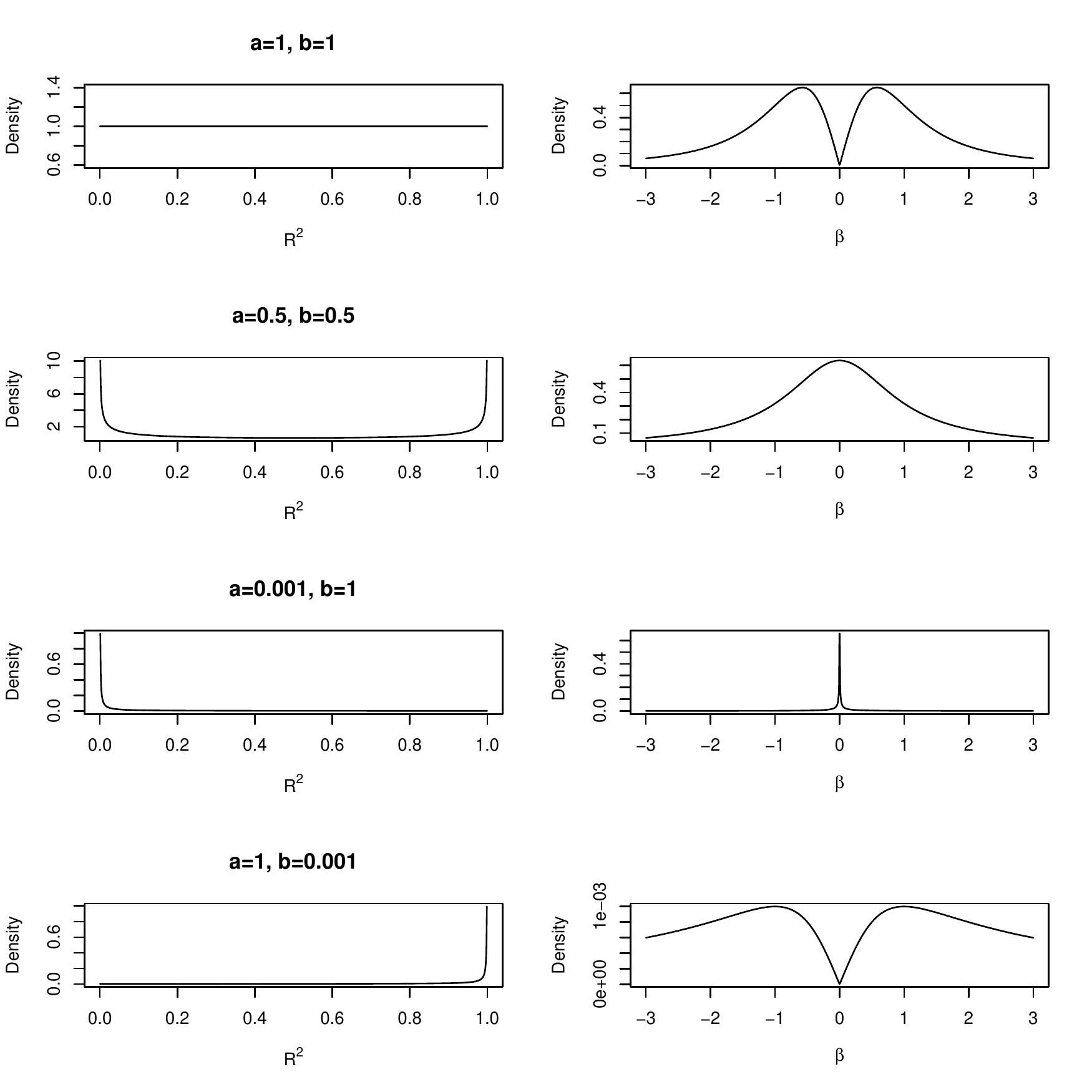}
	\caption{ Beta$(a,b)$ density for  $R^2$ and the corresponding induced distribution density for $\beta$.  }\label{rsquare-beta}
\end{figure}
For a better understanding of the intuition here,   we give a simple example. Suppose  $\sigma^2 =1$  and we have an intercept only model,  i.e., model  (\ref{linear_regression}) is simplified as $\bm{Y}=  \bf{1}_n\beta + \bm{\varepsilon}$ with $\bf 1_n$  the $n$-dimensional vector with all elements of 1. 
Then  (\ref{equation-Rsquare}) can be written as $R^2 = 1 - \frac{1}{1+\beta^2}$.  
Suppose the desired distribution for $R^2$ is Beta$(a,b)$, then the corresponding induced distribution for $\beta$ is  
\[ 
f_\beta(t) = \frac{2\Gamma(a+b)}{\Gamma(a)\Gamma(b)} (\frac{t^2}{1+t^2})^{a-1} (\frac{1}{1+t^2})^{b+1} |  t|, 
\]
where $\Gamma(.)$ denotes the gamma function. 
The left panel of Figure \ref{rsquare-beta}  shows the distribution on $R^2$  for 4 choices of hyperparameters in the Beta distribution, while the right panel shows the corresponding induced prior distribution on $\beta$. We see that for a uniform distribution on $R^2$, we obtain a distribution on $\beta$ that puts its mass slightly skewed away from zero on each side. For a bathtub distribution $(a = b = 0.5)$, we see it reduces to the Cauchy distribution, giving heavy tails to obtain the $R^2$ near one, and the peak around zero to obtain the $R^2$  near zero. We also see two other extremes, as $a \to 0 $ for fixed $b = 1$,  we obtain a distribution that decays very quickly and puts most of its mass around zero, as expected;  while as $b \to 0$ and $a$ fixed at 1, we obtain a density proportional to $| t | / (1 + t^2)$, allowing for larger values of $\beta$ with high probability.

In practice, one can consider    a grid of possible values of  the hyperparameters. For each value, draw a vector $ {\bm\eta}$. This is converted to a draw of $R^2$. Given this hyperparameter, a comparison between the sample of $R^2$ and the desired distribution is performed, for example, a Kolmogorov-Smirnov(KS) test. The best fit is then chosen. 
The whole tuning process only involves    the prior distributions, no    MCMC  sampling, thus avoiding comprehensive computing.

However,  given a specific prior for  $\bm\beta$,   based on (\ref{equation-Rsquare}),  the exact    induced distribution of $R^2$ can be derived,  which relies on the value of hyperparameters.    By minimizing  the Kullback-Liebler  directed divergence between such distribution and the desired  distribution (Beta distribution by default),  a default hyperparameter value can be  found. 
For continuous random variables with density function  $f_1$ and $f_2$, the KL divergence is defined as 
\[
D(f_1 | f_2)   =   \int_{-\infty}^\infty f_{1}(x) \log(f_{1}(x)/f_2(x))\, dx. 
\]      For the choice of normal priors, the following theorem shows that there is a closed form solution for the hyperparameter to minimize the KL divergence for large $p$. 
\begin{theorem}\label{theorem_normal tuning}
	For the normal prior in (\ref{normal-prior}),   to minimize the KL directed divergence between the induced distribution of $R^2$ and the Beta$(a,b)$ distribution,  as $p\rightarrow\infty$,    the hyperparameter,  $\gamma$,   is   chosen to be   $(A+\sqrt{B})^{1/3} + (A- \sqrt{B})^{1/3}  - P/3$, where   
	$
	P = (2a-b)\sum_{j=1}^p d_j /a$,  
	$Q = 2(a+b)  \sum_{j=1}^p d_j^2/a +  (a-2b)  (\sum_{j=1}^p d_j)^2/a$,   
	$R = -b( \sum_{j=1}^p d_j)^3/{a}$,  
	$
	C = P^2/9 - Q/3, \ 
	A = PQ/6 - P^3/27 - R/2$,   
	$B = A^2 - C^3\geq 0
	$, and   $d_1,\cdots ,d_{p}$  denote the eigenvalues of  $\bm{X}^T\bm{X}/n$. 
\end{theorem}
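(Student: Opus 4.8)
The plan is to reduce everything to the one-dimensional random variable $W=\bm\eta^T\bm X^T\bm X\bm\eta/n$, where under the normal prior (\ref{normal-prior}) we have $\bm\eta\sim N(0,\gamma^{-1}\bm I_p)$, and then to analyze the induced law of $R^2=W/(1+W)$ from (\ref{equation-Rsquare}). Diagonalizing $\bm X^T\bm X/n=\bm\Gamma\bm D\bm\Gamma^T$ and using the orthogonal invariance of the Gaussian, I would write $W=\gamma^{-1}\sum_{j=1}^p d_j u_j^2$ with $u_j$ i.i.d. $N(0,1)$, so $W$ is $\gamma^{-1}$ times a weighted sum of independent $\chi^2_1$ variables. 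Its first two cumulants are $E[W]=S_1/\gamma$ and $\mathrm{Var}(W)=2S_2/\gamma^2$, where $S_1=\sum_{j=1}^p d_j$ and $S_2=\sum_{j=1}^p d_j^2$. By Assumption \ref{a_eigenvalues} the eigenvalues are bounded away from $0$ and $\infty$, so $S_1\asymp p$ and $S_2\asymp p$, and the relative fluctuation $\mathrm{Var}(W)/E[W]^2=2S_2/S_1^2\to 0$; a Lyapunov CLT together with the delta method then shows that, as $p\to\infty$, the induced density $g_\gamma$ of $R^2$ is asymptotically Gaussian, concentrating at its mean with vanishing variance.

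Next I would expand the directed divergence $D(g_\gamma\,|\,h)=-H(g_\gamma)-E_{g_\gamma}[\log h]$, where $h$ is the Beta$(a,b)$ density. Writing $\log h(r)=\mathrm{const}+(a-1)\log r+(b-1)\log(1-r)$ and using the concentration above, the cross term contributes $-\log h(E[R^2])$ up to curvature corrections, while the entropy contributes $-\tfrac12\log\mathrm{Var}(R^2)$. The key point is that the $\gamma$-dependence of the entropy term is exactly what turns the naive ``match the mode of $h$'' condition into a ``match the mean'' condition: collecting the $\gamma$-dependent terms and differentiating, the stationarity condition reduces to $E[R^2]=a/(a+b)$, the mean of the target Beta. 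I would then insert the second-order delta-method value $E[R^2]=\dfrac{S_1}{S_1+\gamma}-\dfrac{2S_2\gamma}{(S_1+\gamma)^3}$ and set it equal to $a/(a+b)$.

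Clearing the factor $(a+b)(S_1+\gamma)^3$ turns this into the polynomial identity $a(S_1+\gamma)^3-(a+b)S_1(S_1+\gamma)^2+2(a+b)S_2\gamma=0$, which upon expansion becomes $a\gamma^3+(2a-b)S_1\gamma^2+[\,2(a+b)S_2+(a-2b)S_1^2\,]\gamma-bS_1^3=0$, i.e. exactly $\gamma^3+P\gamma^2+Q\gamma+R=0$ with the stated $P,Q,R$. Finally I would solve this cubic by Cardano's method: the substitution $\gamma=t-P/3$ removes the quadratic term and yields the depressed cubic $t^3-3Ct-2A=0$ with $C=P^2/9-Q/3$ and $A=PQ/6-P^3/27-R/2$. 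Since the discriminant of this depressed cubic equals $-108(A^2-C^3)=-108B$, the hypothesis $B\ge 0$ places us in the single-real-root regime, for which $t=(A+\sqrt B)^{1/3}+(A-\sqrt B)^{1/3}$ and hence $\gamma=(A+\sqrt B)^{1/3}+(A-\sqrt B)^{1/3}-P/3$.

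I expect the main obstacle to be the second paragraph: making rigorous the claim that, as $p\to\infty$, minimizing the KL divergence is equivalent to matching $E[R^2]$ to the Beta mean, and in particular carrying the delta-method expansion to the precise order that retains the subdominant $S_2$ term, since the leading balance alone only yields $\gamma=bS_1/a$ (matching the first term of the mean, equivalently the strict $p^3$-order part of the cubic). This requires controlling the entropy and curvature corrections of the concentrating, non-Gaussian density $g_\gamma$ to the right order, verifying that the stationary point is a minimizer rather than a maximizer, and checking that $B\ge 0$ so that the real-radical form of Cardano's formula applies; the algebraic reduction to the cubic and its solution are then routine.
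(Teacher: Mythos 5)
Your proposal is correct and lands on exactly the paper's cubic and Cardano solution, but it routes the central reduction differently, and the step you flag as the main obstacle turns out not to be one. The paper transfers the whole problem to the $W$-scale rather than staying on the $R^2$-scale: writing $R^2 = W/(W+\gamma)$ with $W=\sum_{j=1}^p d_j Z_j^2$ (your $W$ multiplied by $\gamma$), the law of $W$ is free of $\gamma$, and the Beta$(a,b)$ target pushes forward to $f_B(w)\propto w^{a-1}\gamma^b(\gamma+w)^{-(a+b)}$. Consequently the entropy term in $\mathrm{KL}(f_W\,|\,f_B)$ is constant in $\gamma$, the objective equals $-b\log\gamma+(a+b)E[\log(W+\gamma)]$ up to $\gamma$-free terms, and differentiating gives the \emph{exact} stationarity condition $E[\gamma/(W+\gamma)]=b/(a+b)$, i.e.\ $E[R^2]=a/(a+b)$. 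In other words, your conjectured mean-matching condition is the exact first-order condition for every $p$, not merely an asymptotic consequence of concentration; the Lyapunov CLT, the Gaussian approximation of the density of $R^2$, and the entropy/curvature bookkeeping you anticipated having to control can all be dispensed with. The only approximation in the paper is the one you also use: the second-order Taylor (delta-method) evaluation, with $E[W]=S_1:=\sum_j d_j$ and $\mathrm{Var}(W)=2S_2:=2\sum_j d_j^2$, applied by the paper to $E[\log(W+\gamma)]$ and by you to $E[R^2]$; since differentiating the paper's expanded objective yields term-by-term your expanded mean condition, the two stationarity equations are algebraically identical, namely $a\gamma^3+(2a-b)S_1\gamma^2+\bigl[2(a+b)S_2+(a-2b)S_1^2\bigr]\gamma-bS_1^3=0$, which is the stated cubic after dividing by $a$. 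Your Cardano step is also sound and matches the paper's appeal to its cited cubic-formula reference: the shift $\gamma=t-P/3$ gives $t^3-3Ct-2A=0$ with discriminant $-108B$, so $B\ge0$ places you in the single-real-root regime with $t=(A+\sqrt{B})^{1/3}+(A-\sqrt{B})^{1/3}$. One caveat you raise — verifying that the stationary point is a minimizer rather than a maximizer — is likewise left unaddressed by the paper, which only invokes uniqueness of the real root, so that gap is shared by both treatments rather than specific to yours.
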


In theory, for   other continuous priors, one can derive the  optimal hyperparameters similarly.  However, sometimes the calculation can be quite complex. In this case, the simulation-based approach discussed earlier   can be implemented.
However, since   GL priors   can be represented as  mixture normal priors (see Section  \ref{sec_introduction_credible}),  by matching its prior precision   with that of the normal prior,  the   derived  default solution  as shown in Theorem \ref{theorem_normal tuning}  can offer  an intuitive idea for  the hyperparameter values in  the    GL shrinkage priors.

\section{Simulation Results} \label{sec_simulations}
\subsection{Comparisons of Different Priors} \label{sec_simu1}
To compare the performance of the penalized credible region variable selection method using   different  shrinkage priors, including the normal prior (\ref{normal-prior}),  Laplace  prior (\ref{equation_Laplace prior}), and DL prior (\ref{eq_DL-DE-form}), a simulation study is conducted.  
\cite{bondell2012consistent} demonstrated  the improvement in performance of the credible region approach using the normal prior over both Bayesian and Frequentist approaches,  such as SSVS, Lasso, adaptive Lasso,  Dantzig Selector,  and SCAD.  Given the previous comparisons, the focus here is to see if replacing the normal prior with the global-local prior can even further improve the performance of the credible region variable selection approach.

We use a similar simulation setup as in \cite{bondell2012consistent}. In each setting, 200 datasets are simulated from the linear model  (\ref{linear_regression}) with $\sigma^2=1$, sample size  $n=60$, and the number of predictors $p$  varying  in $ \{50, 500, 1000\}$. 
To represent different correlation settings,  $X_{ij}$ are generated from standard normal distribution, and the correlation between $x_{ij_1}$ and $x_{ij_2}$ is  $\rho^{|j_1-j_2|}$, with  $\rho=  0.5$ and $0.9$.  
The true coefficient  $\bm\beta$ is  $({\bm 0}_{10}^T,{\bm B_1}^T,{\bm0}_{20}^T,{\bm B_2}^T,{\bm 0}^T_{p-40})^T$ for $p\in \{50, 500, 1000\}$ in which  ${\bm 0}_k$ represents the $k$-dimensional zero vector, 
${\bm B_1}$ and $\bm B_2$ are both   $5$-dimensional vector   generated component-wise   and  uniform from $(0,1)$.  
For each case of shrinkage prior,  the  posterior mean and covariance can be obtained from the  Gibbs samplers, and then   plugged into the optimization algorithm  (\ref{eq_penalized_optimization}) of the penalized credible region method to implement the variable selection.

For each method,   the  induced ordering of the  predictors are created. We consider the resulting model at each ordering step  to measure the performance. 
For each step on the ordering,  true positives (TP) are defined as  those selected  variables which also appear in the true model. 
False positives (FP) are those selected variables  which also do not appear  in the true model. 
True negatives (TN) correspond to those not selected variables  which are   not   in the true model. 
False negatives (FN) refer to variables which are not selected in the model, but indeed are  in the true model. 
The Receiver-Operating Characteristic (ROC) curve plots the false positive rate (FPR or 1-Specificity) on the x-axis and the true positive rate (TPR or Sensitivity) on the y-axis, where FPR is the fraction of FP's of the fitted model in the  total number of irrelevant variables in the true model, and TPR is the fraction of TP's of the fitted model in the total number of important variables in the true model. 
The Precision-Recall (PRC) curve plots the precision on the y-axis, and the Recall (or TPR or Sensitivity) on x-axis, where precision is the ratio of true positives to the total  declared   positive number.   

The compared  credible set methods are listed as below: 
\begin{itemize}
	\item Method ``Normal\_hyper'', refers to the normal prior, with ``non-informative'' hyperparameters, i.e.,  $N(0,\sigma^2_b)$ is the prior for $ \beta$, and IG$(0.001,0.001)$  prior is given for  $\sigma^2_b$. 
	\item  Method ``Normal\_tune'', refers to the normal prior (\ref{normal-prior}), where $\gamma$ is  tuned through the  $R^2$ method introduced in Section  \ref{section-tuning hyperparameters}, with a target of uniform distribution. 
	%
	\item Method ``Laplace\_hyper'',  means Laplace prior (\ref{equation_Laplace prior}),  with   $\lambda$ given a $\text{Ga}(1,1)$ prior. 
	\item Method ``Laplace\_tune'',  means Laplace prior (\ref{equation_Laplace prior}),    and  $\lambda$ is tuned through the  $R^2$ method introduced in Section  \ref{section-tuning hyperparameters}, with a target of uniform distribution. 
	\item  Method  ``DL\_hyper'' is the DL prior (\ref{eq_DL-DE-form}), in which  $a$ is given a discrete uniform prior supported on the interval $[{1}/{\max(n,p)}, {1}/{2}]$ with 1000 support points in this interval.  
	\item Method ``DL\_tune'' is the DL prior (\ref{eq_DL-DE-form}), in which  
	$a$ is tuned through the  $R^2$ method introduced in Section  \ref{section-tuning hyperparameters}, with a target of uniform distribution. 
\end{itemize}
In all above cases, the variance term $\sigma^2$ is given an IG$(0.001, 0.001)$ prior. In addition, we show the results from using the  Lasso  \citep{tibshirani1996regression}  fit via the LARS algorithm  \citep{efron2004least}.  


\begin{table} [h!]
	\caption{Mean area under the ROC Curve and the PRC curve for $p=50$, $n=60$, based on 200 datasets with standard errors in parentheses. }
	\label{table50}
	\renewcommand{\arraystretch}{1.0}
	\begin{center}  
		\begin{tabular}{ l|      c  c     |     c c  }
			\hline \hline 
			& \multicolumn{2}{c|}{ROC Area} & \multicolumn{2}{c}{PRC Area} \\
			& $\rho=0.5$ & $\rho=0.9$ &  $\rho=0.5$ & $\rho=0.9$ \\
			\hline
			{\footnotesize Lasso   }   & {\footnotesize0.900  (0.0047)} 
			&{\footnotesize0.815 (0.0052)}  
			& {\footnotesize0.694  (0.0053)} & {\footnotesize0.628 (0.0068)} \\ \hline
			{\footnotesize Normal\_hyper }   & {\footnotesize0.909 (0.0048)}  
			& {\footnotesize0.899 (0.0041)}  
			& {\footnotesize0.782  (0.0054)} & {\footnotesize0.749 (0.0058)} \\
			{\footnotesize Normal\_tune}    & {\footnotesize0.949 (0.0037)} 
			&{\footnotesize0.978 (0.0020)}  
			& {\footnotesize0.830 (0.0043)} & {\footnotesize0.845 (0.0039)} \\      \hline
			{\footnotesize Laplace\_hyper }   & {\footnotesize0.890  (0.0049)}  
			&{\footnotesize0.859  (0.0052)}  
			& {\footnotesize0.756 (0.0058)} & {\footnotesize0.691  (0.0069)} \\
			{\footnotesize Laplace\_tune}  & {\footnotesize0.942 (0.0040)} 
			& {\footnotesize0.976 (0.0020)}  
			& {\footnotesize0.820 (0.0046)} & {\footnotesize0.844 (0.0039)}  \\ 
			\hline
			{\footnotesize DL\_hyper }    & {\footnotesize0.917 (0.0044)} 
			& {\footnotesize0.908 (0.0044)}  
			& {\footnotesize0.786  (0.0052)} & {\footnotesize0.749  (0.0062)}\\
			{\footnotesize DL\_tune} & {\footnotesize0.939  (0.0039)} 
			& {\footnotesize0.945  (0.0032)}  
			& {\footnotesize0.811  (0.0048)} & {\footnotesize0.802 (0.0050)}\\
			\hline 
		\end{tabular}
	\end{center}
\end{table}
%
\begin{figure}[h!]
	\centering
	\includegraphics[scale=0.8]{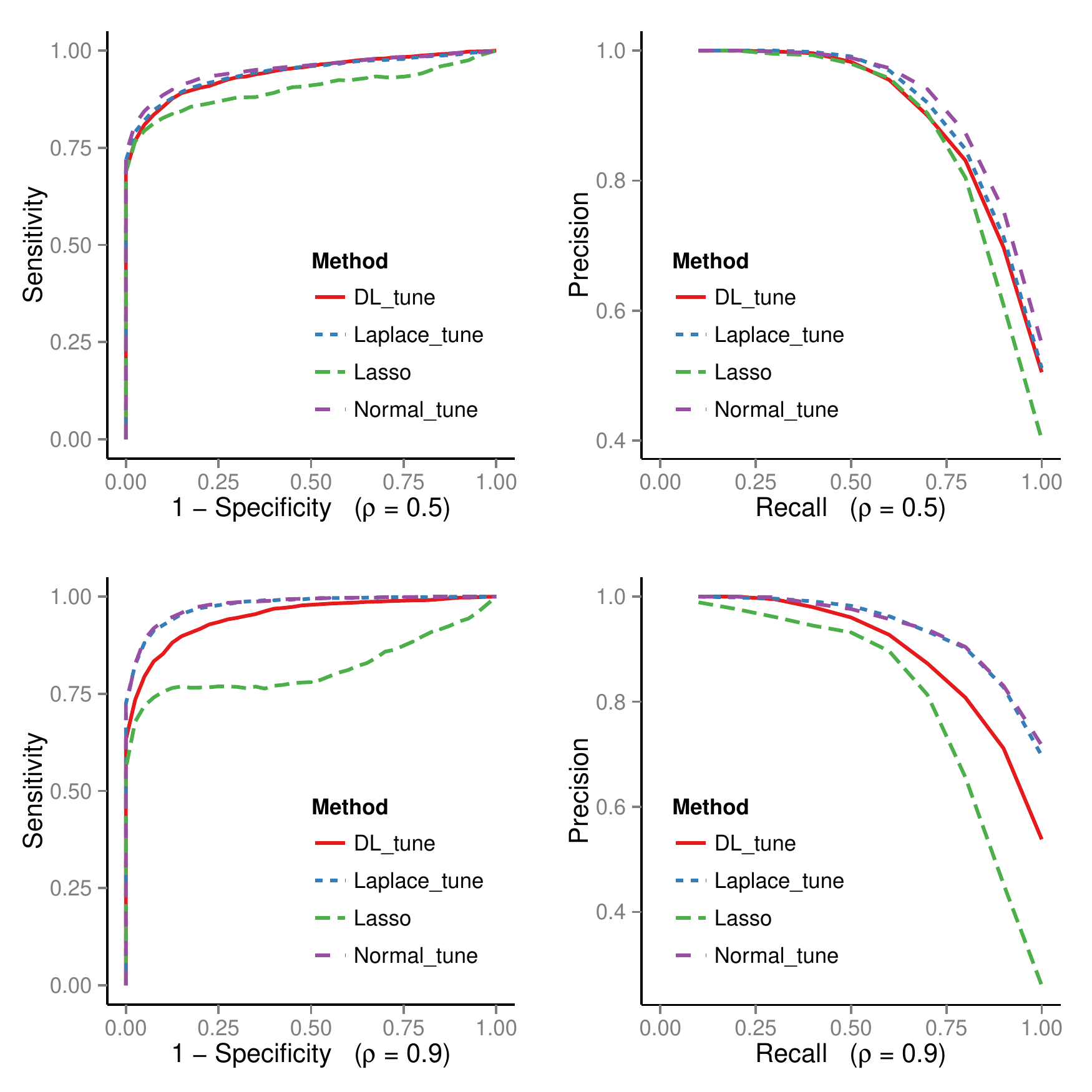}
	\caption{ Plot of mean ROC and PRC curves when $\rho=0.5$ and $\rho=0.9$,  over the 200 datasets for $p=50$ predictors, $n=60$ observations. The left column is the ROC curve, the right column is the PRC curve. }\label{figure50}
\end{figure}

For the above  priors (normal, Laplace and DL),  we ran the MCMC chain (Gibbs sampling) for $15,000$ iterations, with the first $5,000$ for burn-in.  Posterior mean and covariance were calculated based on the $10,000$ samples, which  were then  plugged into the penalized credible interval optimization algorithm (\ref{eq_penalized_optimization}), to conduct  variable selection. 
Table \ref{table50} gives the mean and standard error for the area under the ROC and PRC curve for $p=50$ with   $\rho\in\{0.5, 0.9\}$. 
In addition,  Figure \ref{figure50} plots the mean ROC and PRC curves  of the 200 datasets for  the selected  above methods to compare. 
Table \ref{table500} and Figure \ref{figure500} give the results for the $p=500$ case.
Table \ref{table1000} and Figure \ref{figure1000}  show the results for the $p=1000$ case.  Since  the Lasso  estimator can  select at most $\min\{n,p\}$ predictors,  when  $p=500$ or  $1000$,  the ROC and PRC curves cannot be fully constructed.  So the area under the curves cannot be compared directly for Lasso with other methods, which are omitted in  Table \ref{table500} and \ref{table1000}, but partial ROC and PRC curves can  still be plotted, which are shown in Figure \ref{figure500} and  \ref{figure1000}.  


\begin{table} [h!]
	\caption{Mean area under the ROC Curve and the PRC curve for $p=500$, $n=60$, based on 200 datasets with standard errors in parentheses. }\label{table500}
	\renewcommand{\arraystretch}{1.0}
	\begin{center}  
		\begin{tabular}{ l|      c  c     |     c c  }
			\hline \hline 
			& \multicolumn{2}{c|}{ROC Area} & \multicolumn{2}{c}{PRC Area} \\
			& $\rho=0.5$ & $\rho=0.9$ &  $\rho=0.5$ & $\rho=0.9$ \\
			\hline
			{\footnotesize Lasso}       & -
			& -  
			& {\footnotesize 0.550 (0.0087)} &{\footnotesize  0.550   (0.0089)} \\ \hline
			{\footnotesize Normal\_hyper}    & \footnotesize{0.948  (0.0031)} 
			& \footnotesize{0.990 (0.0013)}  
			& \footnotesize{0.615 (0.0093)} & \footnotesize{0.784  (0.0062)} \\     
			{\footnotesize Normal\_tune  }&  \footnotesize{0.950  (0.0029)} 
			& \footnotesize{0.992  (0.0007)}  
			& \footnotesize{0.610 (0.0091)} &\footnotesize{0.721  (0.0077)} \\  
			\hline   
			{\footnotesize Laplace\_hyper}   & \footnotesize{0.937  (0.0030)} 
			& \footnotesize{0.969  (0.0020)} 
			& \footnotesize{0.621  (0.0087)} &\footnotesize{0.680  (0.0087)}  \\
			{\footnotesize Laplace\_tune}   & {\footnotesize 0.959 (0.0027)} 
			& {\footnotesize 0.995 (0.0004)}  
			& {\footnotesize 0.701  (0.0077)} & {\footnotesize 0.822 (0.0055)}  \\  
			\hline  
			{\footnotesize DL\_hyper}     & {\footnotesize 0.927   (0.0038)} 
			& {\footnotesize 0.908   (0.0047)}  
			&{\footnotesize  0.651   (0.0092)} & {\footnotesize 0.570   (0.0102)}\\
			{\footnotesize DL\_tune}   & {\footnotesize 0.949   (0.0027)} 
			& {\footnotesize 0.970   (0.0025)}  
			& {\footnotesize 0.717   (0.0085)} & {\footnotesize 0.797   (0.0073)}\\
			\hline 
			%
		\end{tabular}
	\end{center}
\end{table}

\begin{figure}[h!]
	\centering
	\includegraphics[scale=0.8]{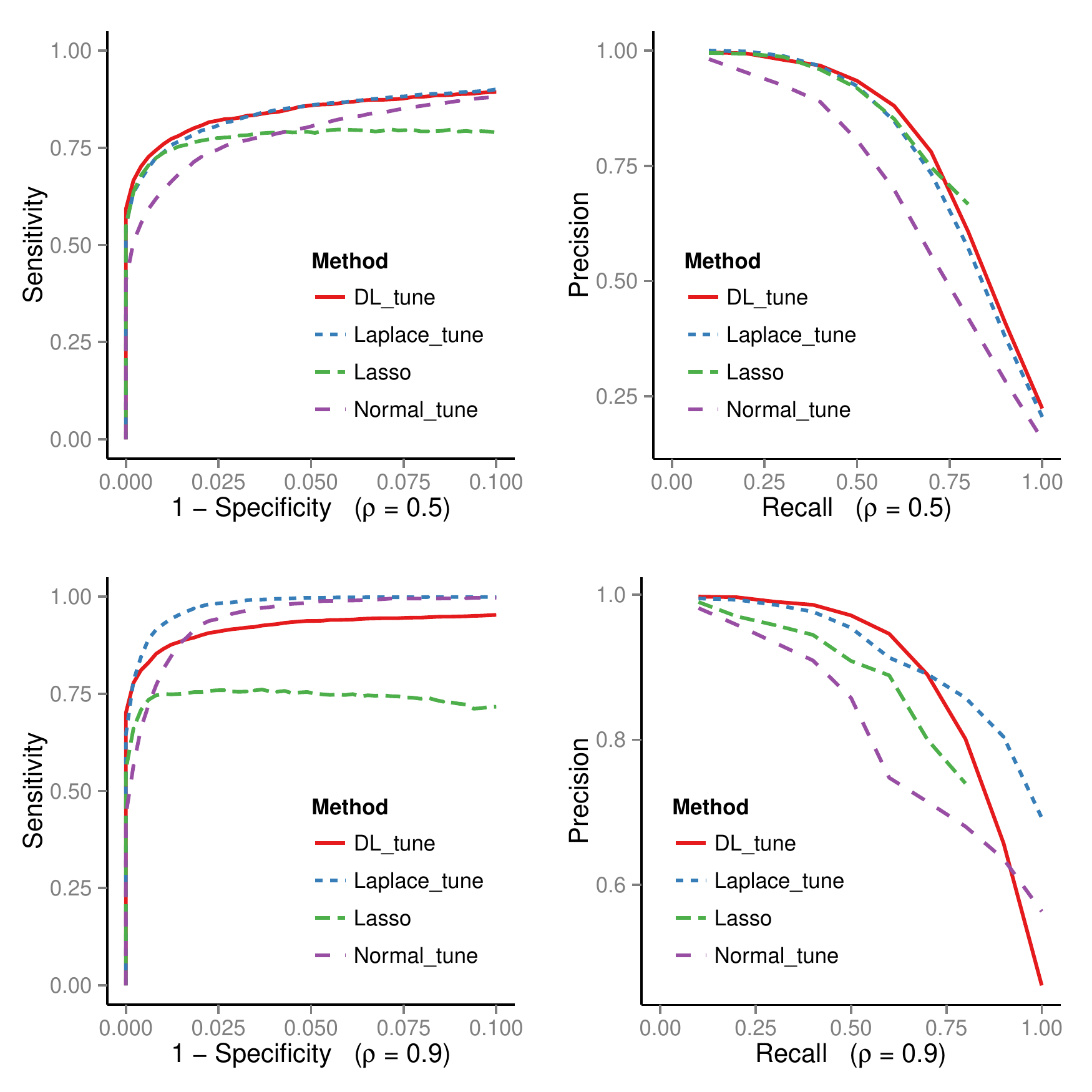}
	\caption{ Plot of mean ROC and PRC curves when $\rho=0.5$ and $\rho=0.9$,  over the 200 datasets for $p=500$ predictors, $n=60$ observations. The left column is the ROC curve, the right column is the PRC curve.}\label{figure500}
\end{figure}
%

\begin{table} [h!]
	\caption{Mean area under the ROC Curve and the PRC curve for $p=1000$, $n=60$, based on 200 datasets with standard errors in parentheses. }\label{table1000}
	\renewcommand{\arraystretch}{1.0}
	\begin{center}  
		\begin{tabular}{ l|      c  c     |     c c  }
			\hline \hline 
			& \multicolumn{2}{c|}{ROC Area} & \multicolumn{2}{c}{PRC Area} \\
			& $\rho=0.5$ & $\rho=0.9$ &  $\rho=0.5$ & $\rho=0.9$ \\
			\hline
			{\footnotesize Lasso  }   & -  
			& -  
			& {\footnotesize 0.507  (0.0093)} &  {\footnotesize0.536 (0.0091)} \\ \hline
			{\footnotesize Normal\_hyper}    &  {\footnotesize0.942 (0.0039)} 
			&  {\footnotesize0.992 (0.0018)} 
			& {\footnotesize0.515 (0.0101)} &  {\footnotesize0.727 (0.0076)} \\
			{\footnotesize Normal\_tune}  &  {\footnotesize0.943 (0.0039)} 
			&  {\footnotesize0.991 (0.0018)}  
			&  {\footnotesize0.539 (0.0099)} &  {\footnotesize0.680 (0.0083)} \\     
			\hline
			{\footnotesize Laplace\_hyper}   &  {\footnotesize0.914 (0.0041)} 
			&  {\footnotesize0.968 (0.0021)}  
			&  {\footnotesize0.444 (0.0093)} &  {\footnotesize0.554 (0.0102)}  \\
			{\footnotesize Laplace\_tune} &    {\footnotesize 0.951 (0.0038)} 
			&  {\footnotesize0.994 (0.0012)} 
			&  {\footnotesize0.638 (0.0092)} &  {\footnotesize0.764 (0.0071)}  \\   
			\hline
			%
			{\footnotesize DL\_hyper}  &  {\footnotesize0.931 (0.0040)} 
			&  {\footnotesize 0.943 (0.0034)}  
			&  {\footnotesize 0.635  (0.0096)} &  {\footnotesize0.623  (0.0094)}\\
			{\footnotesize DL\_tune}   &  {\footnotesize0.925  (0.0045)} 
			&  {\footnotesize0.967  (0.0025)} 
			&  {\footnotesize0.633  (0.0116)} &  {\footnotesize0.768  (0.0092)}\\
			\hline
			%
		\end{tabular}
	\end{center}
\end{table}
%
\begin{figure}[h!]
	\centering
	\includegraphics[scale=0.8]{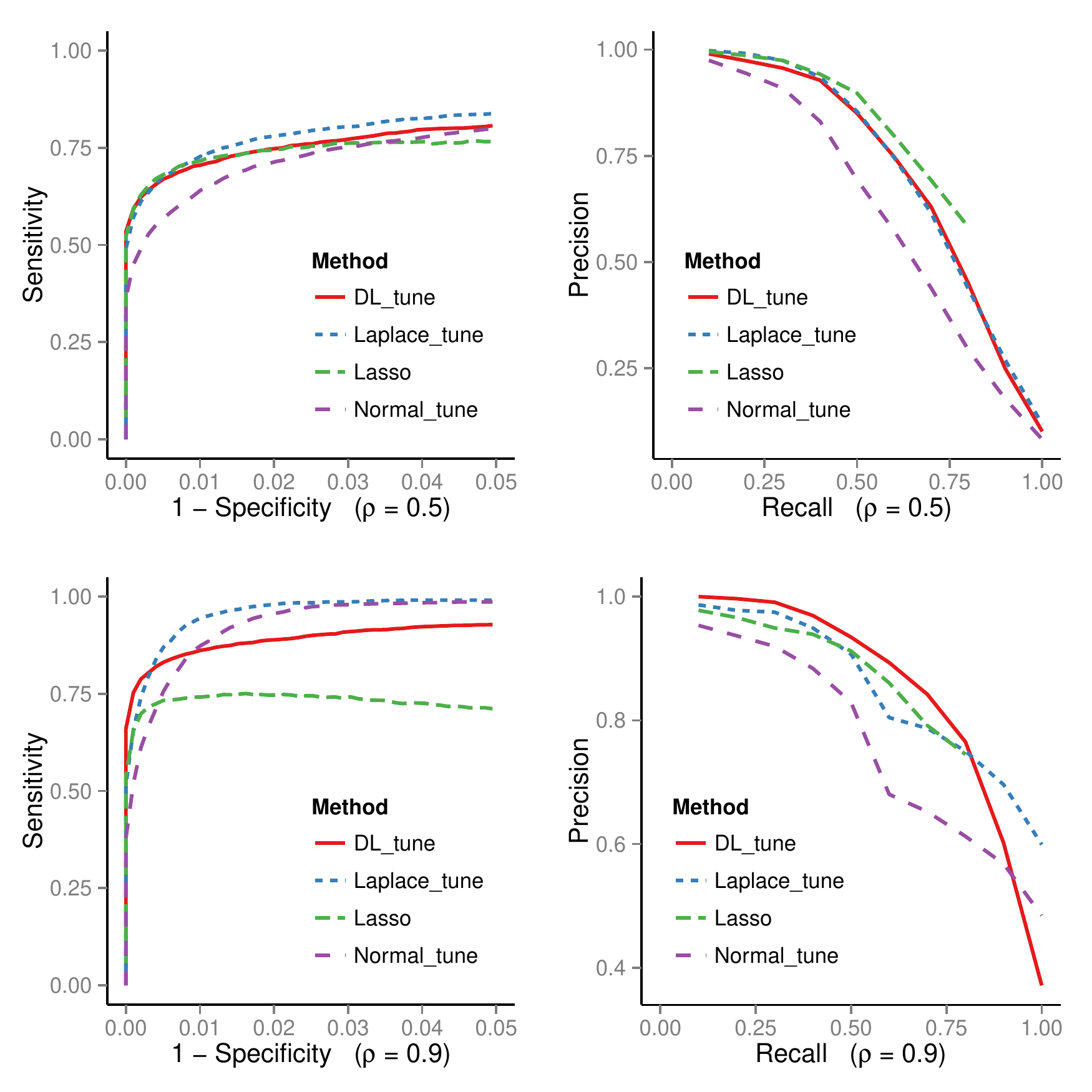}
	\caption{  Plot of mean ROC and PRC curves when $\rho=0.5$ and $\rho=0.9$,  over the 200 datasets for $p=1000$ predictors, $n=60$ observations. The left column is the ROC curve, the right column is the PRC curve.}\label{figure1000}
\end{figure}

On the one hand, in terms of whether given a  hyperprior for the hyperparameter  or tuning hyperparameters through the  $R^2$  method proposed in Section  \ref{section-tuning hyperparameters} would lead to    better posterior performance,  one might compare  each ``$\ast$\_hyper'' and ``$\ast$\_tune'' pair in  Table \ref{table50}, \ref{table500} and \ref{table1000}. 
In general, for all three priors, the  tuning method  leads to significantly better posterior performance than the hyperprior method in  all   simulation setups.

On the other hand,   in terms of comparing performance of  different   priors applied on the penalized credible region variable selection,  combining both the tables and figures, we have the following findings.  
When considering the Precision-Recall in particular, the DL and Laplace priors outperform the normal prior and Lasso. This is particularly true if the hyperparameters in them are tuned via a uniform distribution on $R^2$.  We note that when there are only a few true   and many unimportant variables, the Precision-Recall curve is a more appropriate measure than the  ROC curve. For example,  when $p=1000$, in both $\rho=0.5$ and $0.9$ cases,  
in Figure \ref{figure1000}, the PRC curve  shows that the DL prior is significantly better than the normal prior;   the ROC curve of the normal prior goes higher when FPR (or 1-Specificity) is large, however, when  FPR is small (which is  of more  interest),  DL prior still leads to significantly larger sensitivity than the normal prior. Overall, the DL prior outperforms the normal prior,   as does the Laplace prior.

\subsection{Additional Simulations on  Hyperparameter Tuning}
To examine the role of $a$ in the DL prior,  additional simulations were conducted. Table \ref{table-dl2np}  gives the average squared error for the posterior mean based on the 200  same  datasets as Section  \ref{sec_simu1}, for the  DL priors with $a$ fixed at $1/2$, $1/n$, and $1/p$.  
The results show that when  $p$ is large or there is strong correlation in the dataset,  $a=1/n$  is better than $a=1/2$. 
When $p$ is small and there is only moderate correlation for the data,  $a=1/2$  is recommended. 
%
Since the performance of   different values of $a$ varies relying on the dimension of predictors and the correlation structure of the predictors,   fixing   $a$  is difficult. Thus  either giving  a hyperprior for $a$ or using the $R^2$ method proposed in Section  \ref{section-tuning hyperparameters} to tune $a$ is suggested.   


\begin{table} [h!]
	\caption{Average squared error for the posterior mean,       given  Dirichlet-Laplace prior with   $a$ fixed at $\frac{1}{2}$, $\frac{1}{n}$ and $\frac{1}{p}$, based on 200 datasets with standard errors in parentheses. }
	\label{table-dl2np}
	\renewcommand{\arraystretch}{1.0}
	\begin{center}\small\addtolength{\tabcolsep}{-3pt}
		\begin{tabular}{ c|    ccc  | ccc | ccc  }
			\hline\hline
			&   \multicolumn{3}{c|}{$p=50$}   &  \multicolumn{3}{c|}{$p=500$} &  \multicolumn{3}{c}{$p=1000$}  \\ 
			$a$  & $ \frac{1}{2}$ &  $\frac{1}{n}$  &  $\frac{1}{p}$ 
			& $\frac{1}{2}$ &  $\frac{1}{n}$  &  $\frac{1}{p}$ 
			& $\frac{1}{2}$ &  $\frac{1}{n}$  &  $\frac{1}{p}$ 
			\\
			\hline
			$\rho=0.5$    & 0.772  & 0.877   & 0.874 & 1.292   & 1.400   & 1.953 &  1.470    & 1.434  & 2.196 \\
			&  {\footnotesize $( 0.0234 )$ } & {\footnotesize $( 0.0325 )$ } & {\footnotesize $( 0.0329 )$ } &
			{\footnotesize $( 0.0421 )$ } & {\footnotesize $( 0.0519 )$ } & {\footnotesize $( 0.0576 )$ }  &
			{\footnotesize $( 0.0451 )$ } & {\footnotesize $( 0.1070 )$ } & {\footnotesize $( 0.1196 )$ }  \\ \hline
			$\rho=0.9$    & 1.989  & 1.751   & 1.715 & 2.193   & 2.142   & 2.546 &  2.299    & 2.247  & 2.426  \\
			&{\footnotesize $( 0.0559 )$ } & {\footnotesize $( 0.0737 )$ } & {\footnotesize $( 0.0739 )$ } &
			{\footnotesize $( 0.0767 )$ } & {\footnotesize $( 0.0981 )$ } & {\footnotesize $( 0.1180 )$ } & 
			{\footnotesize $( 0.1101 )$ } & {\footnotesize $( 0.1178 )$ } & {\footnotesize $( 0.1186 )$ } \\
			\hline 
		\end{tabular}
	\end{center}
\end{table}

Furthermore,   to verify Theorem \ref{theorem_normal tuning} described  in Section \ref{section-tuning hyperparameters},  additional calculations were performed. 
For each of the above 200 datasets,  `Normal\_tune''  returns a  ``best''  tuned  $\gamma$ through conducting the practical procedures as introduced in Section \ref{section-tuning hyperparameters}, 
and we name it  as ``Tuned''.  
Also, by Theorem \ref{theorem_normal tuning},  the theoretic ``best'' $\gamma$ can be  derived based on the eigenvalues of $\bm{X}^T\bm{X}/n$ for each dataset, and we name it  as ``Derived''. 
In addition,  for each of the above 200 datasets, the design matrix $X$ is generated from a multivariate normal distribution with specific and fixed covariance structure. So the eigenvalues of such true covariance matrix,  instead of $\bm{X}^T\bm{X}/n$, 
can be used to derive the theoretic ``best'' $\gamma$, and we name it as ``Theoretic'' value. 
Table \ref{table-tau-tune-theoretic}  gives the ``Theoretic'' value,  and the mean  of  ``Derived'' and ``Tuned'' value together with the standard error    among the 200 datasets, for simulation setups $\rho=0.5$ and $0.9$.  In general, the three values  are similar and all of them  are close to the value of $p$.  So in practice,  $\gamma$ can be set as  the ``Derived'' value  based on the eigenvalues of $\bm{X}^T\bm{X}/n$, or for simplicity, $\gamma=p$ can also be used.     


\begin{table} [h!]
	\caption { Theoretic   $\gamma$  in the normal prior (\ref{normal-prior}) based on Theorem \ref{theorem_normal tuning},  together with   mean    of the derived and tuned  $\gamma$ through methods proposed in Section \ref{section-tuning hyperparameters}, based on 200 datasets with standard errors in parentheses. } 
	\label{table-tau-tune-theoretic}
	\renewcommand{\arraystretch}{1.0}
	\begin{center}  \small\addtolength{\tabcolsep}{-1.5pt}
		\begin{tabular}{ l|    c cc  |c  cc    }
			\hline \hline 
			& \multicolumn{3}{c| } { $\rho=0.5$} & \multicolumn{3}{c } { $\rho=0.9$}\\
			&\multicolumn{1}{c } {Theoretic} & Derived    &\multicolumn{1}{c| } {Tuned }
			&\multicolumn{1}{c } {Theoretic} & Derived    &\multicolumn{1}{c } {Tuned }
			\\
			\hline
			$p=50$ &   47.6  & 46.6 (0.11) & 48.6 (0.24) & 40.8 & 39.9 (0.18) & 41.3 (0.25) \\
			$p=500$ &   490.0  & 481.8 (0.35)& 474.3 (0.83) & 482.4 & 474.1 (0.81) & 471.9 (1.23) \\
			$p=1000$ &   981.7 & 965.1 (0.51) &  947.1 (1.50) & 974.0  & 956.9 (1.18)& 944.3 (1.82)\\
			\hline 
		\end{tabular} 
	\end{center}
\end{table}
\section{Real Data Analysis} \label{sec_real data}
We now analyze data on mouse gene expression from the experiment conducted  by  \cite{lan2006combined}. 
There were 60 arrays to monitor the expression levels of $22,575$ genes consisting of  $31$ female and $29$ male mice. 
Quantitative real-time PCR were used to measure 
some physiological phenotypes,
including numbers of phosphoenopyruvate carboxykinase (PEPCK), glycerol-3-phosphate
acyltransferase (GPAT), and stearoyl-CoA desaturase 1 (SCD1). 
The gene expression data and the phenotypic data can be found at GEO 
(http://www.ncbi.nlm.nih.gov/geo; accession number GSE3330).

First,  by ordering the magnitude of marginal correlation between the genes with the three responses from the largest to the smallest, 
$22,575$ genes were screened down to the $999$ genes,  thus reducing the number of candidate predictors of the three linear regressions. Note that 
the top $999$ genes were not the same for  the $3$ responses. 
Then  for each of the $3$
regressions, the dataset is  composed  of  $n =60$ observations and $p =1, 000$ predictors (gender along with the $999$ genes). 
After the  screening, the Lasso estimator and the penalized credible region method applied on the normal, Laplace and DL priors were used.  The hyperparameters in those prior distributions are tuned   through the  $R^2$ method introduced in Section  \ref{section-tuning hyperparameters}, with a target of uniform distribution.

To evaluate the performance of the proposed approach,  the first step was to randomly split the sample size  $60$  into a training set of size $55$ and a testing set of size $5$. 
The stopping rule was  BIC. To be more specific, the selected model was the one with smallest BIC among all models in which the  number of predictors  is less than $30$. 
Then   the selected model was used to predict the remaining $5$ observations, and the prediction error was then obtained. We repeated this for   $100$ replicates   in order to compare the prediction errors. 
Table \ref{table-real data} shows the mean squared prediction error (with its standard error) based on the $100$ random splits of the data.  The mean selected model size    (with its standard error) is also included.

\begin{table} [h!]
	\caption {Mean squared prediction error and model size, with standard errors in parenthesis, based on 100 random splits of the real data.    } 
	\label{table-real data}
	\renewcommand{\arraystretch}{1.0}
	\begin{center}
		\small\addtolength{\tabcolsep}{-1pt}
		\begin{tabular}{ l|    cc | cc  | cc    }
			\hline \hline 
			&\multicolumn{2}{c| } { PEPCK}& \multicolumn{2}{c| } {GPAT} & \multicolumn{2}{c } { SCD1}\\
			&\multicolumn{1}{c } {MSPE } &\multicolumn{1}{c| } {Model Size}  
			&\multicolumn{1}{c } {MSPE } &\multicolumn{1}{c| } {Model Size}  
			&\multicolumn{1}{c } {MSPE } &\multicolumn{1}{c } {Model Size}  
			\\
			\hline
			Lasso & 0.54 (0.026) & 25.8 (0.34) & 1.43 (0.082)  & 24.4 (0.56) &0.55 (0.052) &  26.1 (0.33)\\
			Normal  & 0.66 (0.033) & 16.8 (0.67) & 1.30 (0.099) &  16.3 (0.66) & 0.71 (0.059)& 10.8 (0.60) \\
			Laplace & 0.70 (0.037) & 17.0 (0.78) & 1.19 (0.086) &  21.4 (0.56) & 0.69 (0.054) &  14.8 (0.82)\\
			DL  & 0.49 (0.032) & 18.4 (0.73) & 1.37 (0.102) & 13.1 (0.68) & 0.54 (0.037) &  14.0 (0.59) \\
			\hline 
		\end{tabular}
	\end{center}
\end{table}

Overall, the results show that the proposed   penalized credible region selection method using global-local shrinkage priors such as  DL prior performs  well.   For all $3$ responses,  the penalized credible region approach with DL prior performs better than the Lasso estimator and  has a smaller number of predictors. For PEPCK and SCD1, the DL prior has significant better performance than the normal prior and Laplace prior.  For GPAT, there is no significant difference between normal and DL prior.  In all, for this dataset, the proposed approach generally improves the performance by replacing the normal prior with the DL prior.

\section{Discussion} \label{sec_discussion}
In this paper, we  extend the penalized credible variable selection approach  by using   global-local shrinkage priors. 
Simulation studies  show that  the GL shrinkage priors  outperform the original normal prior. 
Our main result also includes modifying the Dirichlet-Laplace prior to accommodate  the linear regression model instead of the simple normal mean problem as in \cite{bhattacharya2015dirichlet}.   In  theory, we  obtain the selection consistency  for the penalized  credible region method using the global-local shrinkage priors  when $p=o(n)$. 
Posterior consistency for the normal and DL priors are also shown.

Furthermore, this paper   introduces a new  default method to tune the hyperparameters in   prior distributions based on the induced prior distribution of $R^2$.  The hyperparameter is chosen to minimize a discrepancy between the induced distribution of $R^2$ and a default Beta distribution. 
For the  normal prior, a closed form of the    hyperparameters  is derived. This method is straightforward  and   efficient as it only involves  the   prior distributions. 
A simulation study  illustrates that our proposed tuning method improves upon the usual hyperprior method.

\bibliographystyle{agsm}
\bibliography{reference}

\newpage 
\section*{Appendix: Proofs} \label{sec_proof_credible}
\textbf{Proof of Theorem \ref{theorem_normal_consistency}}
\begin{proof}
	According to  Theorem 1 in \cite{armagan2013posterior}, if  under a particular prior, ${\bm\beta}_n$ satisfies 
	\[
	P({\bm\beta}_n:||{\bm\beta}_n-{\bm\beta}_n^0|| < \frac{\Delta}{n^{\rho/2}}) >\exp  (-dn)
	\]
	for all $0<\Delta<\varepsilon^2 d_{\min} /(48d_{\max})$ and $0<d<\varepsilon^2 d_{\min} /(32\sigma^2) - 3\Delta d_{\max}/(2\sigma^2)$ and some $\rho>0$, then 
	the posterior of ${\bm\beta}_n$   is  consistent. 
	So to get the posterior consistency, the key  is to calculate the probability of $\{{\bm\beta}_n:||{\bm\beta}_n-{\bm\beta}_n^0|| < \frac{\Delta}{n^{\rho/2}} \}$ under the given prior. 
	
	Following the proof of Theorem 2 in \cite{armagan2013posterior}, we have 
	{	\begin{eqnarray}\label{eq_6}
			&& P({\bm\beta}_n:||{\bm\beta}_n-{\bm\beta}_n^0|| < \frac{\Delta}{n^{\rho/2}}) = 
			P\left\{{\bm\beta}_n: \sum_{j\in \mathcal{A}^0_n } (\beta_{nj}-\beta_{nj}^0)^2 + \sum_{j\not\in \mathcal{A}^0_n} \beta_{nj}^2 < \frac{\Delta^2}{n^{\rho}}\right\} \nonumber \\
			&\geq& 
			\prod\limits_{j\in\mathcal{A}^0_n} \left\{ P\left( \beta_{nj}:   |\beta_{nj}-\beta_{nj}^0| <\frac{\Delta}{\sqrt{p_n}n^{\rho/2}}\right)\right\} \times 
			P\left\{{\bm\beta}_{n} : \sum_{j\not\in \mathcal{A}^0_n}
			\beta_{nj}^2 < \frac{(p_n-q_n)\Delta^2}{p_nn^{\rho}} \right\}   \nonumber \\
			%
			&\geq & \prod\limits_{j\in\mathcal{A}^0_n} \left\{ P\left( \beta_{nj}^0 -\frac{\Delta}{\sqrt{p_n}n^{\rho/2}} <   \beta_{nj} < \beta_{nj}^0 +\frac{\Delta}{\sqrt{p_n}n^{\rho/2}}\right)\right\} \times 
			\left\{1-  \frac{p_n n^\rho E( \sum_{j\notin \mathcal{A}^0_n}\beta_{nj}^2)}{(p_n-q_n)\Delta^2} \right\} \nonumber \\
			&\geq & \prod\limits_{j\in\mathcal{A}^0_n} \left\{ P\left(-\sup_{j\in\mathcal{A}^0_n}|\beta_{nj}^0| -\frac{\Delta}{\sqrt{p_n}n^{\rho/2}} <   \beta_{nj} < \sup_{j\in\mathcal{A}^0_n}|\beta_{nj}^0| +\frac{\Delta}{\sqrt{p_n}n^{\rho/2}}\right)\right\} \times  \nonumber \\
			&& 	\left\{1-  \frac{p_n n^\rho E( \sum_{j\notin \mathcal{A}^0_n}\beta_{nj}^2)}{(p_n-q_n)\Delta^2} \right\} \nonumber \\
			&\geq &  \left\{  2\frac{\Delta}{\sqrt{p_n}n^{\rho/2}} f(\sup_{j\in\mathcal{A}^0_n}|\beta_{nj}^0| + \frac{\Delta}{\sqrt{p_n}n^{\rho/2}} ) \right\}^{q_n} \times 
			\left\{1-  \frac{p_n n^\rho E( \sum_{j\notin \mathcal{A}^0_n}\beta_{nj}^2)}{(p_n-q_n)\Delta^2} \right\} , 
		\end{eqnarray}}
		where $f$ is the prior pdf of $\bm\beta$,   symmetric and decreasing when the support is positive. 
		In  normal prior (\ref{normal-prior}), $f(\beta_{nj}) =\frac{1}{\sqrt{2\pi \sigma^2/\gamma_n}} \exp  \{-\frac{\beta_{nj}^2}{2\sigma^2/\gamma_n}\} $, $E(\beta_{nj}^2) = \sigma^2/\gamma_n$. Following from (\ref{eq_6}), we have
		\begin{eqnarray} \label{9}
			&& P({\bm\beta}_n:||{\bm\beta}_n-{\bm\beta}_n^0|| < \frac{\Delta}{n^{\rho/2}}) \nonumber\\
			&\geq & \left\{  2\frac{\Delta}{\sqrt{p_n}n^{\rho/2}}
			\frac{1}{\sqrt{2\pi \sigma^2/\gamma_n}} \exp  \{-\frac{(\sup_{j\in\mathcal{A}^0_n}|\beta_{nj}^0| + \frac{\Delta}{\sqrt{p_n}n^{\rho/2}})^2}{2\sigma^2/\gamma_n}\}
			\right\}^{q_n} \times 
			\left\{1-  \frac{p_n n^\rho \sigma^2/\gamma_n }{ \Delta^2}\right\}  \nonumber\\
			&\geq& \left\{   \frac{2 \Delta}{\sqrt{p_n}n^{\rho/2}}
			\frac{1}{\sqrt{2\pi \sigma^2/\gamma_n}} \exp  \{-\frac{ (\sup_{j\in\mathcal{A}^0_n}|\beta_{nj}^0|)^2 +  (\frac{\Delta}{\sqrt{p_n}n^{\rho/2}})^2}{\sigma^2/\gamma_n}\}
			\right\}^{q_n}  
			\left\{1-  \frac{p_n n^\rho \sigma^2/\gamma_n }{ \Delta^2}\right\}. \nonumber 
		\end{eqnarray}
		Taking the negative logarithm of both sides of the above formula, and letting $\sqrt{\sigma^2/\gamma_n} = C/(\sqrt{p_n}n^{\rho/2}\log n)$,  we have 
		{  	\begin{eqnarray} \label{8}
				&& -\log P({\bm\beta}_n:||{\bm\beta}_n-{\bm\beta}_n^0|| < \frac{\Delta}{n^{\rho/2}}) \nonumber\\
				&\leq& -q_n\log \left\{  \frac{2\Delta}{\sqrt{p_n}n^{\rho/2}}
				\frac{1}{\sqrt{2\pi \sigma^2/\gamma_n}}   \right\}
				+q_n \frac{(\sup_{j\in\mathcal{A}^0_n}|\beta_{nj}^0|)^2 + (\frac{\Delta}{\sqrt{p_n}n^{\rho/2}})^2}{ \sigma^2/\gamma_n}  \nonumber \\
				&&	-
				\log \left\{1-  \frac{p_n n^\rho \sigma^2/\gamma_n }{ \Delta^2} \right\}    \nonumber  \\
				&=&-q_n\log \left\{  \frac{2\Delta\log n}{\sqrt{2\pi}C}\right\}
				+q_n \frac{(\sup_{j\in\mathcal{A}^0_n}|\beta_{nj}^0|)^2 + (\frac{\Delta^2}{p_nn^{\rho}})}{ C^2/(p_n n^\rho (\log n)^2)}  - \log \left\{1-  \frac{C^2 }{ \Delta^2(\log n)^2} \right\}    \nonumber  \\      
				&=& -q_n\log \left\{  \frac{2\Delta\log n}{\sqrt{2\pi}C}\right\}
				- \log \left\{1-  \frac{C^2 }{ \Delta^2(\log n)^2} \right\}  
				+ \frac{q_n\Delta^2(\log n)^2}{C^2}  \nonumber \\
				&& +\frac{q_n  p_nn^\rho (\log n)^2 (\sup_{j\in\mathcal{A}^0_n}|\beta_{nj}^0|)^2}{C^2} .    \nonumber 
			\end{eqnarray}}
			The last term is the  dominating one in  the above equation, and 
			$ -\log P({\bm\beta}_n:||{\bm\beta}_n-{\bm\beta}_n^0|| < \frac{\Delta}{n^{\rho/2}}) <dn$ for all $d>0$ if $q_n  p_nn^\rho (\log n)^2 = o(n) $, i.e.,  $q_n=o(\frac{n^{1-\rho}} {p_n(\log n)^2} )$. So given the normal prior and assumptions    \ref{a_error}-\ref{a_beta0},  if   $q_n=o(\frac{n^{1-\rho}} {p_n(\log n)^2} ) $ for $\rho \in (0,1)$, the prior  satisfies  $P({\bm\beta}_n:||{\bm\beta}_n-{\bm\beta}_n^0||  < \frac{\Delta}{n^{\rho/2}})  > \exp(-dn)$. The posterior consistency is completed by Theorem 1 in \cite{armagan2013posterior}. 
		\end{proof}

		\textbf{Proof of Theorem \ref{theorem_DL_consistency}}
		\begin{proof}
			According to Section  3 in  \cite{bhattacharya2015dirichlet}, 
			the Dirichlet-Laplace prior (\ref{eq_DL-DE-form})
			can also  be represented as 
			\[ 
			\begin{array}{l l}
			\beta_j| \xi_j \sim \text{DE}(\xi_j\sigma),  \   
			\xi_j\sim  \text{Ga}(a,1/2).  
			\end{array} 
			\]
			And the marginal distribution of $\beta_j$ is 
			\begin{eqnarray*}
				f_d(\beta_j) & = & \int_{\xi_j=0}^{\infty}
				\left[  \frac{1}{2\xi_j \sigma}\exp\{-\frac{|\beta_j|} {\xi_j\sigma }\}\right]    
				\left[ \frac{(\frac{1}{2})^a}{\Gamma(a)}  \xi_j^{a-1}\exp\{-\frac{1}{2}\xi_j\}\right] \,  d\xi_j \\
				& = & \frac{(\frac{1}{2})^a}{2\Gamma(a)\sigma} \int_{\xi_j=0}^{\infty} \exp\{-\frac{|\beta_j|} {\xi_j\sigma} \} \xi_j^{a-2}\exp\{-\frac{1}{2}\xi_j\} \, d\xi_j. 
			\end{eqnarray*}
			Without loss of generality, we   assume $\sigma=1$. According to the Proposition 3.1 in \cite{bhattacharya2015dirichlet}, the marginal density function   of $\beta_j$ for any $1\leq j\leq p$ is 
			\begin{eqnarray*}
				f_d(\beta_j) = \frac{1}{2^{(1+a)/2}\Gamma(a)} |\beta_j|^{(a-1)/2} K_{1-a}(\sqrt{2|\beta_j|}), 
			\end{eqnarray*}
			where 
			\begin{eqnarray*}
				K_\nu(x) = \frac{\Gamma(\nu+1/2) (2x)^\nu}{\sqrt{\pi}} \int_{0}^{\infty} \frac{\cos t}{(t^2+x^2)^{\nu+1/2}}\, dt
			\end{eqnarray*}
			is the modified Bessel function of the second kind.  
			
			Also we have 
			\begin{eqnarray*}
				E(\beta_j^2) &=& \int_{\beta_j=-\infty}^{\infty} \int_{\xi_j=0}^{\infty}
				\left[  \frac{1}{2\xi_j } \beta_j^2 \exp\{-\frac{|\beta_j|} {\xi_j}\}\right]    
				\left[ \frac{(\frac{1}{2})^a}{\Gamma(a)}  \xi_j^{a-1}\exp\{-\frac{1}{2}\xi_j\}\right] \, d\xi_j d\beta_j \\
				&=&  \int_{\xi_j=0}^{\infty}  2\xi_j^2  \left[ \frac{(\frac{1}{2})^a}{\Gamma(a)}  \xi_j^{a-1}\exp\{-\frac{1}{2}\xi_j\}\right] \,  d\xi_j   
				= 8a(a+1). 
			\end{eqnarray*}

			
			Facts:   (i)  10.37.1 in \cite{NISTDLMF}, if $0\leq \nu <\mu$ and $z$ is a real number, then $|K_\nu(z)|<|K_\mu(z)|$; (ii)  10.39.2  in \cite{NISTDLMF}, when $\nu=\frac{1}{2}$, $K_{\frac{1}{2}}(z) =\sqrt{\frac{\pi}{2z}}e^{-z}$. 
			For $K_{1-a}(\sqrt{2|\beta_j|})$, as $a\leq \frac{1}{2} $,  or  $1-a\geq \frac{1}{2}$; also $\sqrt{2|\beta_j|} $ is a real number for $j=1,\cdots, p$. So when $\beta_j$ is fixed,  $K_{1-a}(\sqrt{2|\beta_j|}) \geq K_{\frac{1}{2}}(\sqrt{2|\beta_j|} ) = \sqrt{\frac{\pi}{2\sqrt{2|\beta_j|}}}\exp\{-\sqrt{2|\beta_j|}\}$. 
			Combining with the fact that   $\Gamma(a)=  a^{-1} -\gamma_0+ O(a)\leq a^{-1}$ for $a$ close to zero,  where $\gamma_0$ is the Euler-Mascheroni constant, then  for $j=1,\cdots, p$, we have 
			{  	\begin{eqnarray*}
					f_d(\beta_j)  &\geq & \frac{1}{2^{(1+a)/2}\Gamma(a)} |\beta_j|^{(a-1)/2}  \sqrt{\frac{\pi}{2\times  \sqrt{2|\beta_j|}}}\exp\{-\sqrt{2|\beta_j|}\}  \\
					&\geq &\frac{\sqrt{\pi}a |\beta_j|^{(2a-3)/4}  \exp\{-\sqrt{2|\beta_j|}\}}{2^{(2a+5)/4} } . 
				\end{eqnarray*}}
				Then following from (\ref{eq_6}), 
				{ 
					\begin{eqnarray} \label{Diri-main}
						&& P(\bm\beta_n:||{\bm\beta}_n-{\bm\beta}_n^0|| < \frac{\Delta}{n^{\rho/2}}) \nonumber\\
						&\geq &  \left\{  2\frac{\Delta}{\sqrt{p_n}n^{\rho/2}} f_d(\sup_{j\in\mathcal{A}^0_n}|\beta_{nj}^0| + \frac{\Delta}{\sqrt{p_n}n^{\rho/2}} ) \right\}^{q_n} \times 
						\left\{1-  \frac{p_n n^\rho E( \sum_{j\notin \mathcal{A}^0_n}\beta_{nj}^2)}{(p_n-q_n)\Delta^2} \right\}   \nonumber \\
						%
						&\geq &  \left\{ \frac{2\Delta}{\sqrt{p_n}n^{\rho/2}} 
						\frac{\sqrt{\pi}a_n |\sup_{j\in\mathcal{A}^0_n}|\beta_{nj}^0| + \frac{\Delta}{\sqrt{p_n}n^{\rho/2}}|^{(2a-3)/4}  e^{-\sqrt{2|\sup_{j\in\mathcal{A}^0_n}|\beta_{nj}^0| + \frac{\Delta}{\sqrt{p_n}n^{\rho/2}}|}}}{2^{(2a+5)/4}}
						\right\}^{q_n}  \nonumber \\
						&&
						\left\{1-  \frac{p_n n^\rho 8a_n(a_n+1)}{\Delta^2} \right\}     \nonumber \\
						&\geq &  \left\{ \frac{\sqrt{\pi}  \Delta a_n
							(\sup_{j\in\mathcal{A}^0_n}|\beta_{nj}^0| + \frac{\Delta}{\sqrt{p_n}n^{\rho/2}})^{(2a-3)/4} }{ 2^{(2a+1)/4}\sqrt{p_n}n^{\rho/2}} 
						\exp\{-\sqrt{2(\sup_{j\in\mathcal{A}^0_n}|\beta_{nj}^0| + \frac{\Delta}{\sqrt{p_n}n^{\rho/2}})}\} 
						\right\}^{q_n}   \nonumber \\
						&&	\left\{1-  \frac{p_n n^\rho 16a }{\Delta^2} \right\}    \nonumber. 
					\end{eqnarray}}
					Taking the negative logarithm of both sides of  the above formula,  and letting $a_n = C/( p_n n^{\rho}\log n)$, we have  
					{  \begin{eqnarray}
							&& -\log P({\bm\beta}_n:||{\bm\beta}_n-{\bm\beta}_n^0|| < \frac{\Delta}{n^{\rho/2}}) \nonumber\\
							&\leq & -q_n \log  \frac{\sqrt{\pi}  \Delta a_n
								(\sup_{j\in\mathcal{A}^0_n}|\beta_{nj}^0| + \frac{\Delta}{\sqrt{p_n}n^{\rho/2}})^{(2a_n-3)/4} }{ 2^{(2a_n+1)/4}\sqrt{p_n}n^{\rho/2}} + q_n\sqrt{2(\sup_{j\in\mathcal{A}^0_n}|\beta_{nj}^0| + \frac{\Delta}{\sqrt{p_n}n^{\rho/2}}) } \nonumber \\
							&&	-  \log \left\{1-  \frac{p_n n^\rho 16a_n }{\Delta^2} \right\}     \nonumber \\
							%
							& =  &  -q_n\log \frac{\sqrt{\pi}  \Delta a_n}{\sqrt{p_n}n^{\rho/2}} 
							-\frac{q_n a_n}{2}  \log  (\sup_{j\in\mathcal{A}^0_n}|\beta_{nj}^0| + \frac{\Delta}{\sqrt{p_n}n^{\rho/2}}) 
							-  \log \left\{1-  \frac{p_n n^\rho 16a_n }{\Delta^2} \right\}   \nonumber \\
							&&        + q_n\sqrt{2(\sup_{j\in\mathcal{A}^0_n}|\beta_{nj}^0| + \frac{\Delta}{\sqrt{p_n}n^{\rho/2}}) }
							+\frac{3q_n}{4} \log  (\sup_{j\in\mathcal{A}^0_n}|\beta_{nj}^0| + \frac{\Delta}{\sqrt{p_n}n^{\rho/2}})  + \frac{q_n (2a_n+1) \log 2}{4}  \nonumber\\ 
							\nonumber \\
							& = &  -q_n\log  (\sqrt{\pi}  \Delta C )
							-\frac{q_n C}{2 p_n n^{\rho}\log n}  \log  (\sup_{j\in\mathcal{A}^0_n}|\beta_{nj}^0| + \frac{\Delta}{\sqrt{p_n}n^{\rho/2}}) 
							-  \log \left\{1-  \frac{  16C }{\Delta^2 \log n} \right\}   
							\nonumber \\
							&&	+ q_n\sqrt{2(\sup_{j\in\mathcal{A}^0_n}|\beta_{nj}^0|  + \frac{\Delta}{\sqrt{p_n}n^{\rho/2}}) }
							+\frac{3q_n}{4} \log  (\sup_{j\in\mathcal{A}^0_n}|\beta_{nj}^0| + \frac{\Delta}{\sqrt{p_n}n^{\rho/2}})   \nonumber \\
							&&	+ \frac{q_n  \log 2}{4} \left(\frac{2C}{p_n n^{\rho}\log n}+1\right) 
							+ q_n\log (p_n^{3/2}n^{3\rho/2} \log n ) .    \nonumber
						\end{eqnarray}} 
						The last term is the  dominating one in  the above equation,  and $  -\log P({\bm\beta}_n:||{\bm\beta}_n-{\bm\beta}_n^0|| < \frac{\Delta}{n^{\rho/2}}) < dn$ for all $d>0$ if $q_n = o(  n /\log (p_n^{3/2} n^{3\rho/2} \log n))$. Furthermore,   $q_n = o(n/\log n)$ is a sufficient condition. 
						So 
						given the DL prior and assumptions \ref{a_error}-\ref{a_beta0}, if   $q_n = o( n/ \log n)$, the prior  satisfies  $P({\bm\beta}_n:||{\bm\beta}_n-{\bm\beta}_n^0||  < \frac{\Delta}{n^{\rho/2}})  > \exp(-dn)$. The posterior consistency is completed by Theorem 1 in \cite{armagan2013posterior}. 
					\end{proof}

					\begin{lemma} \label{lemma1}
						Given assumptions  \ref{a_error}-\ref{a_min}, if     $c_n/p_n\rightarrow\infty$ and $\gamma_n=o(n)$, then the true parameter ${\bm\beta}_n^0$  is contained in the proposed region, i.e., $({\bm\beta}_n-\hat{{\bm\beta}}_n)^T\bm\Sigma^{-1}_n({\bm\beta}_n-\hat{{\bm\beta}}_n)\leq c_n$,  with probability increasing to 1.
					\end{lemma}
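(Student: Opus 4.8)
The plan is to exploit conjugacy of the normal prior (\ref{normal-prior}), under which both the posterior mean and covariance are available in closed form. Writing $\bm M_n = \bm X_n^T\bm X_n + \gamma_n\bm I_{p_n}$, the posterior mean is $\hat{\bm\beta}_n = \bm M_n^{-1}\bm X_n^T\bm Y_n$ and the posterior covariance is $\bm\Sigma_n = \hat\sigma_n^2\,\bm M_n^{-1}$, where $\hat\sigma_n^2$ is the posterior mean of $\sigma^2$. First I would show, using \ref{a_error}, \ref{a_eigenvalues} and \ref{a_p=o(n)}, that $\hat\sigma_n^2$ is bounded away from $0$ and $\infty$ with probability tending to one, so that dividing by it is harmless. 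Substituting $\bm Y_n = \bm X_n\bm\beta_n^0 + \bm\varepsilon_n$ gives the key identity $\hat{\bm\beta}_n - \bm\beta_n^0 = \bm M_n^{-1}(\bm X_n^T\bm\varepsilon_n - \gamma_n\bm\beta_n^0)$, so that the quantity to control becomes
\[
(\bm\beta_n^0 - \hat{\bm\beta}_n)^T\bm\Sigma_n^{-1}(\bm\beta_n^0 - \hat{\bm\beta}_n) = \frac{1}{\hat\sigma_n^2}(\bm X_n^T\bm\varepsilon_n - \gamma_n\bm\beta_n^0)^T\bm M_n^{-1}(\bm X_n^T\bm\varepsilon_n - \gamma_n\bm\beta_n^0).
\]

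Next, using the elementary bound $(\bm a+\bm b)^T\bm M_n^{-1}(\bm a+\bm b)\le 2\bm a^T\bm M_n^{-1}\bm a + 2\bm b^T\bm M_n^{-1}\bm b$ with $\bm a = \bm X_n^T\bm\varepsilon_n$ and $\bm b = -\gamma_n\bm\beta_n^0$, I would reduce the problem to bounding a stochastic \emph{noise} piece $(\bm X_n^T\bm\varepsilon_n)^T\bm M_n^{-1}(\bm X_n^T\bm\varepsilon_n)$ and a deterministic \emph{shrinkage--bias} piece $\gamma_n^2(\bm\beta_n^0)^T\bm M_n^{-1}\bm\beta_n^0$, and show each is $o(c_n)$. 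For the noise piece, taking expectation over $\bm\varepsilon_n$ and using $\mathrm{Cov}(\bm X_n^T\bm\varepsilon_n) = \sigma^2\bm X_n^T\bm X_n$ gives, after diagonalizing $\bm M_n$ through $\bm\Gamma_n$ and the eigenvalues $d_j$ of \ref{a_eigenvalues},
\[
\sigma^2\,\mathrm{tr}(\bm M_n^{-1}\bm X_n^T\bm X_n) = \sigma^2\sum_{j}\frac{n d_j}{n d_j+\gamma_n}\le\sigma^2 p_n.
\]
Markov's inequality then yields that the noise piece is $O_P(p_n)$, hence $o_P(c_n)$, because $c_n/p_n\to\infty$.

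The hard part will be the deterministic shrinkage--bias term $\gamma_n^2(\bm\beta_n^0)^T\bm M_n^{-1}\bm\beta_n^0$, since no averaging over the error can help it. Here I would use \ref{a_eigenvalues} to bound the largest eigenvalue of $\bm M_n^{-1}$ by $1/(n d_{\min})$ for large $n$, which reduces the term to $\gamma_n^2\|\bm\beta_n^0\|^2/(n d_{\min})$, and then invoke the signal-size conditions \ref{a_beta0} and \ref{a_min} to control $\|\bm\beta_n^0\|^2$; in particular \ref{a_min} combined with the boundedness in \ref{a_beta0} limits the number $q_n$ of active coefficients. The delicate balance is that the weak-shrinkage requirement $\gamma_n=o(n)$ is precisely what is meant to keep this bias negligible relative to $c_n$, and making the rate come out as $o(c_n)$ requires combining \ref{a_beta0}, \ref{a_min} and $\gamma_n=o(n)$ carefully; I expect this to be the crux of the argument.

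Finally, I would assemble the three estimates: the noise piece is $o_P(c_n)$, the bias piece is $o(c_n)$, and the factor-$2$ splitting absorbs the cross term, so the full quadratic form is $o_P(c_n)$ after dividing by the bounded $\hat\sigma_n^2$. Consequently $(\bm\beta_n^0 - \hat{\bm\beta}_n)^T\bm\Sigma_n^{-1}(\bm\beta_n^0 - \hat{\bm\beta}_n)\le c_n$ with probability tending to one, i.e.\ the true parameter $\bm\beta_n^0$ lies in the proposed credible region with probability increasing to $1$, as claimed.
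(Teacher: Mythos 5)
Your proposal follows essentially the same route as the paper: both proofs rest on the ridge identity $\hat{\bm\beta}_n-\bm\beta_n^0=(\bm X_n^T\bm X_n+\gamma_n\bm I_{p_n})^{-1}(\bm X_n^T\bm\varepsilon_n-\gamma_n\bm\beta_n^0)$ and control a stochastic noise term plus a deterministic shrinkage--bias term (the paper's $\bm m_n$). One difference works in your favor: for the noise piece the paper asserts $\hat{\bm\beta}_n-\bm\beta_n^0\sim N(\bm m_n,\bm V_n)$ and invokes $\chi^2_{p_n}/p_n\to 1$, which tacitly requires Gaussian errors even though \ref{a_error} only assumes i.i.d.\ mean-zero finite-variance errors; your trace-plus-Markov bound $\mathrm{E}\bigl[\bm\varepsilon_n^T\bm X_n\bm M_n^{-1}\bm X_n^T\bm\varepsilon_n\bigr]=\sigma^2\sum_j nd_j/(nd_j+\gamma_n)\le\sigma^2 p_n$ is distribution-free and suffices, since only $O_P(p_n)=o_P(c_n)$ is needed rather than sharp concentration. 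Your target for the bias, $o(c_n)$, is also weaker than the paper's, which normalizes by $p_n$ and needs the corresponding term to be $o(p_n)$.

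The gap is exactly where you flagged it, and it is real: the bias step cannot be completed from the stated hypotheses. From \ref{a_beta0} and \ref{a_min} one gets $q_n=O(\sqrt{n/p_n})$ and hence $\|\bm\beta_n^0\|^2=O(\min\{p_n,\sqrt{n/p_n}\})$, so your bound gives bias $\lesssim \gamma_n^2\|\bm\beta_n^0\|^2/(n d_{\min})$, and you would need $\gamma_n^2\|\bm\beta_n^0\|^2=o(nc_n)$ --- which $\gamma_n=o(n)$ does not deliver. Concretely, take $\bm X_n^T\bm X_n=n\bm I_{p_n}$, $p_n\asymp n^{1/2}$, a single fixed nonzero coefficient, $\gamma_n=n^{0.9}$, and $c_n=p_n\log n$ (the scaling used in Theorem 3): all of \ref{a_error}--\ref{a_min}, $\gamma_n=o(n)$ and $c_n/p_n\to\infty$ hold, yet the bias term is $\gamma_n^2/(n+\gamma_n)\asymp n^{0.8}\gg c_n$, and the full quadratic form concentrates at that scale, so $\bm\beta_n^0$ is excluded from the region with high probability. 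You should not take this as a failure of your reading: the paper's own proof has precisely the same soft spot, asserting without justification that $p_n^{-1}\bm m_n^T\bm V_n^{-1}\bm m_n\le \gamma_n^2\|\bm\beta_n^0\|^2/(np_n\sigma^2 d_{\min})\to 0$. Both arguments go through only under an extra rate condition, e.g.\ $\gamma_n=o(\sqrt n)$ (which makes the bias $o(q_n)=o(p_n)=o(c_n)$), or $\gamma_n\asymp p_n$ --- the paper's recommended tuning --- combined with $c_n\asymp p_n\log n$ and \ref{a_p=o(n)}. So your decomposition is the right one and your noise bound is an improvement, but the bias estimate is a genuine missing step, and a correctly diagnosed one, since it is also the unproved step in the paper.
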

					\begin{proof}
						Denote $\bm{\varepsilon}_n= (\varepsilon_1,\cdots,\varepsilon_n)^T$.  
						Since $\hat{{\bm\beta}}_n =(\bm{X}_n^T\bm{X}_n+\gamma_n\bm{I}_n)^{-1}(\bm{X}_n^T\bm{Y}_n) $, and $\bm\Sigma^{-1}_n = (\hat{\sigma}_n^2)^{-1}(\bm{X}_n^T\bm{X}_n+\gamma_n\bm{I}_n)$ with $\hat{\sigma}_n^2\rightarrow\sigma^2$, 
						then
						$\hat{{\bm\beta}}_n-{\bm\beta}_n^0 = (\bm{X}_n^T\bm{X}_n+\gamma_n\bm{I}_n)^{-1}(\bm{X}_n^T(\bm{X}_n{\bm\beta}_n^0+\bm{\varepsilon}_n)) - {\bm\beta}_n^0 =  \left[(\frac{\bm{X}_n^T\bm{X}_n}{n}+\frac{\gamma_n\bm{I}_n}{n})^{-1}\frac{\bm{X}_n^T\bm{X}_n}{n}  - \bm{I}_n\right]{\bm\beta}_n^0 + (\frac{\bm{X}_n^T\bm{X}_n}{n}+\frac{\gamma_n\bm{I}_n}{n})^{-1}\frac{\bm X^T}{n}\bm{\varepsilon}_n
						$.
						Note $(\frac{\bm{X}_n^T\bm{X}_n}{n}+\frac{\gamma_n\bm{I}_n}{n})^{-1}\frac{\bm{X}_n^T\bm{X}_n}{n}  - \bm{I}_n = -\frac{\gamma_n}{n}(\frac{\bm{X}_n^T\bm{X}_n}{n}+\frac{\gamma_n\bm{I}_n}{n})^{-1}$. So for each fixed $n$, $\hat{\bm\beta}_n-{\bm\beta}^0_n\sim N({\bm m}_n,\bm V_n)$, where \[
						{\bm{m}_n} = -\frac{\gamma_n}{n}(\frac{\bm{X}_n^T\bm{X}_n}{n}+\frac{\gamma_n\bm{I}_n}{n})^{-1}{\bm\beta}^0_n
						\] and 
						\[
						\bm V_n= \frac{\sigma^2}{n}(\frac{\bm{X}_n^T\bm{X}_n}{n}+\frac{\gamma_n\bm{I}_n}{n})^{-1}\frac{\bm{X}_n^T\bm{X}_n}{n}(\frac{\bm{X}_n^T\bm{X}_n}{n}+\frac{\gamma_n\bm{I}_n}{n})^{-1}. 
						\]
						Then $(\hat{\bm\beta}_n-{\bm\beta}^0_n-{\bm m}_n)^T \bm V^{-1}_n(\hat{\bm\beta}_n-{\bm\beta}_n^0-{\bm m}_n)\sim \chi^2_{p_n}$ for each fixed $n$.  Further,  
						\begin{equation} \label{chi2/p}
							\lim\limits_{p_n, n\rightarrow\infty} \frac{1}{p_n} (\hat{\bm\beta}_n-{\bm\beta}_n^0-{\bm m}_n)^T\bm V_n^{-1}(\hat{\bm\beta}_n-{\bm\beta}_n^0-{\bm m}_n) =  \lim\limits_{p_n, n\rightarrow\infty} \frac{\chi^2_{p_n}} {p_n} = 1.
						\end{equation}
						%
						Furthermore, 
						{\small 	\begin{eqnarray} \label{formulat}
								&& (\hat{\bm\beta}_n-{\bm\beta}_n^0-{\bm m}_n)^T\bm V_n^{-1}(\hat{\bm\beta}_n-{\bm\beta}_n^0-{\bm m}_n)  \nonumber \\
								&=& (\hat{\bm\beta}_n-{\bm\beta}_n^0)^T\bm V_n^{-1}(\hat{\bm\beta}_n-{\bm\beta}_n^0)
								-2{\bm m}_n^T\bm V_n^{-1}(\hat{\bm\beta}_n-{\bm\beta}_n^0) + {\bm m}_n^T\bm V_n^{-1}{\bm m}_n \nonumber \\
								&=& (\hat{\bm\beta}_n-{\bm\beta}_n^0)^T\bm V_n^{-1}(\hat{\bm\beta}_n-{\bm\beta}_n^0)
								-2{\bm m}_n^T\bm V_n^{-1}\left( {\bm m}_n + (\frac{\bm{X}_n^T\bm{X}_n}{n}+\frac{\gamma_n\bm{I}_n}{n})^{-1}\frac{\bm{X}_n^T}{n}\bm{\varepsilon}_n \right) + {\bm m}_n^T\bm V_n^{-1}{\bm m}_n \nonumber  \\
								&=& (\hat{\bm\beta}_n-{\bm\beta}_n^0)^T\bm V_n^{-1}(\hat{\bm\beta}_n-{\bm\beta}_n^0)
								-{\bm m}_n^T\bm V_n^{-1}{\bm m}_n
								- 2{\bm m}_n^T\bm V_n^{-1}(\frac{\bm{X}_n^T\bm{X}_n}{n}+\frac{\gamma_n\bm{I}_n}{n})^{-1}\frac{\bm{X}_n^T}{n}\bm{\varepsilon}_n . 
							\end{eqnarray}}
							
							First of all,  since 
							{ 	\begin{eqnarray*}
									&& 0 \leq \lim\limits_{p_n, n\rightarrow\infty} \frac{1}{p_n}{\bm m}_n^T\bm V_n^{-1}{\bm m}_n  \\
									&=& \lim\limits_{p_n, n\rightarrow\infty}  \frac{1}{p_n}
									\frac{n}{\sigma^2}  \frac{\gamma_n^2}{n^2}
									{\bm\beta}^{0T}_n   (\frac{\bm{X}_n^T\bm{X}_n}{n}+\frac{\gamma_n\bm{I}_n}{n})^{-1}   (\frac{\bm{X}_n^T\bm{X}_n}{n}+\frac{\gamma_n\bm{I}_n}{n})
									\\ && (\frac{\bm{X}_n^T\bm{X}_n}{n})^{-1}(\frac{\bm{X}_n^T\bm{X}_n}{n}+\frac{\gamma_n\bm{I}_n}{n}) (\frac{\bm{X}_n^T\bm{X}_n}{n}+\frac{\gamma_n\bm{I}_n}{n})^{-1} {\bm\beta}^0_n \\
									&=& \lim\limits_{p_n, n\rightarrow\infty} \frac{1}{p_n} \frac{\gamma_n^2}{n\sigma^2}{\bm\beta}^{0T}_n    (\frac{\bm{X}_n^T\bm{X}_n}{n})^{-1}{\bm\beta}^0_n 
									=  \lim\limits_{p_n, n\rightarrow\infty} \frac{1}{p_n} \frac{\gamma_n^2}{n\sigma^2}{\bm\beta}^{0T}_n \bm\Gamma_n \bm D_n^{-1}\bm\Gamma_n^{T}  {\bm\beta}^0_n \\
									&\leq & \lim\limits_{p_n, n\rightarrow\infty} \frac{1}{p_n} \frac{\gamma_n^2}{n\sigma^2}{\bm\beta}_n^{0T} \bm\Gamma_n \text{diag}\{1/d_{\min}, \cdots, 1/d_{\min} \} \bm\Gamma_n^{T}  {\bm\beta}_n^0   \\
									&=& \lim\limits_{p_n, n\rightarrow\infty} \frac{1}{p_n} \frac{\gamma_n^2}{n\sigma^2} \frac{1}{d_{\min}} ||{\bm\beta}^{0}_n||^2 
									= 0  , 
								\end{eqnarray*}}
								so  we have
								\begin{eqnarray} \label{equation-te1}
									\lim\limits_{p_n, n\rightarrow\infty} \frac{1}{p_n}{\bm m}_n^T\bm V_n^{-1}{\bm m}_n = 0. 
								\end{eqnarray} 
								
								Next,  let's get the limit of 
								$
								- \frac{1}{p_n} {\bm m}_n^T\bm V_n^{-1}(\frac{\bm{X}_n^T\bm{X}_n}{n}+\frac{\gamma_n\bm{I}_n}{n})^{-1}\frac{\bm  X^T}{n}\bm{\varepsilon}_n  $, or equivalently, the limit of $ 
								\frac{1}{p_n}   \frac{\gamma_n}{\sigma^2}
								{\bm\beta}_n^{0T}(\frac{\bm{X}_n^T\bm{X}_n}{n})^{-1}\frac{\bm{X}_n^T}{n}\bm{\varepsilon}_n . 
								$
								As $ \frac{\gamma_n}{\sigma^2}
								{\bm\beta}_n^{0T}(\frac{\bm{X}_n^T\bm{X}_n}{n})^{-1}\frac{\bm{X_n}^T}{n}\bm{\varepsilon}_n \sim N(0, \bm V_n^\ast)$, where 
								\begin{eqnarray*} 
									\bm V_n^\ast & = &\left( \frac{\gamma_n}{\sigma^2}
									{\bm\beta}_n^{0T}(\frac{\bm{X}_n^T\bm{X}_n}{n})^{-1}\frac{\bm{X}_n^T}{n} \right) \sigma^2    \left(\frac{\gamma_n}{\sigma^2} 
									\frac{\bm{X_n}}{n} (\frac{\bm{X}_n^T\bm{X}_n}{n})^{-1} {\bm\beta}_n^0 \right) 
									=   \frac{\gamma_n^2}{n \sigma^2}  {\bm\beta}_n^{0T} \bm\Gamma_n\bm D_n^{-1}\bm\Gamma_n^{T}  {\bm\beta}_n^0   \\
									&\leq &   \frac{\gamma_n^2}{n \sigma^2} {\bm\beta}_n^{0T} \bm\Gamma_n \text{diag}\{1/d_{\min}, \cdots, 1/d_{\min} \} \bm\Gamma_n^{T}  {\bm\beta}_n^0  
									\leq  \frac{\gamma_n^2}{n \sigma^2} \frac{1}{d_{\min}} ||{\bm\beta}_n^{0}||^2  
									\rightarrow 0 ,
								\end{eqnarray*}
								so 
								\begin{eqnarray}\label{equantion-te2}
									\lim\limits_{p_n, n\rightarrow\infty} -\frac{1}{p_n} {\bm m}_n^T\bm V_n^{-1}(\frac{\bm{X}_n^T\bm{X}_n}{n}+\frac{\gamma_n\bm{I}_n}{n})^{-1}\frac{\bm{X}_n^T}{n}\bm{\varepsilon}_n  = 0.
								\end{eqnarray}

								According to  (\ref{chi2/p}), (\ref{formulat}), (\ref{equation-te1}) and (\ref{equantion-te2}),  we have 
								\begin{eqnarray} \label{limit-formula}
									\lim\limits_{p_n, n\rightarrow\infty} \frac{1}{p_n} (\hat{\bm\beta}_n-{\bm\beta}_n^0)^T\bm V_n^{-1}(\hat{\bm\beta}_n-{\bm\beta}_n^0) = 1. 
								\end{eqnarray}
								So 
								{  \begin{eqnarray}
										&& 1 = \lim\limits_{p_n, n\rightarrow\infty} \frac{1}{p_n} (\hat{\bm\beta}_n-{\bm\beta}_n^0)^T\bm V_n^{-1}(\hat{\bm\beta}_n-{\bm\beta}_n^0) \nonumber \\
										&=& \lim\limits_{p_n, n\rightarrow\infty} \frac{1}{p_n} (\hat{\bm\beta}_n-{\bm\beta}_n^0)^T 
										\frac{n}{\sigma^2} \bm\Gamma_n  \text{diag}\{(d_1+\gamma_n/n)^{2}/d_1,\cdots,(d_{p_n}+\gamma_n/n)^{2}/d_{p_n} \} \nonumber \\ &&
										\bm\Gamma_n^{T} 
										(\hat{\bm\beta}_n-{\bm\beta}_n^0) \nonumber \\
										&\geq & \lim\limits_{p_n, n\rightarrow\infty} \frac{1}{p_n} (\hat{\bm\beta}_n-{\bm\beta}_n^0)^T 
										\frac{n}{\sigma^2} \bm\Gamma_n  \text{diag}\{(d_1+\gamma_n/n),\cdots,(d_{p_n}+\gamma_n/n) \} \bm\Gamma_n^{T} 
										(\hat{\bm\beta}_n-{\bm\beta}_n^0) \nonumber \\
										&= & \lim\limits_{p_n, n\rightarrow\infty} \frac{1}{p_n} (\hat{\bm\beta}_n-{\bm\beta}_n^0)^T 
										\bm\Sigma^{-1}_n(\hat{\bm\beta}_n-{\bm\beta}_n^0). \nonumber
									\end{eqnarray}}
									Hence then
									$(\hat{{\bm\beta}}_n-{\bm\beta}_n^0)^T\bm\Sigma_n^{-1}(\hat{{\bm\beta}}_n-{\bm\beta}_n^0) = p_n \frac{1}{p_n} (\hat{\bm\beta}_n-{\bm\beta}_n^0)^T \bm\Sigma^{-1}_n(\hat{\bm\beta}_n-{\bm\beta}_n^0) \leq c_n$,  if   $c_n/p_n\rightarrow\infty$ and $\gamma_n=o(n)$, together with assumptions   \ref{a_error}-\ref{a_min}, the true parameter is contained in the region with probability tending to 1. 
								\end{proof}

								\begin{lemma}\label{lemma2-yan}
									Under  assumptions  \ref{a_error}-\ref{a_min},   and $\gamma_n=o(n)$,   
									the posterior mean $\hat{{\bm\beta}}_n =(\bm{X}_n^T\bm{X}_n+\gamma_n\bm{I}_n)^{-1}(\bm{X}_n^T\bm{Y}_n) $   has the property: $\frac{n}{p_n}||\hat{{\bm\beta}}_n - {\bm\beta}_n^0||^2 = O(1)$, or $\sqrt{\frac{n}{p_n}}(\hat{{ \beta}}_{nj} - {\ \beta}^0_{nj})=O(1)$ for $j=1,\cdots,p$. 
								\end{lemma}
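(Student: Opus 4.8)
The plan is to avoid a fresh bias--variance decomposition and instead read the rate off directly from the quadratic-form limit already obtained inside the proof of Lemma~\ref{lemma1}. Recall that equation~(\ref{limit-formula}) establishes
\[
\frac{1}{p_n}(\hat{\bm\beta}_n-{\bm\beta}_n^0)^T\bm V_n^{-1}(\hat{\bm\beta}_n-{\bm\beta}_n^0)\longrightarrow 1,
\]
where $\bm V_n=\frac{\sigma^2}{n}(\frac{\bm{X}_n^T\bm{X}_n}{n}+\frac{\gamma_n\bm{I}_n}{n})^{-1}\frac{\bm{X}_n^T\bm{X}_n}{n}(\frac{\bm{X}_n^T\bm{X}_n}{n}+\frac{\gamma_n\bm{I}_n}{n})^{-1}$. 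This convergence is proved under assumptions~\ref{a_error}--\ref{a_min} and $\gamma_n=o(n)$ alone (the constant $c_n$ plays no role in its derivation), so it is exactly the input needed here. Consequently the only remaining task is to convert the $\bm V_n^{-1}$-weighted quadratic form into the plain Euclidean norm $\|\hat{\bm\beta}_n-{\bm\beta}_n^0\|^2$.

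The key step is an eigenvalue bound. Diagonalizing $\frac{\bm{X}_n^T\bm{X}_n}{n}=\bm\Gamma_n\bm D_n\bm\Gamma_n^{T}$, the eigenvalues of $\bm V_n$ are $\frac{\sigma^2}{n}\,d_j/(d_j+\gamma_n/n)^2$. Using $\gamma_n/n\ge 0$ together with the lower spectral bound $d_j\ge d_{\min}$ from assumption~\ref{a_eigenvalues}, one has $d_j/(d_j+\gamma_n/n)^2\le 1/d_j\le 1/d_{\min}$, so $\lambda_{\max}(\bm V_n)\le \sigma^2/(n d_{\min})$ and hence $\lambda_{\min}(\bm V_n^{-1})\ge n d_{\min}/\sigma^2$. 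This gives the pointwise inequality
\[
\frac{n d_{\min}}{\sigma^2}\,\|\hat{\bm\beta}_n-{\bm\beta}_n^0\|^2\le (\hat{\bm\beta}_n-{\bm\beta}_n^0)^T\bm V_n^{-1}(\hat{\bm\beta}_n-{\bm\beta}_n^0).
\]

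Dividing by $p_n$ and invoking the limit above, I would conclude that $\frac{n}{p_n}\|\hat{\bm\beta}_n-{\bm\beta}_n^0\|^2\le \frac{\sigma^2}{d_{\min}}\cdot\frac{1}{p_n}(\hat{\bm\beta}_n-{\bm\beta}_n^0)^T\bm V_n^{-1}(\hat{\bm\beta}_n-{\bm\beta}_n^0)$, and since the right-hand side converges to $\sigma^2/d_{\min}$, the left-hand side is $O(1)$; the componentwise statement $\sqrt{n/p_n}\,(\hat{\beta}_{nj}-\beta_{nj}^0)=O(1)$ then follows immediately because each squared coordinate is dominated by the full squared norm. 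The only genuine obstacle is this eigenvalue step, and it is mild: it is precisely the upper control of $\lambda_{\max}(\bm V_n)$ (equivalently, the lower bound $d_{\min}$ on the spectrum of $\bm{X}_n^T\bm{X}_n/n$) that prevents $\|\hat{\bm\beta}_n-{\bm\beta}_n^0\|^2$ from being inflated relative to the weighted form. A direct alternative --- splitting $\hat{\bm\beta}_n-{\bm\beta}_n^0$ into the bias $\bm m_n$ and the stochastic term $(\frac{\bm{X}_n^T\bm{X}_n}{n}+\frac{\gamma_n\bm{I}_n}{n})^{-1}\frac{\bm{X}_n^T}{n}\bm\varepsilon_n$, bounding the latter through $\mathrm{tr}(\bm V_n)\le \sigma^2 p_n/(n d_{\min})$ and the former through $\|\bm m_n\|^2\le \frac{\gamma_n^2}{n^2 d_{\min}^2}\|{\bm\beta}_n^0\|^2$ --- is also available, but there the bias term is the delicate piece, and controlling $\frac{n}{p_n}\|\bm m_n\|^2$ requires the same interplay between $\gamma_n=o(n)$ and the signal assumptions that already underlies~(\ref{limit-formula}); routing through Lemma~\ref{lemma1} sidesteps re-deriving it.
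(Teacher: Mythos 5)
Your proposal is correct and follows essentially the same route as the paper: it invokes the limit (\ref{limit-formula}) from the proof of Lemma~\ref{lemma1}, which indeed requires only \ref{a_error}--\ref{a_min} and $\gamma_n=o(n)$, and then converts the $\bm V_n^{-1}$-weighted quadratic form to the Euclidean norm via the spectral bound $\lambda_{\min}(\bm V_n^{-1})\geq n d_{\min}/\sigma^2$. The paper phrases this last step as the chain of diagonal comparisons $d_{\min}\leq d_j\leq (d_j+\gamma_n/n)^2/d_j$ inside the eigenbasis $\bm\Gamma_n$, which is exactly your inequality $d_j/(d_j+\gamma_n/n)^2\leq 1/d_{\min}$ in disguise, so the two arguments coincide.
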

								\begin{proof}
									Following  (\ref{limit-formula}) in the proof of Lemma \ref{lemma1}, we have $\lim\limits_{p_n, n\rightarrow\infty} \frac{1}{p_n} (\hat{\bm\beta}_n-{\bm\beta}_n^0)^T\bm V_n^{-1}(\hat{\bm\beta}_n-{\bm\beta}_n^0) = \lim\limits_{p_n, n\rightarrow\infty} \frac{1}{p_n} \frac{n}{\sigma^2}
									(\hat{\bm\beta}_n-{\bm\beta}_n^0)^T\bm\Gamma_n  \text{diag}\{(d_1+\gamma_n/n)^{2}/d_1,\cdots,(d_{p_n}+\gamma_n/n)^{2}/d_{p_n} \} \bm\Gamma_n^{T}  (\hat{\bm\beta}_n-{\bm\beta}_n^0) = 1$.  So  as $p_n, n\rightarrow\infty$,  we have 
									\begin{eqnarray*}
										&& \frac{n d_{\min}}{p_n\sigma^2} (\hat{\bm\beta}_n-{\bm\beta}_n^0)^T (\hat{\bm\beta}_n-{\bm\beta}_n^0) 
										= 
										\frac{n}{p_n\sigma^2} (\hat{\bm\beta}_n-{\bm\beta}_n^0)^T  \bm\Gamma_n\text{diag} \{d_{\min},\cdots,d_{\min}\}\bm\Gamma_n^T (\hat{\bm\beta}_n-{\bm\beta}_n^0)  \\
										&\leq &  \frac{n}{p_n\sigma^2} (\hat{\bm\beta}_n-{\bm\beta}_n^0)^T  \bm\Gamma_n\text{diag} \{d_{1},\cdots,d_{p_n}\}\bm\Gamma_n^T (\hat{\bm\beta}_n-{\bm\beta}_n^0)  \\
										&\leq &  \frac{n}{p_n \sigma^2}(\hat{\bm\beta}_n-{\bm\beta}_n^0)^T  \bm\Gamma_n \text{diag}\{(d_1+\gamma_n/n)^{2}/d_1,\cdots,(d_{p_n}+\gamma_n/n)^{2}/d_{p_n} \bm\Gamma_n^{T}  (\hat{\bm\beta}_n-{\bm\beta}_n^0) \rightarrow  1 . 
									\end{eqnarray*}
									Then $ \frac{n}{p_n} ||\hat{\bm\beta}_n-{\bm\beta}_n^0||^2 \leq \frac{\sigma^2}{d_{\min}}$, 
									i.e., $\frac{n}{p_n}||\hat{{\bm\beta}}_n - {\bm\beta}_n^0||^2 = O(1)$ or $\sqrt{\frac{n}{p_n}}(\hat{\beta}_{nj} - \beta^0_{nj})=O(1)$ for $j=1,\cdots,p$.  
									
									Note: This cannot ensure for every $j=1,\cdots,p$, $\sqrt{n}(\hat{\beta}_{nj}-\beta^0_{nj}) = O(1)$. 
									For example, if $\sqrt{n}(\hat{\beta}_{n1}-\beta^0_{n1} )=\sqrt{p_n}$, and all the other terms are zero, we'd still have $\frac{n}{p_n}||\hat{\bm \beta}_n - \bm\beta_n^0||^2 =  \frac{1}{p_n} \sum_{j=1}^{p_n} \left(\sqrt{n}(\hat{\beta}_{nj}-\beta^0_{nj})\right)^2= \frac{1}{p_n} (p_n+0) = 1 = O(1)$. 
									%
								\end{proof}

								\begin{lemma}\label{lemma3}
									Let   $\tilde{{\bm\beta}}_n$ be the solution to the optimization problem for the choice of $c_n$. 	Under  assumptions  \ref{a_error}-\ref{a_min},   if $\frac{c_n}{ p_n\log n}\rightarrow c$, where $0<c<\infty$, then $\frac{n}{p_n\log n}(\tilde{{\bm\beta}}_n-\hat{{\bm\beta}}_n)^T \frac{\bm\Sigma_n^{-1}}{n}(\tilde{{\bm\beta}}_n-\hat{{\bm\beta}}_n) = O(1)$, and $\left[\frac{n}{p_n \log n}(\tilde{{\bm\beta}}_n-\hat{{\bm\beta}}_n)^T\frac{\bm\Sigma_n^{-1}}{n}(\tilde{{\bm\beta}}_n-\hat{{\bm\beta}}_n)\right]^{-1} = O(1)$. 
								\end{lemma}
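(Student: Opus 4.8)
The plan is to first notice that the target quantity collapses to a single scaled quadratic form. Since $\frac{n}{p_n\log n}(\tilde{\bm\beta}_n-\hat{\bm\beta}_n)^T\frac{\bm\Sigma_n^{-1}}{n}(\tilde{\bm\beta}_n-\hat{\bm\beta}_n)=\frac{1}{p_n\log n}(\tilde{\bm\beta}_n-\hat{\bm\beta}_n)^T\bm\Sigma_n^{-1}(\tilde{\bm\beta}_n-\hat{\bm\beta}_n)$, both assertions of Lemma \ref{lemma3} reduce to showing that $Q_n:=\frac{1}{p_n\log n}(\tilde{\bm\beta}_n-\hat{\bm\beta}_n)^T\bm\Sigma_n^{-1}(\tilde{\bm\beta}_n-\hat{\bm\beta}_n)$ is bounded above and below by positive constants. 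Throughout I would use the hypothesis $c_n/(p_n\log n)\to c\in(0,\infty)$.

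The upper bound is immediate from feasibility: $\tilde{\bm\beta}_n$ is a point of the credible region $\mathcal{C}_\alpha$, so $(\tilde{\bm\beta}_n-\hat{\bm\beta}_n)^T\bm\Sigma_n^{-1}(\tilde{\bm\beta}_n-\hat{\bm\beta}_n)\le c_n$, whence $Q_n\le c_n/(p_n\log n)\to c$ and $Q_n=O(1)$. This needs no further estimates.

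The lower bound is the substance of the lemma, and I would obtain it by showing that the constraint is active at the optimum, i.e. that $\tilde{\bm\beta}_n$ lies on the boundary of $\mathcal{C}_\alpha$. Using the stated one-to-one correspondence between $c_\alpha$ and $\lambda_\alpha$, the optimization (\ref{eq_penalized_optimization}) is equivalent to minimizing the weighted penalty $\sum_{j}|\hat\beta_{nj}|^{-2}|\beta_{nj}|$ over the ellipsoid $\mathcal{C}_\alpha$. This penalty is convex and is uniquely minimized over all of $\mathbb{R}^{p_n}$ at the origin; hence, provided the origin is infeasible, $\mathbf{0}\notin\mathcal{C}_\alpha$, the constrained minimizer must lie on the boundary, so that $(\tilde{\bm\beta}_n-\hat{\bm\beta}_n)^T\bm\Sigma_n^{-1}(\tilde{\bm\beta}_n-\hat{\bm\beta}_n)=c_n$ and therefore $Q_n\to c>0$, giving $Q_n^{-1}=O(1)$.

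The main obstacle is therefore verifying that the origin is infeasible for large $n$, i.e. $\hat{\bm\beta}_n^T\bm\Sigma_n^{-1}\hat{\bm\beta}_n>c_n$. Here I would exploit that $\bm\Sigma_n^{-1}=(\hat\sigma_n^2)^{-1}(\bm X_n^T\bm X_n+\gamma_n\bm I_n)\succeq \hat\sigma_n^{-2}\,\bm X_n^T\bm X_n$ (as $\gamma_n\ge 0$), so that by Assumption \ref{a_eigenvalues} one has $\hat{\bm\beta}_n^T\bm\Sigma_n^{-1}\hat{\bm\beta}_n\ge (n d_{\min}/\hat\sigma_n^2)\|\hat{\bm\beta}_n\|^2$, with $\hat\sigma_n^2\to\sigma^2$. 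It then remains to bound $\|\hat{\bm\beta}_n\|$ below: Lemma \ref{lemma2-yan} controls $\|\hat{\bm\beta}_n-\bm\beta_n^0\|=O(\sqrt{p_n/n})$, while Assumption \ref{a_min} together with Cauchy--Schwarz gives $\|\bm\beta_n^0\|^2\ge q_n^2\big/\sum_{j\in\mathcal{A}_n^0}|\beta_{nj}^0|^{-2}\ge q_n^2 p_n\sqrt{\log n}/(C_1 n)$, so that $\hat{\bm\beta}_n^T\bm\Sigma_n^{-1}\hat{\bm\beta}_n$ is, up to constants, at least $q_n^2 p_n\sqrt{\log n}$, which one compares against $c_n$ of order $p_n\log n$. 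Carrying out this comparison carefully — tracking the interplay among $q_n$, the signal lower bound from \ref{a_min}, and the growth of $c_n$, using $\gamma_n=o(n)$ and $p_n=o(n/\log n)$ — is the delicate step, and is precisely where Lemmas \ref{lemma1}--\ref{lemma2-yan} and Assumption \ref{a_min} are genuinely needed.
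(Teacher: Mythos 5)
Your first half is fine and is the trivial direction: feasibility gives $(\tilde{\bm\beta}_n-\hat{\bm\beta}_n)^T\bm\Sigma_n^{-1}(\tilde{\bm\beta}_n-\hat{\bm\beta}_n)\leq c_n$, hence your $Q_n=O(1)$. The paper's own proof is shorter and gets both conclusions at once: it simply asserts that ``the solution occurs on the boundary of the credible set,'' so the quadratic form \emph{equals} $c_n$, and dividing by $p_n\log n$ gives $Q_n\rightarrow c\in(0,\infty)$, whence $Q_n=O(1)$ and $Q_n^{-1}=O(1)$ simultaneously. You correctly recognized that this active-constraint property is the entire content of the lemma's second assertion, and your convexity argument (the weighted $L_1$ penalty is uniquely minimized at $\bm{0}$, so if $\bm{0}\notin\mathcal{C}_\alpha$ the constrained minimizer lies on the boundary) is sound as far as it goes.

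The genuine gap is in the verification you defer as ``the delicate step'': it cannot be completed from the stated assumptions. Your chain yields $\hat{\bm\beta}_n^T\bm\Sigma_n^{-1}\hat{\bm\beta}_n\geq (n d_{\min}/\hat\sigma_n^2)\|\hat{\bm\beta}_n\|^2$ and, via Cauchy--Schwarz and \ref{a_min}, $\|\bm\beta_n^0\|^2\geq q_n^2 p_n\sqrt{\log n}/(C_1 n)$; this last bound is sharp (equality when all nonzero $|\beta_{nj}^0|$ are equal with the constraint in \ref{a_min} tight), so the best your route gives is $\hat{\bm\beta}_n^T\bm\Sigma_n^{-1}\hat{\bm\beta}_n\gtrsim q_n^2\, p_n\sqrt{\log n}$. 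Since $c_n\asymp p_n\log n$, this exceeds $c_n$ only when $q_n\gg(\log n)^{1/4}$, and nothing in \ref{a_error}--\ref{a_min} forces that: $q_n$ may be bounded, and then (take $q_n=1$ with $|\beta_{n1}^0|$ at the minimal size permitted by \ref{a_min}, combined with Lemma \ref{lemma2-yan} giving $\|\hat{\bm\beta}_n-\bm\beta_n^0\|^2=O(p_n/n)$ and the eigenvalue bounds of \ref{a_eigenvalues} with $\gamma_n=o(n)$) one gets $\hat{\bm\beta}_n^T\bm\Sigma_n^{-1}\hat{\bm\beta}_n\asymp p_n\sqrt{\log n}=o(c_n)$, so $\bm{0}$ is strictly feasible with probability tending to one and the constraint is inactive --- precisely the situation in which $Q_n^{-1}=O(1)$ would fail. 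So the comparison you postpone is not bookkeeping; as sketched, it goes the wrong way. The paper sidesteps the issue entirely by treating boundary attainment as given (it is invoked again, also without proof, in the proof of Theorem \ref{theorem_selection consistency}); to make your more ambitious argument rigorous you would need either an additional signal-strength condition such as $n\|\bm\beta_n^0\|^2/(p_n\log n)\rightarrow\infty$ guaranteeing $\bm{0}\notin\mathcal{C}_\alpha$, or to take boundary attainment as part of the definition of $\tilde{\bm\beta}_n$, as the paper implicitly does.
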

								\begin{proof}
									Suppose $\frac{c_n}{  p_n \log n }\rightarrow c$,  since the solution occurs on the boundary of the credible set, we have $n(\tilde{{\bm\beta}}_n-\hat{{\bm\beta}}_n)^T \frac{\bm\Sigma_n^{-1}}{n}(\tilde{{\bm\beta}}_n-\hat{{\bm\beta}}_n) =c_n$. Multiplying both sides by $\frac{1}{ p_n\log n }$, on  the right hand side we have $\frac{c_n}{ p_n\log n} \rightarrow c$.  
									So we have $\frac{n }{p_n \log n}(\tilde{{\bm\beta}}_n-\hat{{\bm\beta}}_n)^T \frac{\bm\Sigma_n^{-1}}{n}(\tilde{{\bm\beta}}_n-\hat{{\bm\beta}}_n) )\rightarrow c$. 
									Then $\frac{n }{p_n\log n }(\tilde{{\bm\beta}}_n-\hat{{\bm\beta}}_n)^T \frac{\bm\Sigma_n^{-1}}{n}(\tilde{{\bm\beta}}_n-\hat{{\bm\beta}}_n) = O(1)$, and $\left[\frac{n}{p_n \log n }(\tilde{{\bm\beta}}_n-\hat{{\bm\beta}}_n)^T\frac{\bm\Sigma_n^{-1}}{n}(\tilde{{\bm\beta}}_n-\hat{{\bm\beta}}_n)\right]^{-1} = O(1)$, since $0<c<\infty$. 
								\end{proof}

								\begin{lemma} \label{lemma4}
									Under  assumptions  \ref{a_error}-\ref{a_min},     if $\frac{c_n}{  p_n \log n }\rightarrow c$,  $\sqrt{\frac{n}{p_n}}(\tilde{\beta}_{nj} - \hat{\beta}_{nj})\rightarrow\infty$ can be true only for $j\in \mathcal{A}_n^0$. 
								\end{lemma}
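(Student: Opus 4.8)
The plan is to exploit the variational characterization of $\tilde{\bm\beta}_n$ together with the feasibility of the true parameter. Recall from (\ref{eq_penalized_optimization}) that, for the value of $c_n$ under consideration, $\tilde{\bm\beta}_n$ is the minimizer of the weighted $\ell_1$ objective $L(\bm\beta)=\sum_{j=1}^{p_n}|\hat\beta_{nj}|^{-2}|\beta_j|$ over the credible region $\mathcal{C}_{c_n}=\{\bm\beta:(\bm\beta-\hat{\bm\beta}_n)^T\bm\Sigma_n^{-1}(\bm\beta-\hat{\bm\beta}_n)\le c_n\}$. Since $c_n/(p_n\log n)\to c>0$ forces $c_n/p_n\to\infty$, Lemma \ref{lemma1} applies (under the standing condition $\gamma_n=o(n)$) and gives $\bm\beta_n^0\in\mathcal{C}_{c_n}$ with probability tending to one. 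First I would use this feasibility together with the optimality of $\tilde{\bm\beta}_n$ to write $L(\tilde{\bm\beta}_n)\le L(\bm\beta_n^0)=\sum_{j\in\mathcal{A}_n^0}|\hat\beta_{nj}|^{-2}|\beta_{nj}^0|$, which concentrates the entire weighted $\ell_1$ budget on the active set.

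Next I would bound this budget and then propagate it to the inactive coordinates. By Lemma \ref{lemma2-yan}, $\max_j|\hat\beta_{nj}-\beta_{nj}^0|=O(\sqrt{p_n/n})$ uniformly, while Assumption \ref{a_min} forces $\min_{j\in\mathcal{A}_n^0}|\beta_{nj}^0|\gtrsim\sqrt{p_n/n}$; combining these shows $\hat\beta_{nj}$ is of the same order as $\beta_{nj}^0$ on $\mathcal{A}_n^0$, so $|\hat\beta_{nj}|^{-2}|\beta_{nj}^0|\lesssim|\beta_{nj}^0|^{-1}$ and hence $L(\tilde{\bm\beta}_n)\lesssim\sum_{j\in\mathcal{A}_n^0}|\beta_{nj}^0|^{-1}\le C_0\sqrt{n/p_n}$ again by Assumption \ref{a_min}. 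Now fix an inactive index $j\notin\mathcal{A}_n^0$. Then $\beta_{nj}^0=0$, so Lemma \ref{lemma2-yan} yields $|\hat\beta_{nj}|=O(\sqrt{p_n/n})$ and therefore the weight $|\hat\beta_{nj}|^{-2}\gtrsim n/p_n$ is large. Since $|\hat\beta_{nj}|^{-2}|\tilde\beta_{nj}|\le L(\tilde{\bm\beta}_n)=O(\sqrt{n/p_n})$, dividing by the weight gives $|\tilde\beta_{nj}|=O(\sqrt{p_n/n})$, i.e. $\sqrt{n/p_n}\,|\tilde\beta_{nj}|=O(1)$. Adding $\sqrt{n/p_n}\,|\hat\beta_{nj}|=O(1)$ (again Lemma \ref{lemma2-yan}) through the triangle inequality gives $\sqrt{n/p_n}\,|\tilde\beta_{nj}-\hat\beta_{nj}|=O(1)$ for every $j\notin\mathcal{A}_n^0$, so the scaled difference stays bounded off the active set and can diverge only on $\mathcal{A}_n^0$, which is the assertion.

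The step I expect to be the main obstacle is the bound $L(\bm\beta_n^0)=O(\sqrt{n/p_n})$, because the minimum signal permitted by Assumption \ref{a_min} and the estimation error supplied by Lemma \ref{lemma2-yan} are both of the exact order $\sqrt{p_n/n}$. Controlling $|\hat\beta_{nj}|^{-2}$ on the active set therefore requires showing that $\hat\beta_{nj}$ cannot underestimate $\beta_{nj}^0$ by more than a fixed constant factor, uniformly over $j\in\mathcal{A}_n^0$; I would establish this from the uniform rate implied by the aggregate bound $\tfrac{n}{p_n}\|\hat{\bm\beta}_n-\bm\beta_n^0\|^2=O(1)$ in Lemma \ref{lemma2-yan} combined with the beta-min content of Assumption \ref{a_min}, possibly calling on the second, $\sqrt{\log n}$-weighted inequality in \ref{a_min} to absorb the constants. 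The remaining steps are routine, relying only on the two-sided eigenvalue control of $\bm\Sigma_n^{-1}/n$ already invoked in Lemmas \ref{lemma1}--\ref{lemma3}.
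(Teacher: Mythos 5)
Your proof is correct in substance, but it takes a genuinely different and more direct route than the paper. The paper argues by contradiction: supposing $\sqrt{n/p_n}(\tilde\beta_{n1}-\hat\beta_{n1})\to\infty$ for an inactive coordinate, it shows $\sqrt{p_n/n}\,S(\tilde{\bm\beta}_n)\to\infty$ (your $L$ is the paper's $S$), then constructs a competitor $\tilde{\bm\beta}_n^\ast$ --- the constrained minimizer with the inactive coordinates zeroed --- and uses Lemma \ref{lemma3} to control $|\tilde\beta^\ast_{nj}-\hat\beta_{nj}|$ and both inequalities of Assumption \ref{a_min} to show $\sqrt{p_n/n}\,S(\tilde{\bm\beta}^\ast_n)=O_P(1)$, contradicting minimality. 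You instead compare $\tilde{\bm\beta}_n$ directly with the true parameter, which is feasible by Lemma \ref{lemma1}, getting $S(\tilde{\bm\beta}_n)\le S(\bm\beta^0_n)=O_P(\sqrt{n/p_n})$, and then read off a bound on every inactive coordinate from the size of its weight $|\hat\beta_{nj}|^{-2}\gtrsim n/p_n$. This avoids the contradiction scaffolding, dispenses with Lemma \ref{lemma3} and with the finiteness of $c$ (only $c_n/p_n\to\infty$ enters, through Lemma \ref{lemma1}), and delivers a conclusion slightly stronger than the lemma asserts, namely $\max_{j\notin\mathcal{A}^0_n}\sqrt{n/p_n}\,|\tilde\beta_{nj}-\hat\beta_{nj}|=O_P(1)$. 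The obstacle you flag --- that the beta-min from the first inequality of \ref{a_min} and the estimation error from Lemma \ref{lemma2-yan} are both of order $\sqrt{p_n/n}$, so comparability of $|\hat\beta_{nj}|$ and $|\beta^0_{nj}|$ does not follow from those two facts alone --- is real, and your proposed repair does close it: the second inequality in \ref{a_min} gives the per-coordinate bound $|\beta^0_{nj}|^{-2}\le C_1 n/(p_n\sqrt{\log n})$, i.e.\ $|\beta^0_{nj}|\ge C_1^{-1/2}(\log n)^{1/4}\sqrt{p_n/n}$ on $\mathcal{A}^0_n$, while Lemma \ref{lemma2-yan} gives $\max_j|\hat\beta_{nj}-\beta^0_{nj}|\le\|\hat{\bm\beta}_n-\bm\beta^0_n\|=O_P(\sqrt{p_n/n})$, so the relative error is $O_P((\log n)^{-1/4})$ and $|\hat\beta_{nj}|/|\beta^0_{nj}|\to 1$ uniformly over the active set, whence $S(\bm\beta^0_n)\le(1+o_P(1))\sum_{j\in\mathcal{A}^0_n}|\beta^0_{nj}|^{-1}\le(1+o_P(1))\,C_0\sqrt{n/p_n}$. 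It is worth noting that the paper's own proof faces exactly the same comparability issue and resolves it only implicitly, when it rewrites the weights $w_j$ and $w_j^\ast$ as $(\cdots+o(1))^{-1}$; the $(\log n)^{1/4}$ margin you extract from the second inequality of \ref{a_min} is precisely the rigorous justification that step needs, in either argument.
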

								\begin{proof}
									From Lemma \ref{lemma3}, we have $\sqrt{\frac{n}{p_n}}(\tilde{\beta}_{nj} - \hat{\beta}_{nj}) \rightarrow\infty$ for some $j$. We now prove that  it cannot be true for $j\in \mathcal{A}_n^{0c}$. 
									
									Without loss of generality, we assume the true parameters are $\beta_{n1}^0 = \cdots = \beta_{nk}^0= 0$ and $\beta_{n k+1}^0= \cdots = \beta_{n p_n}^0 \neq 0$.   Assume $S(\bm\beta_n) = \sum_{j=1}^{p_n} |\hat\beta_{nj}|^{-2}|\beta_{nj}|$, then the  solution is the minimizer of $S({\bm\beta}_n)$ among those points within the given credible set. 
									
									Suppose $\tilde{{\bm\beta}}_n$ is the minimizer of $S({\bm\beta}_n)$, and suppose that  $1 \in \mathcal{A}_n^{0c}$, and $\sqrt{\frac{n}{p_n}}(\tilde{\beta}_{n1} - \hat{\beta}_{n1}) \rightarrow\infty$. Since $1 \in \mathcal{A}_n^{0c}$ or $\beta_{n1}^0= 0 $, it follows that $\sqrt{\frac{n}{p_n}}(\hat{\beta}_{n1}-0) = O(1)$. Hence, it must be that $\sqrt{\frac{n}{p_n}}\tilde{\beta}_{n1}\rightarrow\infty$.  Also  by Lemma \ref{lemma2-yan}$,  {\frac{n}{p_n}}|\hat{\beta}_{n1}|^2 = O(1)$. 
									So,  $\sqrt{\frac{p_n}{n}}S(\tilde{{\bm\beta}}_n) \geq \frac{\sqrt{\frac{n}{p_n}}|\tilde{{ \beta}}_{n1}|}{{\frac{n}{p_n}}|\hat{\beta}_{n1}|^2} \rightarrow \infty$.
									
									Now let $\tilde{{\bm\beta}}_n^\ast = \{0,\cdots,0,\tilde{{ \beta}}^\ast_{n k+1},\cdots,\tilde{{ \beta}}^\ast_{np_n} \}$ be the minimizer of $S({\bm\beta}_n)$, within the credible set,  with setting the first $k$ components to zero.  As shown in  the proof of Lemma 4  in  \cite{bondell2012consistent}, first,  for large $n$, such  form of $\tilde{\bm{\beta}}_n^\ast$ does exist within the credible region by Lemma \ref{lemma1}.   Next, we would show that 
									$\sqrt{\frac{p_n}{n}}S(\tilde{{\bm\beta}}_n^\ast)< \sqrt{\frac{p_n}{n}}S(\tilde{{\bm\beta}}_n)$ for large $n$, and hence achieve a contradiction.  
									
									By Lemma \ref{lemma2-yan} and Lemma \ref{lemma3},  for any $j \in \mathcal{A}_n^0$,   it follows that $\sqrt{\frac{n}{p_n}} (\hat{\beta}_{nj} - \beta_{nj}^0) = O(1)$ and $\sqrt{\frac{n}{p_n \log n}} (\hat{\beta}_{nj} - \tilde\beta_{nj}^\ast) = O(1)$.  
									Let $T_n=\sum_{j=k+1}^{p_n}| {\beta}_{nj}^0|^{-1}$. Then   we have 
									{\small 	\begin{eqnarray}
											&& |	S(\tilde{\bm\beta}_n^\ast)  -  	T_n|  = \left| \sum\limits_{j=k+1}^{p_n}    \frac{|\tilde{\beta}^\ast_{n j}|}{|\hat{\beta}_{n j}|^2} -    \sum\limits_{j=k+1}^{p_n}  | {\beta}_{nj}^0|^{-1}  \right|   \nonumber  \\
											&\leq &   \sum\limits_{j=k+1}^{p_n}  \left|  \frac{|\tilde{\beta}^\ast_{n j}|}{|\hat{\beta}_{n j}|^2} - | {\beta}_{nj}^0|^{-1}  \right| 
											= \sum\limits_{j=k+1}^{p_n}  \left|  \frac{|\tilde{\beta}^\ast_{n j}|}{|\hat{\beta}_{n j}|^2}   - | \hat{\beta}_{nj}|^{-1}   + | \hat{\beta}_{nj}|^{-1} - | {\beta}_{nj}^0|^{-1}  \right|  \nonumber   \\
											& =  &   \sum\limits_{j=k+1}^{p_n}   \left| \frac{|\tilde{\beta}^\ast_{n j}|  - | \hat{\beta}_{nj}| }{|\hat{\beta}_{n j}|^2}    + 
											\frac{   | {\beta}_{nj}^0| - | \hat{\beta}_{nj}|     }{ | \hat{\beta}_{nj} {\beta}_{nj}^0  |} 
											\right| 
											\leq   \sum\limits_{j=k+1}^{p_n}   \left(     \frac{ |\tilde{\beta}^\ast_{n j} -  \hat{\beta}_{nj}|}{|\hat{\beta}_{n j}|^2}    + 
											\frac{ |  \hat{\beta}_{nj}  - {\beta}_{nj}^0|   }{ | \hat{\beta}_{nj} {\beta}_{nj}^0  |} 
											\right)  \nonumber \\
											&=&   \sum\limits_{j=k+1}^{p_n}   \left(     \frac{  \sqrt{\frac{n}{p_n \log n}}|\tilde{\beta}^\ast_{n j} -  \hat{\beta}_{nj}|}{ \sqrt{\frac{n}{p_n \log n}}|\hat{\beta}_{n j}|^2}    + 
											\frac{  \sqrt{\frac{n}{p_n  }}|  \hat{\beta}_{nj}  - {\beta}_{nj}^0|   }{ \sqrt{\frac{n}{p_n }} | \hat{\beta}_{nj} {\beta}_{nj}^0  |} 
											\right) .  \nonumber 
										\end{eqnarray}
									}
									So 
									{\small \begin{eqnarray*}
											&& \sqrt{\frac{p_n}{n}} S(\tilde{\bm\beta}_n^\ast)  \leq  \sqrt{\frac{p_n}{n}}  T_n + 
											\sum\limits_{j=k+1}^{p_n}     \left(     \frac{  \sqrt{\frac{n}{p_n \log n}}|\tilde{\beta}^\ast_{n j} -  \hat{\beta}_{nj}|}{  {\frac{n}{p_n \sqrt{\log n }}}  |\hat{\beta}_{n j}|^2 }    + 
											\frac{  \sqrt{\frac{n}{p_n  }}|  \hat{\beta}_{nj}  - {\beta}_{nj}^0|   }{  {\frac{n}{p_n }}   | \hat{\beta}_{nj} {\beta}_{nj}^0  | } 
											\right)   \\
											&=&   \sqrt{\frac{p_n}{n}}  T_n +  \sum\limits_{j=k+1}^{p_n}    w_j^\ast  Z_j^\ast+   \sum\limits_{j=k+1}^{p_n}  w_jZ_j , 
										\end{eqnarray*}}
										where  $ Z_j^\ast =  \sqrt{\frac{n}{p_n \log n}}|\tilde{\beta}^\ast_{n j} -  \hat{\beta}_{nj}|$, $w_j^\ast =  ( {\frac{n}{p_n \sqrt{ \log n}}}  |\hat{\beta}_{n j}|^{2})^{-1} = 
										( {\frac{n}{p_n \sqrt{ \log n}}}   |{\beta}_{nj}^0|^{2} +o(1) )^{-1}$, $Z_j = \sqrt{\frac{n}{p_n  }}|  \hat{\beta}_{nj}  - {\beta}_{nj}^0|    $, and  $w_j =  ({\frac{n}{p_n }}   | \hat{\beta}_{nj} {\beta}_{nj}^0  |)^{-1} =  ({\frac{n}{p_n }}  |{\beta}_{nj}^0|^{2} + o(1))^{-1}$.  
										
										%
										
										Since for any $j \in \mathcal{A}_n^0$,   it follows that $Z_j^\ast= O(1)$ and $Z_j= O(1)$,  then 
										E$(Z_j^\ast) = O(1)$, E$(Z_j) = O(1)$,   Var$(Z_j^\ast)  = O(1)  $ and Var$(Z_j)  = O(1)  $. 
										Then  by assumption \ref{a_min}, 
										\[
										\text{E}( \sum\limits_{j=k+1}^{p_n}w_j^\ast Z_j^\ast  ) = O( \sum\limits_{j=k+1}^{p_n}   w_j^\ast )  =  O(1) , 
										\]
										and  since $ \sum\limits_{j=k+1}^{p_n}   w_j^{\ast2} \leq ( \sum\limits_{j=k+1}^{p_n}   w_j^\ast )^2  $,  we further have 
										\[
										\text{Var}( \sum\limits_{j=k+1}^{p_n} w_j^\ast Z_j^\ast ) = O( \sum\limits_{j=k+1}^{p_n}  w_j^{\ast 2} )  =   O(( \sum\limits_{j=k+1}^{p_n}   w_j^\ast )^2 ) = O(1) . 
										\]
										Similarly,  $\text{E}( \sum\limits_{j=k+1}^{p_n} w_j Z_j ) = O( \sum\limits_{j=k+1}^{p_n}   w_j )   =  O(1) $, and 
										$\text{Var}( \sum\limits_{j=k+1}^{p_n} w_j  Z_j   ) = O( \sum\limits_{j=k+1}^{p_n}   w_j^2  ) =  O(1)$. 
										Hence $\sum\limits_{j=k+1}^{p_n}  w_j  Z_j   $ and $\sum\limits_{j=k+1}^{p_n} w_j^\ast Z_j^\ast   $ are bounded  in probability. 
										
										Since  from Assumption \ref{a_min},  $\sqrt{\frac{p_n}{n}} T_n  = O(1)$,  then we have 
										$ \sqrt{\frac{p_n}{n}} 	S(\tilde{\bm\beta}_n^\ast)$ is bounded in probability.  But   $\sqrt{\frac{p_n}{n}}S(\tilde{{\bm\beta}}_n) \rightarrow\infty$ as shown above. 
										Hence  there exists large enough $n$ so that $ \sqrt{\frac{p_n}{n}} 	S(\tilde{\bm\beta}_n^\ast)< \sqrt{\frac{p_n}{n}} 	S(\tilde{\bm\beta}_n)$, and thus $\tilde{\bm\beta}_n$ with $\sqrt{\frac{n}{p_n}}(\tilde{\beta}_{n1} - \hat{\beta}_{n1})\rightarrow\infty$ cannot be the minimizer. 
										
										Therefore, $\sqrt{\frac{n}{p_n}}(\tilde{{ \beta}}_{nj} - \hat{{ \beta}}_{nj})\rightarrow\infty$ can be true only for $j\in \mathcal{A}_n^0$.
									\end{proof}

									\textbf{Proof of Theorem \ref{theorem_selection consistency}}
									\begin{proof}
										%
										%
										By assumption,   $p_n=o(n/\log n)$ and $c_n/(p_n\log n)\rightarrow c \in (0, \infty)$. 
										This implies that $c_n / p_n \rightarrow \infty$ and $c_n / n  \rightarrow 0$. 
										From Lemma \ref{lemma1}, it follows that if $c_n / p_n \rightarrow \infty$ then the true $\bm\beta^0_n$ is contained in the credible region with probability tending to one. 
										
										Furthermore,  if $c_n /  n  \rightarrow 0$, it must follow that the credible region itself is shrinking around $\hat{\bm\beta}_n$. 
										Since the solution occurs on the boundary of the credible set,    $(\tilde{{\bm\beta}}_n-\hat{{\bm\beta}}_n)^T \bm\Sigma_n^{-1}(\tilde{{\bm\beta}}_n-\hat{{\bm\beta}}_n)=(\tilde{{\bm\beta}}_n-\hat{{\bm\beta}}_n)^T \frac{(\bm{X}_n^T\bm{X}_n+\gamma_n \bm{I}_n)}{\hat\sigma_n^2}(\tilde{{\bm\beta}}_n-\hat{{\bm\beta}}_n) =c_n$, or equally, $(\tilde{{\bm\beta}}_n-\hat{{\bm\beta}}_n)^T (\bm{X}_n^T\bm{X}_n+\gamma_n \bm{I}_n)(\tilde{{\bm\beta}}_n-\hat{{\bm\beta}}_n)= c_n \hat\sigma_n^2$.  Multiplying both sides by $\frac{1}{n }$, on the right hand side, we have $\frac{c_n \hat\sigma_n^2}{n }\rightarrow  0 $. For the left side, we have $(\tilde{{\bm\beta}}_n-\hat{{\bm\beta}}_n)^T \frac{(\bm{X}_n^T\bm{X}_n+\gamma_n \bm{I}_n)}{n }(\tilde{{\bm\beta}}_n-\hat{{\bm\beta}}_n) $.	 So the left side also goes to zero. 
										Together with  Assumption \ref{a_eigenvalues} and $\gamma_n = o(n)$, then  the credible region is shrinking around 	$\hat{\bm\beta}_n$.  
										This implies that with probability tending to one, for all $j \in \mathcal{A}^0_n$, the credible region will be bounded away from zero in that direction. Hence we have that $P (\mathcal{A}^0_n  \cap \mathcal{A}_n^c) \rightarrow 0$.

										Next we will show that $P (\mathcal{A}^{0c}_n  \cap \mathcal{A}_n) \rightarrow0$, which will complete the proof.
										
										%
										Without loss of generality, assume $\beta_{n1}^0=0, \beta_{np_n}^0 \neq 0$ and by Lemma \ref{lemma4}, let $\sqrt{\frac{n}{p_n}}(\tilde{\beta}_{np_n} - \hat{\beta}_{np_n})\rightarrow\infty$. Denote $\sigma_{ij}$ as  the $ij^{th}$ element of $\bm\Sigma^{-1}_n$. 
										
										Assume  $\tilde{\beta}_{n1}\neq 0$, and we'll get a contradiction. The proof is following  the proof of Theorem 1 in \cite{bondell2012consistent}.  As $(\tilde{{\bm\beta}}_n-\hat{{\bm\beta}}_n)^T \bm\Sigma_n^{-1}(\tilde{{\bm\beta}}_n-\hat{{\bm\beta}}_n) =c_n$, then  $\tilde{\beta}_{np_n}$ can be   represented as a function of the remaining $p_n-1$ coefficients. Since $\tilde{\beta}_{n1}\neq 0$, the minimizer of $\sum_{j=1}^{p_n} \frac{1}{\hat{\beta}_{nj}^2 }|\beta_{nj}|$ with respect to $\beta_{n1}$  satisfies
										\begin{eqnarray}\label{eq11}
											\frac{1}{|\hat{\beta}_{n1}^2|} \text{sign}(\tilde{\beta}_{n1}) + \frac{1}{|\hat{\beta}_{np_n}^2|} \text{sign}(\tilde{\beta}_{np_n}) \left.\frac{\partial \beta_{np_n}}{\partial \beta_{n1}}\right\vert_{\tilde{\bm\beta}_n}=0.
										\end{eqnarray}
										Consider the first term on the left hand side. Since $\beta_{n1}^0=0$, we have $\hat{\beta}_{n1}\rightarrow0$, then $\frac{1}{|\hat{\beta}_{n1}^2|}\rightarrow \infty$.  Hence $\frac{1}{|\hat{\beta}_{n1}^2|} \text{sign}(\tilde{\beta}_{n1}) \rightarrow\infty$ if $\tilde{\beta}_{n1}\neq 0 $. For the second term, since $\beta_{np_n}^0\neq 0 $, $\frac{1}{|\hat\beta_{np_n}|^2} = O(1)$. 
										
										Differentiating $(\tilde{{\bm\beta}}_n-\hat{{\bm\beta}}_n)^T \bm\Sigma^{-1}_n(\tilde{{\bm\beta}}_n-\hat{{\bm\beta}}_n) =c_n$ with respect to $\beta_{n1}$ yields 
										\begin{eqnarray}\label{eq12}
											\left.\frac{\partial\beta_{np_n}}{\partial \beta_{n1}}\right\vert_{\tilde{{\bm\beta}}_n} =- \frac{\sum_{j=1}^{p_n}\sigma_{1j}(\tilde{\beta}_{nj} - \hat{\beta}_{nj}) }{\sum_{j=1}^{p_n}\sigma_{p_nj}(\tilde{\beta}_{nj} - \hat{\beta}_{nj})} = -
											\frac{\sum_{j=1}^{p_n}\sigma_{1j} \sqrt{\frac{n}{p_n\log n}}(\tilde{\beta}_{nj} - \hat{\beta}_{nj}) }{\sum_{j=1}^{p_n}\sigma_{p_nj} \sqrt{\frac{n}{p_n\log n}}(\tilde{\beta}_{nj} - \hat{\beta}_{nj})}, 
										\end{eqnarray}
										where   $\frac{c_n}{p_n\log n}\rightarrow c$. Note that by Lemma \ref{lemma3},  both numerator and denominator in (\ref{eq12}) are $O(1)$. Also by Lemma \ref{lemma4}, the denominator cannot be $0$, due to the presence of $(\tilde{\beta}_{np_n} - \hat{\beta}_{np_n})$.  Hence $\left.\frac{\partial\beta_{np_n}}{\partial \beta_{n1}}\right\vert_{\tilde{{\bm\beta}}_n}  = O(1)$. Then, $\frac{1}{|\hat{\beta}_{np_n}^2|} \text{sign}(\tilde{\beta}_{np_n}) \left.\frac{\partial \beta_{np_n}}{\partial \beta_{n1}}\right\vert_{\tilde{{\bm\beta}}_n}= O(1)$. Hence, the left   side of (\ref{eq11}) diverges, which yields a contradiction. 
									\end{proof}
									
									\textbf{Proof of Theorem \ref{theorem_gl_selection }}
									\begin{proof}
										For any  global-local shrinkage prior represented as (\ref{equation-globalLocal}), the prior precision  $\delta_j$ is   $  1/(w\xi_j)$. 
										The goal is to show that $\delta_j = o(n)$ for each  $j=1,\cdots, {p_n}$ with posterior  probability 1, 
										i.e.,  for  any $\epsilon>0$,    $P( \frac{\delta_{j} }{n}  \geq \epsilon | \bm{Y}_n) \rightarrow 0$  as $n\rightarrow\infty$. 
										
										When  the conditions of posterior consistency are satisfied, the  GL prior produces consistent posteriors, i.e, 
										for any $\epsilon>0$,  
										$P({\bm\beta}_n:||{\bm\beta}_n-{\bm\beta}_n^0||>\epsilon |\bm{Y}_n)\rightarrow 0$   as $ p_n, n\rightarrow \infty$.    Then the posterior mean, $\hat{{\bm\beta}}_n^{\text{GL}}$, satisfies  $P( ||\hat{{\bm\beta}}_n^{\text{GL}}-{\bm\beta}_n^0||>\epsilon  )\rightarrow 0$. 
										
										Also, since the ordinary least square estimator,  
										$\hat{{\bm\beta}}_n^{\text{OLS}}  = (\bm{X}_n^T\bm{X}_n)^{-1}  \bm{X}_n^T\bm{Y}_n $, is consistent,  it follows that $P( ||\hat{{\bm\beta}}_n^{\text{OLS}} - \hat{{\bm\beta}}_n^{\text{GL}}||>\epsilon  )\rightarrow 0$, or $ \hat{{\bm\beta}}_n^{\text{OLS}} -  \hat{{\bm\beta}}_n^{\text{GL}}  \rightarrow \bm{0}$. 
										
										Let $\bm\delta_n = (\delta_1,\cdots, \delta_{p_n})$, $\Delta_n = \text{diag}\{\delta_1, \cdots, \delta_{p_n}\}$ and $\pi(\bm\delta_n|\bm{Y}_n)$ be the posterior density function of $\bm\delta_n$, then  
										\[
										\hat{{\bm\beta}}_n^{\text{GL}} = \text{E}_{\bm\delta_n|\bm{Y}_n} [ \text{E} ( {\bm\beta}| \bm\delta_n, \bm{Y}_n) ] = \int_{\bm\delta_n} (\bm{X}_n^T\bm{X}_n +\Delta_n)^{-1} \bm{X}_n^T\bm{Y}_n \pi(\bm\delta_n | \bm{Y}_n) \, d\bm\delta_n .  
										\]
										Hence 
										\begin{eqnarray*}
											&&\hat{{\bm\beta}}_n^{\text{OLS}}   -  \hat{{\bm\beta}}_n^{\text{GL}}   
											= \int_{\bm\delta_n}  \left( (\frac{\bm{X}_n^T\bm{X}_n}{n})^{-1} - (\frac{\bm{X}_n^T\bm{X}_n}{n} + \frac{\Delta_n}{n})^{-1}
											\right)  \frac{\bm{X}_n^T\bm{Y}_n}{n}  \pi(\bm\delta_n | \bm{Y}_n) \, d\bm\delta_n \rightarrow \bm{0} . 
										\end{eqnarray*}
										Since as $n\rightarrow\infty$,  $   \bm{X}_n^T\bm{Y}_n /n \sim N( \frac{\bm{X}_n^T\bm{X}_n}{n} {\bm\beta_n^0}, \sigma^2\frac{\bm{X}_n^T\bm{X}_n}{n}) $ is a random variable independent of $\bm\delta_n$,  so  $P( \bm{X}_n^T\bm{Y}_n/n=0)=0$. Then  
										\begin{equation}\label{eq_temp1}
											\int_{\bm\delta_n}  \left( 
											(\frac{\bm{X}_n^T\bm{X}_n}{n})^{-1} - (\frac{\bm{X}_n^T\bm{X}_n}{n} + \frac{\Delta_n}{n})^{-1}\right)   \pi(\bm\delta_n | \bm{Y}_n) \, d\bm\delta_n \rightarrow \bm{0}.
										\end{equation}
										%
										Assume $\lambda_1,\cdots,\lambda_{p_n}$ are the eigenvalues of   
										$\frac{\bm{X}_n^T\bm{X}_n}{n} + \frac{\Delta_n}{n}$. 
										Also we have  $\frac{\bm{X}_n^T\bm{X}_n}{n} = \bm\Gamma_n\bm D_n\bm\Gamma_n^{T}$ where $\bm D_n = \text{diag}\{d_1,\cdots ,d_{p_n}\}$. 
										By Weyl's Inequalities, 
										$d_j + \frac{\delta_{\min}}{n}\leq \lambda_j\leq d_j+ \frac{\delta_{\max}}{n}$. 
										Together with \ref{a_eigenvalues},   then  $(\frac{\bm{X}_n^T\bm{X}_n}{n} + \frac{\Delta_n}{n})^{-1}$ is positive definite with probability 1. 
										
										(\ref{eq_temp1}) can also  be equivalently represented as 
										\[
										\int_{\bm\delta_n}  (\frac{\bm{X}_n^T\bm{X}_n}{n} + \frac{\Delta_n}{n})^{-1}  \left(  (\frac{\bm{X}_n^T\bm{X}_n}{n} + \frac{\Delta_n}{n})
										(\frac{\bm{X}_n^T\bm{X}_n}{n})^{-1} - {\bm I}_n \right)   \pi(\bm\delta_n | \bm{Y}_n) \, d\bm\delta_n \rightarrow \bm{0},
										\]
										or
										\[
										\int_{\bm\delta_n}  (\frac{\bm{X}_n^T\bm{X}_n}{n} + \frac{\Delta_n}{n})^{-1}  \left(   \frac{\Delta_n}{n} 
										\right)   (\frac{\bm{X}_n^T\bm{X}_n}{n})^{-1} \pi(\bm\delta_n | \bm{Y}_n) \, d\bm\delta_n \rightarrow \bm{0}.
										\]
										Since $ (\frac{\bm{X}_n^T\bm{X}_n}{n})^{-1}$ is positive definite for large enough $n$  by \ref{a_eigenvalues},  
										it follows that 
										\begin{equation}\label{eq:temp}
											\int_{\bm\delta_n}  (\frac{\bm{X}_n^T\bm{X}_n}{n} + \frac{\Delta_n}{n})^{-1}  \left(   \frac{\Delta_n}{n} 
											\right)    \pi(\bm\delta_n | \bm{Y}_n) \, d\bm\delta_n \rightarrow \bm{0}. 
										\end{equation}
										
										Also, for   $j=1,\cdots, p_n$, there exists such $M_n$ that 
										\begin{equation}\label{eq:00}
											\lim\limits_{n\rightarrow\infty} P(\frac{\delta_j}{n}>M_n) = 0. 
										\end{equation} 
										Now (\ref{eq:temp})	  	  and 	  (\ref{eq:00}) imply that 
										\[
										\int_{\bm\delta_n} \left(   \frac{\Delta_n}{n} \right)(\frac{\bm{X}_n^T\bm{X}_n}{n} + \frac{\Delta_n}{n})^{-1}  
										\left(   \frac{\Delta_n}{n} 
										\right)    \pi(\bm\delta_n | \bm{Y}_n) \, d\bm\delta_n \rightarrow \bm{0}. 
										\]
										Since $(\frac{\bm{X}_n^T\bm{X}_n}{n} + \frac{\Delta_n}{n})^{-1}  $ is positive definite, it must follow that 
										\[
										\lim\limits_{n\rightarrow\infty}P(\frac{\delta_1}{n}\geq\epsilon |{\bm Y}_n) =   0. 
										\]
									\end{proof}

									\textbf{Proof of  Theorem  \ref{theorem_normal tuning}} 
									\begin{proof}
										According to the assumption that  
										$\frac{\bm{X}^T\bm{X}}{n} =\bm\Gamma  \bm D\bm\Gamma^{T}$ where $\bm D = \text{diag}\{d_1,\cdots ,d_{p}\}$ with $d_1,\cdots, d_{p}$ denoting the eigenvalues. 
										Since $\bm\beta\sim N(0,\sigma^2/\gamma \bm I_{p})$, then  $\sqrt{\gamma/\sigma^2}\bm\Gamma^T\bm\beta\sim N(\bm 0, \bm I_{p})$.
										
										Also, 
										\begin{eqnarray*}
											R^2 = 1 - \frac{\sigma^2}{{\bm\beta}^T\frac{\bm{X}^T\bm{X}}{n}{\bm\beta} +\sigma^2} = 
											\frac{{\bm\beta}^T\frac{\bm{X}^T\bm{X}}{n}{\bm\beta}}{ {\bm\beta}^T\frac{\bm{X}^T\bm{X}}{n}{\bm\beta} +\sigma^2}
											=\frac{ \frac{\gamma}{\sigma^2}({\bm\beta}^T \bm\Gamma \bm D\bm\Gamma^T{\bm\beta})}{ \frac{\gamma}{\sigma^2}({\bm\beta}^T \bm\Gamma \bm D\bm\Gamma^T{\bm\beta}) +\gamma}
											= \frac{W}{W +  \gamma }, 
										\end{eqnarray*}
										where $W =\frac{\gamma}{\sigma^2}({\bm\beta}^T \bm\Gamma \bm D\bm\Gamma^T{\bm\beta}) = d_1Z_1^2 +\cdots + d_pZ_p^2$, where $Z_1,\cdots, Z_p$ are i.i.d. from $N(0,1)$, or $Z_1^2,\cdots, Z_p^2$ i.i.d. from 
										$\chi^2_1$. 
										Then 
										$W$  follows  a distribution with  density denoted as  $f_{W}(\cdot)$,  with mean 
										$\sum_{j=1}^p d_j$,  and variance $2\sum_{j=1}^p d_j^2$. 
										
										On the other hand, if  $R^2$ follows a Beta$(a,b)$ distribution, then  the density function of $W = \frac{\gamma R^2}{1-R^2}$   is  as follows:
										\[
										f_B(w) =   \frac{ \Gamma(a+b)}{ \Gamma(a)\Gamma(b)}\frac{w^{a-1}\gamma^b}{(\gamma+w)^{a+b}}, \ (w\geq 0 ) .
										\] 
										Thus, to get the solution of $\gamma$  to make the distribution of $R^2$ closest to the Beta distribution, or $f_W(\cdot)$ closest to $f_B(\cdot)$,  one needs to minimize the Kullback-Liebler directed divergence between them, which is given as below: 
										\begin{eqnarray*}
											&& \text{KL}(f_{W} | f_{  B})   =   \int_0^\infty f_{W}(x) \log\frac{f_{W}(x)}{f_B(x)} \, dx \\
											&=& \int_0^\infty f_{W}(x)  \log f_{W}(x) \, dx - \int_0^\infty f_{W}(x)  \log f_B(x) \, dx \\
											&=&  \int_0^\infty f_{W}(x)  \log f_{W}(x) \, dx  -  \int_0^\infty f_{W}(x) \log\left(  \frac{\Gamma(a+b)}{\Gamma(a)\Gamma(b)}\frac{x^{a-1}\gamma^b}{(\gamma+x)^{a+b}} \right) \, dx  \\
											&=& -b \log \gamma - (a-1)\text{E}[\log W  ]   + (a+b)\text{E}[\log(W+\gamma)]  +  C  \\
											&\approx& -b\log \gamma   + (a+b)\left( \log(\text{E}[W]+\gamma)   -  \frac{\text{Var}[W]}{2 (\text{E}[W]+\gamma)^2} \right) + C^\ast \\ 
											&=& - b\log \gamma  + (a+b)\left( \log(\sum_{j=1}^p d_j  +\gamma)   -  \frac{ 2\sum_{j=1}^p d_j^2}{2 (\sum_{j=1}^p d_j+\gamma)^2} \right) + C^\ast   , 
										\end{eqnarray*}
										where $C $ and $C^\ast$ are some  constant value with no relation to $\gamma$, and E($\cdot$) and Var($\cdot$) denote the expectation and variance  of the random variable $W$ with density $f_W(\cdot)$.  
										The derivation of the above formula replies on  the following facts: 
										(i) $\log x = \log x_0 + \frac{1}{x_0}(x-x_0) - \frac{(x-x_0)^2}{2x_0^2} + O( (x-x_0)^2)$;  
										(ii) When $p\rightarrow\infty$, a third derivative is small, so 
										a second order Taylor expansion around $x_0 = \text{E}[X]$  can be used to approximate $\text{E}[\log X]$: 
										\[ \text{E}[\log X] \approx \text{E}[\log x_0 ]  + \text{E}[\frac{1}{x_0}(X-x_0)]  -  \text{E}[\frac{(X-x_0)^2}{2x_0^2}]
										= \log(\text{E}[X])   -  \frac{\text{Var}[X]}{2 (\text{E}[X] )^2};  \]  
										(iii) Similarly, a second order Taylor expansion around $x_0' = x_0 + c = \text{E}[X] +c $  is used to approximate $\text{E}[\log(X+c)]$:
										\[ \text{E}[\log (X+c)] \approx \text{E}[\log (x_0+c)  ]  + \text{E}[\frac{X-x_0}{x_0+c}]  -  \text{E}[\frac{(X-x_0)^2}{2(x_0+c)^2}]
										= \log(\text{E}[X]+c)   -  \frac{\text{Var}[X]}{2 (\text{E}[X]+c )^2}. \]

										Then taking  the derivative of  $   \text{KL}(f_W | f_B) $, i.e., 
										$  \text{KL}'(f_W| f_B) = -\frac{b}{\gamma} + \frac{a+b}{\sum_{j=1}^p d_j+\gamma } + \frac{2(a+b)\sum_{j=1}^p d_j^2 }{(\sum_{j=1}^p d_j+\gamma)^3}$,  and letting it  equal to 0, we have 
										\begin{equation}\label{eq_taucube}
											\gamma^3 + \frac{2a-b}{a}(\sum_{j=1}^p d_j)\gamma^2 + \left( \frac{2(a+b)}{a} \sum_{j=1}^p d_j^2 + \frac{a-2b}{a}(\sum_{j=1}^p d_j)^2 \right) \gamma  - \frac{b}{a}( \sum_{j=1}^p d_j)^3 = 0 . 
										\end{equation}
										According to the conclusion in \cite{osler36easy},  if 
										$
										P = \frac{2a-b}{a}\sum_{j=1}^p d_j$,  
										$Q = \frac{2(a+b)}{a} \sum_{j=1}^p d_j^2 + \frac{a-2b}{a}(\sum_{j=1}^p d_j)^2$,   
										$R = -\frac{b}{a}( \sum_{j=1}^p d_j)^3$,  
										%
										$C = P^2/9 - Q/3, \ 
										A = PQ/6 - P^3/27 - R/2, \ 
										B = A^2 - C^3,  
										$
										and  $B\geq 0 $, then $\gamma = (A+\sqrt{B})^{1/3} + (A- \sqrt{B})^{1/3}  - P/3$ is the unique  real solution to (\ref{eq_taucube}). 
									\end{proof}

\end{document}